\documentclass[12pt, reqno]{amsart}

\usepackage{amsmath,amssymb}
\usepackage{graphicx}
\usepackage{mathptmx}      
%
\usepackage{latexsym}
\usepackage{xcolor}
\usepackage{amsthm}
\usepackage{comment} 
\usepackage{appendix}

\usepackage{tikz}
\usetikzlibrary{shapes,arrows}
\usetikzlibrary{intersections}
\usepackage{pgfplots}
\pgfplotsset{compat = newest}  

\newcommand{\He}{\mathcal{H}}
\newcommand{\Se}{\mathcal{X}}
\newcommand{\se}{{\bf X}}
\newcommand{\Te}{\mathcal{T}}
\newcommand{\It}{I_{t_0}}
\newcommand{\deltahatone}{\hat{\delta}^1}
\newcommand{\deltahattwo}{\hat{\delta}^2}
\newcommand{\Xe}{\mathcal{X}}

\newtheorem{proposition}{Proposition}[section]
\newtheorem{theorem}[proposition]{Theorem}
\newtheorem{definition}[proposition]{Definition}
\newtheorem{corollary}[proposition]{Corollary}

\newtheorem{remark}[proposition]{Remark}

\setlength{\textwidth}{5.9in}       
\setlength{\textheight}{8.6 in}     
\setlength{\topmargin}{-0.1in}      
\setlength{\evensidemargin}{0.3in}    
\setlength{\oddsidemargin}{0.3in}    

\begin{document}
\markboth{D. Crisci, K. Gajewsky and S.E. Ferrando}{Agent-Based Models for Two Stocks with Superhedging}


\title{Agent-Based Models for Two Stocks with Superhedging}



\author{D. Crisci}
\address[D. Crisci]{Department of Mathematics\\
Toronto Metropolitan University\\
350 Victoria Street}
\email{dario.crisci@ryerson.ca}

\author{S.E.Ferrando}
\address[S.E.Ferrando]{Department of Mathematics\\
Toronto Metropolitan University\\
350 Victoria Street}
\email{ferrando@torontomu.ca}

\author{K.Gajewski}
\address[K. Gajewski]{Department of Mathematics\\
Toronto Metropolitan University\\
350 Victoria Street}
\email{konrad.gajewski@ryerson.ca}



\begin{abstract}
An agent-based  modelling methodology for the joint price evolution of two stocks is put forward. 
The method models future multidimensional price trajectories reflecting how a class of agents rebalance their portfolios in an operational way by reacting to how stocks' charts unfold.
Prices are expressed in units of a third stock that acts as numeraire.
The methodology is robust, in particular, it does not depend on any prior probability or analytical assumptions 
and it is based on constructing scenarios/trajectories. A main ingredient is a superhedging interpretation that provides relative superhedging
prices between the two modelled stocks. The operational nature of the methodology gives
objective conditions for the validity of the model and so implies realistic risk-rewards
profiles for the agent's operations. Superhedging computations are performed with
a dynamic programming algorithm deployed on a graph data structure. 
Null subsets of the trajectory space are directly related to arbitrage opportunities (i.e. there is no need for probabilistic considerations) that may emerge during the trajectory set construction. It follows that the superhedging algorithm handles null sets in a rigorous and intuitive way. Superhedging and underhedging bounds are kept relevant to the investor by means of a worst case pruning method and, as an alternative, a theory supported pruning that relies on a new notion of small arbitrage.
\end{abstract}

\keywords{Trajectorial Asset Models; Superhedging; Agent-Based Operational Models.
}

\maketitle
\section{Introduction}

A class of non-probabilistic models is proposed for the joint evolution of  prices for two assets   in units of a third asset that acts 
as a numeraire. The models construct future price scenarios (also referred to as trajectories).  One asset is singled out, arbitrarily, as a target asset to be superhedged by trading on the two remaining assets (one of them the numeraire asset). In other words, the constructed portfolio value superhedges the target asset providing a model perspective for relative pricing among the assets. The models are constructed according to a general methodology
centered on an agent-based operational framework. More precisely, a class of agents is implicitly defined by how they react to specific observed price changes through portfolio rebalances. Such an operational point of view provides an agent-based perspective to the obtained  price bounds of the target asset in terms of the other two. Relative price bounds are given by a superhedging portfolio and a dual portfolio for underhedging; the bounds can be used to asses a profit and loss profile relative to the agent's operations.
The trajectorial joint price model construction is effected by a general method 
based on observable quantities and
is defined through a set of combinatorial possibilities restricted through worst case historical constraints. Arbitrage opportunities may arise during the process of constructing scenarios; they are later  treated as null sets, a notion defined independently of probabilistic considerations and financially motivated.

The paper illustrates how a recently established non-probabilistic framework to model asset prices (see \cite{bender1, bender3, degano},\\ \cite{degano2, ferrando2, ferrando}) can be used
in a practical trading setting. To explain our contributions we first introduce, informally, the main ingredients of the theory (which are introduced precisely elsewhere in the paper) in terms of the constructions and notation of the present paper. The basic object of the theory is a {\it trajectory set} $\mathcal{X}$, ${\bf X} \in \mathcal{X}$ are sequences ${ \bf X} = \{ {\bf X}_i= (X_i, Z_i)\}_{i \geq 0}$
and $X_i= (X_i^1, X_i^2, \ldots, X_i^d)$ represent future modelling prices of assets $X^k$. The {\it additional coordinates} $Z_i$ (e.g. time of rebalance, quadratic variation, etc) are only used during the trajectory set  construction and are not directly involved in pricing considerations. The set $\bf {X} $ is decomposed in {\it conditional trajectory sets} $\mathcal{X}_{({\bf X}, i)}$, these are subsets of the unconditional trajectory set $\mathcal{X}$  and satisfy $\hat{\bf X} \in \mathcal{X}_{({\bf X}, j)}$ if $\hat{\bf X}_k = {\bf X}_k, 0 \leq k \leq j$.  There are no topological or measure theoretic properties required of the set $\mathcal{X}$; moreover, there are no cardinality restrictions, in particular, there could be an uncountable number of trajectories.
The conditional sets allow to embody $\mathcal{X}$ with several notions of arbitrage, the latter play a key role to define null sets and to support the constructions of the main analytical objects. Namely, 
superheding functionals $\overline{\sigma}_jf({\bf X})$ representing the funds required to superhedge $f: \mathcal{X}_{({\bf X}, j)} \rightarrow \mathbb{R}$, i.e. $f(\tilde{\bf X}) \leq \overline{\sigma}_jf({\bf X}) + \sum_{i=j}^{N-1} H_i(\hat{\bf X}) \cdot \Delta_i \hat{X}$ where $H_i(\hat{\bf X}) \cdot \Delta_i \hat{X}$ is the inner product in $\mathbb{R}^d$ between the holdings $(H_i^1, \ldots, H_i^d)$ of a self-financing portfolio (non-anticipative) and the price changes in the model i.e. $\Delta_i \hat{X}= (\hat{X}^1_{i+1}- \hat{X}^1_{i}, \ldots, \hat{X}^d_{i+1}- \hat{X}^d_{i})$. The models in this paper are typically incomplete as they are built using historical increments and allowing for possible combinations for the future. This construction is agent dependent as historical events are filtered accordingly on how the investor re-adjusts her portfolio.
Justifications for such an operational approach are developed in \cite{ferrando19}. $\overline{\sigma}_jf({\bf X})$  carries model information on (superhedging) price relationships among future payoffs $f$. This price information is relevant to the agent as the model is constructed in a way that reflects her rebalancing operations.  Here are the main contributions of the paper:
\begin{enumerate}
\item We provide algorithmic details to construct a specific set $\mathcal{X}$ in a practical setting where the agent trades during a day using three assets. The main ingredients in this scenario construction represent general ideas
that, we expect,  could be translated to other financial settings. In particular, we show several novel ways to trade risk and reward. The notion of risk refers, in our setting, to uncertainty on a possible set of scenarios/trajectories i.e. it is a non-probabilistic notion. 

\item We present output indicating how an agent could invest in a particular trading day depending on the initial conditions and how much risk is she willing to take by modifying the trajectory set. In other words, including or removing specific trajectories in the model are objective operations given that these trajectory operations are related to market observations. This should be contrasted to a purely data driven model construction where
it is usually a problem to interpret characteristics of the model (e.g. why a particular modelling path has been or has not been included).

\item Advantages of the model construction are highlighted; for example the calibration is empirically driven
and essentially depends on a single parameter (more generally, it depends of the agent's operations). This fact should be contrasted to the perilous task of calibrating stochastic processes to market data. 

\item We present two general ways to prune trajectories in order to trade risk and reward. The first method relies on pruning trajectories by modifying worst case calibrated parameters (or, relatedly, by the shrinking of an associated convex hull). The risk-reward repercussions of such pruning are objective as one can identify the empirical reasoning for disposing with the prunned trajectories. The other pruning method is theory based and depends on a new notion of {\it small arbitrage}, it is illustrated by an example that relies on a non-probabilistic extension of Dubin's classical upper bound for upcrossings.

\item Up to the best of our knowledge, our paper is first to provide a detailed and general price modelling methodology based on a rigorous non-probabilistic framework (we continue and extend our previous work in \cite{ferrando19}). Our reliance on theory is substantial, not only the framework provides guidance but it offers a natural and practical way to handle (non-probabilistic) null sets and to introduce the new notion of small arbitrage.

\item We also provide several practical discussions and illustrations, e.g. we show how trajectory matching can be naturally performed in our model. Such trajectory matching is crucial for pathway hedging and it is
uncommon to see it reported in the literature. Profit and loss analysis is provided illustrating that the methodology is natural and of a practical character.

\item The theory handles arbitrage opportunities, which may potentially appear during the trajectory set construction process, as null events in a rigorous and intuitive way that can be implemented precisely in a computer.
We take advantage of this feature in our superhedging algorithm and trajectory construction. 
\end{enumerate}

Our paper provides a concrete approach for constructing trajectory spaces and illustrates the financial implications of this framework. We treat null sets rigorously, tying them to the constructive properties of the trajectory space. The main focus is on evaluating superhedging and underhedging functionals, with the price interval generated by these methods providing bounds for asset prices. A key concern is whether this interval is informative. Our approach avoids the problem of wide intervals by controlling worst-case scenarios, ensuring more relevant pricing information.  


Once superhedging prices are computed, an agent can evaluate the profit and loss (P\&L) profile using the trajectory space model. We also explore how risk can be traded for reward by shrinking the trajectory set. This objective process prunes extreme events, thereby narrowing the price interval. By adjusting worst-case parameters, we could incorporate observed historical frequencies to account for probabilistic risk in investment decisions.


\subsection{Relation to the Literature}
Modern mathematical finance relies on the principle of no-arbitrage, which states that trading strategies should not allow to make a risk-free profit. In essence, models should ensure that any potential market outcome involves some possibility of loss. No-arbitrage is typically enforced by associating arbitrage opportunities with (probabilistic) null events in the model.

No-arbitrage dictates the price of financial positions in complete markets relative to underlying asset prices. This price reflects the initial investment needed to create a perfect hedging portfolio, though real-world situations often require deviations from idealized assumptions (e.g., continuous time, infinite transactions). In incomplete markets, no-arbitrage restricts the price dynamics in a stochastic setting by enforcing that the discounted price process is a martingale. This paper restricts its focus to discrete time.

The established (stochastic) mathematical formalism
critically depends on the theory of probability, in particular, 
for  the notions of null set and martingale. Recently, there has been a conceptual shift
to recast financial concepts, no-arbitrage included, in non-probabilistic terms. The resulting 
modelling approaches are labelled as {\it robust} (modelling) indicating that fewer assumptions are
required to set up models for the  time evolution for assets prices. 
To the best of our knowledge, theoretical developments in this area, such as those by \cite{liebrich}, \cite{bartl}, \cite{blanchard} and \cite{burzoni},  have not yet led to practical robust model proposals. 
An obstruction to develop practical models that are supported by current theoretical approaches is the complex nature of null events, they seem to require
mathematical machinery that will complicate a computer implementation
and make the relevance of the theory, inasmuch as null sets are concerned, in an actual model, ambiguous or inconsequential.

The present work extends the operational methodology originally presented in \\
\cite{ferrando19}, it now allows the superhedging of any assets (as opposed to superhedging only options). This allows to make the methods much more easily deployable, e.g. we provide output for daily traiding. Moreover, the models we develop are more stylized and the calibration procedure is
streamlined. Most substantial is the fact the framework of the present paper relies on the theory developed in \cite{bender3} which, in particular,  allows us to deal with null sets in a rigorous way. To trace back the origins of our framework, we mention the  papers \cite{ferrando}, \cite{degano}, \cite{degano2} that lead to a
robust methodology based on the notion of a trajectory space. The latter is a structured version of an abstract  probability space and has the capability to encode several no-arbitrage related notions. We refer to such theory and setting with the word {\it trajectorial}. We take advantage of such robust framework, in particular, the proposed trajectories are built to reflect objective features that affect our class of investors. Such flexibility is available as the manifold of possible trajectories is not constrained by probabilistic or analytical assumptions. The main body of the paper introduces a minimal amount of theory as that is not our main emphasis, a summary of the theory and the results we depend upon are presented in an appendix.

The paper is organized as follows, Section \ref{sec:setting} briefly describes the formal mathematical setting. Section \ref{a.e. section} introduces the conditional norm operator, leading to null sets, and the conditional superhedging operator. Section \ref{sec:computation} explains how one transitions from the theory to actual computations of superhedging prices. A more detailed explanation is provided in Appendix \ref{theoreticalFramework}. The extensive Section \ref{sec:trajectorySetConstruction} describes the trajectory set construction, the operational use of general data, the general definitions of Models A and B, the definition of the empirical set $N_E$ used to iterate the trajectory set construction and a first introduction to pruning constraints (the latter topic is completed in Appendix \ref{additional_pruning_constraints}). Section \ref{modelSpecification} fully specifies the models used in the paper as well as the method of Dynamic Pruning. Section \ref{dataAndCalibration} introduces specific data, a calibration method and several output displays. Geometric Brownian motion simulated data is used to check some basic feautures of the methodology. The section also provides some trajectory matching output.
Section \ref{sec:profitAndLossAnalysis} provides output and explanations for a profit and loss analysis that we perform to illustrate the methodology. Section \ref{sec:Arbitrage} explains how arbitrage opportunities could apppear during trajectory set construction and how
our theory allows to handle those situations. The section also introduces a new notion of {\it small arbitrage} and describes a novel $2$-dimensional non-probabilistic  setting for Dubin's bound for upcrossings. We explain how such type of result can be used for a different, apriori theory-based, type of trajectory pruning.  Section \ref{sec:discussion}  presents a discussion that allows to see the proposed approach from a general point of view. Appendix \ref{theoreticalFramework} introduces basic definitions from our theoretical framework and describes in detail how theoretical superhedging quantities are calculated by an algorithm. Appendix \ref{additional_pruning_constraints} completes the description of the pruning constraints used to construct our trajectory models.

\section{Trajectorial Setting} \label{sec:setting}

We briefly introduce the mathematical setting and provide references to detailed developments
and proofs (additional background is provided in Appendix \ref{theoreticalFramework}). Our models propose prices for a finite number of assets whose initial values are known and evolve in discrete time. Conditioning
on given past information,  uncertainty is prescribed by the fact 
that potential future prices belong  to a set of multidimensional sequences, the latter we call trajectories and play the role of possible scenarios. Therefore, the models are non-deterministic in a non-probabilistic way.
The trading strategies are given by portfolios that will be successively re-adjusted, taking into account the information available at each stage. Some justification for the setting can be found in \cite{ferrando} (one dimensional case) while the paper \cite{degano2} provides details on the multidimensional case. Notice that \cite{degano2}  defines the superhedging operator by means of a single portfolio and in finite time while
references \cite{bender1}, \cite{bender3} and \cite{ferrando2} present the general theory
allowing for an infinite number of superhedging portfolios and infinite time (but the setting is one-dimensional). Here we will rely on a multidimensional setting as well as on the
availability of an infinite number of portfolios and so we will combine the two streams of literature listed above. We prove, see Theorem \ref{thm:sigma_is_correct}, that the infinite number of portfolios are only used to define null events in trading terms. More specifically, the superhedging bounds with an infinite number of portfolios can be replaced with a single simple portfolio if we allow the upperbound to hold a.e. (this latter notion is not probabilistic and introduced at due point in our paper). The use of a countable
number of portfolios is completely analogous to
the case of Lebesgue's measure.

The paper
describes a systematic method to construct a particular class of trajectory sets 
and so providing an example of the general framework
that we describe next.
We consider a financial market with $d+1$ assets that evolve in a fixed time interval $[0,T]$ (the case $[0, \infty)$ is also allowed in much of the developments). At some point, we will require $d=2$ for practical reasons (as explained at due time). Our models will be discrete in the sense that the trading instances are indexed by integer numbers. Given $s_0=(s_0^0,s_0^1,s_0^2,\dots,s_0^d) \in \mathbb{R}^{d+1}$, as initial prices of assets $S^k$,  potential future prices are modeled by sequences taking values in $\mathbb{R}^{d+1}$ with coordinates
\[ S_i=(S^0_i,S^1_i,S^2_i,\dots,S^d_i)~~ \mbox{with}~~ S_0=s_0. \]
In fact we will model discounted prices as we explain shortly.

\vspace{.1in}
To be definite, we will consider that the real numbers $S^k_i$ express the price of asset $S^k$ in a common currency, a unit of which we denote generically by $\$$. That is, in terms of dimensions $[S^k_i]= \$/{\bf 1}_{S^k}$ where ${\bf 1}_{S^k}$ is one unit of asset $S^k$.
 On the other hand, for financial reasons (see \cite{vecer}), it is important to work with an arbitrary reference asset; this is achieved by taking a reference asset as \emph{numeraire}. For example, in some cases it is useful to select the value of a bank account as numeraire.

The trajectories we will construct are multidimensional and elements of a trajectory set denoted by $\mathcal{X}$. Elements ${\bf X} \in \mathcal{X}$  are of the  form ${\bf X}= \{{\bf X}_i\}_{i\geq 0}$ where ${\bf X}_i =(X_i, \ldots)$, the $X_i$ are $d$-tuples with coordinates 
$X_i= (X^1_i, \ldots, X^d_i) \in \mathbb{R}^d$  that represent the prices of assets $S^k$ in units of asset $S^0$ respectively. Additional coordinates in ${\bf X}_i$, indicated by $\ldots$, will be described shortly.  Specifically, the units are 
$$
[X_i^k]= \frac{{\bf 1}_{S^0}}{{\bf 1}_{S^k}}, ~1 \leq k \leq d,
$$
where ${\bf 1}_{S^j}$ is one unit of asset $S^j$ and  the asset $S^0$ is the numeraire.
Clearly, the choice of numeraire is arbitrary as long as the condition $S^0_i \neq 0$, $i \geq 0$, is satisfied. In this paper, we refer to trajectories by ${\bf X}$, this is in contrast to related work
(e.g. \cite{bender1}  and \cite{ferrando2})  where $S$ is used. We are interested in modelling the former  i.e. the variables  discounted by an arbitrary numeraire. 
To relate to the $1$-dimensional trajectory sets introduced in the mentioned papers we may think that
we were taking $S_i = X^1_i$ and ${\bf 1}_{S^0}= {\bf 1}_B$ representing one unit of a bank account (i.e. we deal with ``discounted'' prices). 

The numerical value of $ X^k_i$ (i.e. stripped from its units), is the number of units of the asset $S^0$, now the numeraire, which are required to acquire one unit of the $S^k$ asset. As noted, we will model the sequences  ${\bf X}$ directly, i.e. without any reference to
the original values $S$ and, consequently, we will rely on the notation ${\bf X} = \{(X_i, \ldots) \} _ {i \ge 0}$.

\begin{definition}[$d$-dimensional Trajectory set]  \label{trajectories}
Consider  $\Sigma =\{\Sigma_i\}$  a given family of subsets of $\mathbb{R}^{d}$, 
$\Omega=\{ \Omega_i \}$ is a family of sets.
For given $x_0 \in \mathbb{R}^{d}$ and $z_0 \in \Omega_0$, a \emph{trajectory set} $\mathcal{X}$ is a subset of
\begin{equation}  \nonumber
\mathcal{X}_{\infty}(x_0, z_0) \equiv \left\lbrace {\bf X} = \{{\bf X}_i \equiv (X_i,Z_i)\}_{i \geq 0}: ~X_i \in \Sigma_i, ~ Z_i \in \Omega_i \right\rbrace,
\end{equation}
such that $(X_0, Z_0)= (x_0,z_0)$. The elements of $\mathcal{X}$  will be called \emph{trajectories}.
\end{definition}
The numbers $X_i= (X_i^1, \ldots, X_i^d)$ are called the {\it traded coordinates} while the
$Z_i$ are called {\it additional coordinates}.
We emphasize that for any conceivable situation of interest  $\mathcal{X} \neq \mathcal{X}_{\infty}(x_0,z_0)$ and
we have all the flexibility we may wish to design $\mathcal{X}$ for practical purposes.
Constructions of interesting sets $\mathcal{X}$ is a main concern of the paper.
The only theoretical restrictions imposed on $\mathcal{X}$ will be very general no-arbitrage 
constraints that we will incorporate at due time (see Definition \ref{propertyL} and the follow up notion of $(L)-a.e.$). The portfolio re-balancing stages may be triggered by arbitrary events of the market without the need to be directly associated with time. This greater degree of generality is handled by the additional coordinates $Z_i$ that add a new source of uncertainty to the trajectories' coordinates (these additional coordinates play an important role when constructing specific models). The $Z_i$ can be set-valued; in financial terms, this new variable can represent any collection of variables of interest such as volume of transactions, time, quadratic variation of trajectories, etc. (see for example \cite{ferrando19}). The additional coordinates are not part of the explicit portfolio composition, in particular, they are not being traded and are used for the purpose to construct, and then prune, the trajectory set. Once the latter is built, the additional coordinates can be neglected for superhedging and subsequent analysis.  For simplicity and whenever
the additional coordinates $Z_i$ play a secondary role in a discussion, we will suppress
such coordinates and refer to a trajectory simply by $X = \{X_i\}_{i \geq 0}$.
In fact, as it will be clear through our developments, the additional coordinates are only used when creating the trajectory set and are not used in the follow up step of computing superhedging quantities (nor are they needed in the definition of the superheding operators).

It is important to note that if $\tilde{{\bf X}} = \{(\tilde{X}_i, \tilde{Z}_i) \} $ and $ \hat{{\bf X}} = \{(\hat{X}_i, \hat{Z}_i) \} $ are two trajectories, $ \tilde{{\bf X}}_i $ could unfold at a different time than $ \hat{{\bf X}}_i $. That is, the index $i$ will be associated with portfolio re-balances stages but they will not be necessarily associated to (uniform) time. It is only required that the stage $i + 1$ occurs temporarily after the stage $ i $ for each trajectory.


At the $ k$-th stage, the information about the future available to investors is that ${\bf X}$ is an element of the set
\begin{equation} \nonumber
\mathcal{X}_{({\bf X},k)}\equiv \left\lbrace {\bf X}' \in \mathcal{X}: {\bf X}'_i= {\bf X}_i, 0 \le i \le k \right\rbrace \subseteq \mathcal{X}, ~\mbox{and so}~~\mathcal{X}= \mathcal{X}_{({\bf X},0)}.
\end{equation}
The notation $({\bf X},k)$ will be referred to as  \emph{node} and acts as a shorthand notation for the set $\mathcal{X}_{({\bf X},k)}$ called the \emph{trajectory set conditioned at the node $({\bf X},k)$}. The future information contained in $ \tilde{{\bf X}} \in \mathcal{X}_{({\bf X}, k)} $ depends on the past only through ${\bf X}_0, \ldots, {\bf X}_k$.
The multiple number of trajectories  emanating from a node
reflects the non-deterministic nature of the assets' model time evolution.
As trajectories unfold more coordinates become available and so the investor increases his knowledge about possible future scenarios. This is expressed by the fact
\[ \mathcal{X}_{({\bf X},k')} \subseteq \mathcal{X}_{({\bf X},k)},\]
for $k'>k$. The following notation will also be used
\begin{equation} \label{eqn:conjdelta}
  \Delta X(\mathcal{X}_{({\bf X},k)}) \equiv \{ \Delta_k \tilde{X}:~ {\bf \tilde{X}} \in \mathcal{X}_{({\bf X},k)} \} \subseteq \mathbb{R}^{d},
\end{equation}
where $\Delta_k \tilde{X}= \tilde{X}_{k+1}- \tilde{X}_k= (\tilde{X}^1_{k+1}, \ldots, \tilde{X}^d_{k+1}) - (\tilde{X}^1_{k}, \ldots, \tilde{X}^d_{k})$ (i.e we take the differences of only the traded coordinates) . We will refer to any property as {\it local} if it is relative to a node $ ({\bf X}, k) $ and only involves elements of $\Delta X(\mathcal{X}_{({\bf X},k)})$.

\subsection{Conditional Portfolio Sets}  \label{portfolioSets}

The other basic component are \emph{portfolios} defined as follows.

\begin{definition}[Conditional portfolio set] {\cite[Definition 2]{bender1}} For any fixed $ {\bf X} \in \mathcal{X} $ and $j\ge 0$, $\He_{({\bf X},j)}$ will be a set of sequences of functions $H = \{{\bf H}_i = (H_i^0, H_i)\}_{i \geq j}$, where $H_i: \mathcal{X}_{({\bf X},j)} \rightarrow \mathbb{R}^{d}$ and $H^0_i: \mathcal{X}_{({\bf X},j)} \rightarrow \mathbb{R}$ are non-anticipative in the following sense: for all $\tilde{{\bf X}},\hat{{\bf X}}\in \mathcal{X}_{({\bf X},j)}$ such that $\tilde{{\bf X}}_k = \hat{{\bf X}}_k$ for $j \le k\le i$, then ${\bf H}_i(\tilde{{\bf X}}) = {\bf H}_i(\hat{{\bf X}})$ (i.e. ${\bf H}_i(\tilde{{\bf X}}) = {\bf H}_i(\tilde{{\bf X}}_0,\ldots, \tilde{{\bf X}}_i)$). We will assume $\He_{({\bf X},j)}$ to be a vector space
for any possible pair $({\bf X}, j)$. Furthermore, portfolios will be sel-financing as in Definition \ref{def:selfFinancing} below.
\end{definition}
\noindent
Notice that $[H^k_i]={\bf 1}_{S^k}$. We will write $H_i({\bf X}) \cdot Y$, $Y \in \mathbb{R}^d$,  for the Euclidean inner product in $\mathbb{R}^d$.
In general, $\He_{({\bf X},j)}$ does not need to include all possible sequences of available non-anticipative functions and we refer to \cite{bender1} for details on this particular issue.
$H \in \He_{({\bf X},j)}$ may be referred to as a \emph{ conditional portfolio}.

\vspace{.1in}
It will also be convenient to define \emph{global portfolios}. For a fixed $j \geq 0$,  $\He_j$ denotes the set of sequences of functions $H\equiv \{{\bf H}_i = (H_i^0, H_i)\}_{i\ge j}$ with $H_i: \mathcal{X} \rightarrow \mathbb{R}^d$
and  $H^0_i: \mathcal{X} \rightarrow \mathbb{R}$ where for each $ {\bf X} \in \Se$ there exists $G \in \mathcal{H}_{({\bf X},j)}$ such that ${\bf H}_i(\tilde{{\bf X}})= {\bf G}_i(\tilde{{\bf X}}) \;\;\forall~ \tilde{{\bf X}}\in \Se_{({\bf X},j)}~~\mbox{and}~~i \geq j$. A global portfolio $H$  could be characterized by indicating that its restriction to $\Se_{({\bf X},j)}$ belongs to $\He_{({\bf X},j)}$.

$ H^k_i (\se) $ represents the number of units held for the $k$-th asset during the period between $ i $ and $ i + 1 $. Therefore, $ H_i^0({\bf X}) + H_i({\bf X}) \cdot X_i $ is the {\it value}, in units of the numeraire, of the assets' holdings $(H_i^0, H_i)$  at stage $ i $, while $ H_i^0({\bf X}) + H_i({\bf X}) \cdot X_{i+1} $
 is the value  just before rebalancing at the end of the period. 

In the next re-balancing, the investor will invest ${\bf H}_{i + 1}$; in general, $ H^0_{i + 1}({\bf X}) + H_{i + 1}({\bf X}) \cdot X_{i + 1} $ may be different from $ H^0_{i}({\bf X}) + H_{i}({\bf X}) \cdot X_{i + 1} $. In this latter case, it follows that some units of the assets were added or removed, without replacement, from the portfolio. However, this situation is precluded for many applications. For example, if the goal is to look for a ``fair'' price for a certain financial contract, this value should be the minimum necessary to cover the obligations generated by the contract, that is, any injection or withdrawal of money will affect this property. This reasoning justifies the use of the following concept.

\begin{definition}[Self-financing portfolio] \label{def:selfFinancing}
A conditional portfolio $H$ is called \emph{self-financing} if for all $\se' \in \mathcal{X}_{({\bf X},j)}$ and $i \geq j$,
\begin{equation} \label{selfFinancing}
H^0_{i}({\bf X}')~+ H_{i}({\bf X}') \cdot X'_{i+1} = H^0_{i+1}({\bf X}')~+ H_{i+1}({\bf X}') \cdot X_{i+1}'.
\end{equation}
\end{definition}

The self-financing property means that the portfolio is re-balanced in such a way that its value is preserved. From this property it is clear that the accumulated gains and losses resulting from price fluctuations are the only sources of variation of the portfolio; then for ${\bf X}' \in \mathcal{X}_{({\bf X}, j)}$
\begin{equation} \nonumber 
H^0_{k}({\bf X}')~+ H_{k}({\bf X}') \cdot X_{k}' = H^0_{j}({\bf X})+ H_{j}({\bf X}) \cdot X_{j} + \sum_{i=j}^{k-1}  H_{i}({\bf X}') \cdot \Delta_i X', ~k \geq j, 
\end{equation}
$\mbox{where}\;\: \Delta_i X' = X'_{i+1}- X'_i.$


\vspace{.1in}
\noindent
For a fixed node $({\bf X},j)$, $H\in\He_{({\bf X},j)}$, $V\in \mathbb{R}$ and $n\ge j$ we define $\Pi_{j,n}^{V, H}: \mathcal{X}_{({\bf X},j)} \rightarrow \mathbb{R}$, for $\tilde{{\bf X}} \in \mathcal{X}_{({\bf X},j)}$ by:
\begin{equation}   \nonumber 
\Pi_{j,n}^{V, H}(\tilde{{\bf X}}) \equiv V +\sum_{i=j}^{n-1}H_i(\tilde{{\bf X}}) \cdot \Delta_i \tilde{X},~ n \geq j,
\end{equation}
and $H^0_j({\bf X})= V - H_j({\bf X}) \cdot X_j$.
Notice that as $({\bf X},j)$ changes, the given  $V$ could change as well and so in effect
$V= V({\bf X}_0, \ldots, {\bf X}_j)$.

\vspace{.1in}
As an abuse of language, the notation $X^k$, besides being used for the prices $X^k_i$, will also be used to refer to asset $k$. In the sequel, being $\mathcal{A}$ a set of real valued functions, $\mathcal{A}^+$ will denote the set of its non-negative elements.

\begin{definition} [Elementary vector spaces]  
\vspace{.05in} For a fixed node $({\bf X},j)$  set
\begin{equation} \label{conditionalElementarySpace0}
\mathcal{E}_{({\bf X},j)}= \{f = \Pi_{j, n_f}^{V, H}: H \in \He_{({\bf X},j)},~~V \in \mathbb{R}~~~\mbox{and}~~~n_f \in \mathbb{N}\}.
\end{equation}
Also, $\mathcal{E}_{({\bf X},j)}^+$ will denote the non-negative elements of $\mathcal{E}_{({\bf X},j)}$. The sets in (\ref{conditionalElementarySpace0}) are vector spaces given our assumption that each $\He_{({\bf X},j)}$ are themselves vector spaces. Elements of $\mathcal{E}_{({\bf X},j)}$ are called {\it elementary functions}. Let us also define
\begin{equation} \nonumber 
\mathcal{E}_j = \{f:\Se\rightarrow \mathbb{R}: f\vert_{\Se_{({\bf X},j)}}\in\mathcal{E}_{({\bf X},j)} \;\; \forall {\bf X} \in \Se \}.
\end{equation}
\end{definition}

\section{Fundamental Operators and Almost Everywhere Notion}\label{a.e. section}
Let $Q$ denote the set of all functions from $\mathcal{X}$ to $[-\infty, \infty]$ and $P \subseteq Q$ denotes the set of all non-negative functions. The following conventions are in effect: $0 ~\infty =0$, $\infty + (- \infty) = \infty$, $u - v \equiv u +(-v) ~\forall~u, v \in [-\infty, \infty]$, and $\inf \emptyset = \infty$. $f \in Q$ is said to be of {\it finite maturity}
if $f({\bf X})= f({\bf X}_0, \ldots, {\bf X}_n)$ for some constant $n$ ($n$ is called the maturity time of $f$).

We define next the \emph{conditional norm operator} \; $\overline{I}_j : P \rightarrow \mathcal{E}^+_j$, it is used to define null sets by taking $j=0$. 

\begin{definition}\label{Iup_definition}
For a given node $({\bf X}, j)$ and a general $f \in P$, define:
\begin{equation} \nonumber
\overline{I}_j f ({\bf X})\equiv  \inf \left\{\sum_{m \geq 1} V^m: ~~f \leq  \sum_{m \geq 1} ~~\Pi_{j, n_m}^{V^m, H^m}\;\;\;\mbox{on}\;\; \mathcal{X}_{({\bf X},j)}\;
\right\},
\end{equation}
where $\Pi_{j, n}^{V^m, H^m}\in \mathcal{E}_{({\bf X},j)}^+  ~~\forall ~~j \leq n \leq n_m$.
\end{definition}

\vspace{-.1in}
The requirement $\Pi_{j, n}^{V^m, H^m}\in \mathcal{E}_{({\bf X},j)}^+$, for all $n,~ j \leq n \leq n_m$ (as contrasted 
to the single case $n= n_m$) is only needed to handle some arguments related to the case of type II nodes (types of nodes are introduced in Section \ref{nodeTypes}). We will use the notation $\overline{I}f \equiv \overline{I}_0f$. 

\noindent
We also set, for a general $f \in Q$:
\begin{equation} \nonumber
\left \Vert f \right \Vert _j({\bf X}) \equiv \overline{I}_j \vert f \vert({\bf X})~~\mbox{and}~~\left \Vert f \right \Vert \equiv \left \Vert f \right \Vert_0({\bf X}).
\end{equation}

\noindent
Notice that $\overline{I}_j f ({\bf X})= \overline{I}_j f ({\bf X}_0, \ldots, {\bf X}_j)$, i.e. $\overline{I}_j f (\cdot)$ is constant on $\mathcal{X}_{({\bf X},j)}$. 
Moreover $\overline{I}_j f\ge 0$, so $\Vert 0 \Vert_j=0$.
$\Vert \cdot \Vert_j({\bf X})$ will be called a {\it conditional norm}.

\vspace{.1in}
Next we introduce the notions of \emph{conditional null set} and the \emph{conditional $a.e.$ property}.
\begin{definition}[Conditional a.e. notions] \label{nullObjects} Given a node $({\bf X},j)$, a function $g\in Q$ is a \emph{conditionally null function at} $({\bf X},j)$  if:\[\|g\|_j({\bf X})=0.\]
 A subset $E\subset\Se$ is a \emph{conditionally null set at $({\bf X},j)$} if $\|\mathbf{1}_E\Vert_j({\bf X})=0$. A property is said to hold conditionally a.e. at $({\bf X},j)$
(or equivalently: the property holds ``a.e. on $\mathcal{X}_{({\bf X},j)}$") if the subset of  $\Se_{(S,j)}$ where it does not hold
is a conditionally null set at $({\bf X},j)$. In particular, the latter definition applies to $g=f$ a.e. on $\mathcal{X}_{({\bf X},j)}$. Unconditional null sets correspond to the case $j=0$ and are also called global null sets, this case will be simply denoted $a.e.$ and it applies to equalities as well as inequalities of course.
\end{definition}

\vspace{-.1in}
All appearing equalities and inequalities are valid for all points in the spaces where the functions are defined unless qualified by an explicit a.e. 

\vspace{.1in}
We introduce next the operator  $\overline{\sigma}_j : Q \rightarrow \mathcal{E}_j$,
which we will call a \emph{conditional superhedging operator} (also called {\it conditional outer integral} in \cite{bender1}); it is a crucial notion in our setting.

\begin{definition}\label{cond_integ_def} For a node $({\bf X},j)$ and a general $f \in Q$,
\begin{equation} \nonumber
\overline{\sigma}_j f({\bf X}) \equiv  \inf \left\{\sum_{m \geq 0}V^m: ~~f \leq  \sum_{m \geq 0} f_m \;\;\mbox{on}\; \Se_{({\bf X},j)}\right\},
\end{equation}
where
$f_0=\Pi_{j, n_0}^{V^0,H^0}\!\!\in \mathcal{E}_{({\bf X},j)} \; ~~\mbox{and, for}~ m \geq 1, 
\;\;f_m \equiv \Pi_{j, n}^{V^m, H^m}\!\!\in \mathcal{E}_{({\bf X},j)}^+~~\forall ~~~n \geq j.$
Define also $\underline{\sigma}_j f({\bf X}) \equiv -\overline{\sigma}_j(-f) ({\bf X})$.
We will use the notation $\overline{\sigma}f \equiv \overline{\sigma}_0 f$.
Note that $\overline{\sigma}_j f({\bf X})= \overline{\sigma}_j f({\bf X}_0, \ldots, {\bf X}_j)$. 
\end{definition}

\subsection{Null Sets as Unlikely Financial Events}    \label{detectingNullEvents}

\noindent
Here we indicate the intuition behind the definition of the operator
$\overline{\sigma}_jf$ (with analogous explanations for $\overline{I}_jf$). The main
simple portfolio superhedging $f$ is given by $f_0+ \sum_{m=1}^N f_m$ for $N$ sufficiently large, the role of the idealization
of an infinite number of  non-negative  portfolios
$\sum_{m \geq 1} f_m$ is used to detect arbitrage nodes (see Appendix \ref{nodeTypes} 
for the definitions of types of nodes) and so to define null sets.
This is in close analogy to the use of elementary regions in order to define the area
of  non-elementary regions of the plane by means of Carath$\acute{\mbox{e}}$odory's outer measure. In our case,  elementary regions are replaced by simple portfolios, a class closed under linear combinations and playing the role of the simple functions in Lebesgue's theory of integration. The quantity $\overline{\sigma}_jf$
is then interpreted as a conditional outer integral (we are conditioning on $({\bf X},j)$ and hence restricting
the future to $\mathcal{X}_{({\bf X},j)}$), a functional version of
Carath$\acute{\mbox{e}}$odory's outer measure (see \cite{VdVW}).

Notice the inclusion of a single
simple portfolio $f_0$ with arbitrary sign in Definition \ref{cond_integ_def}; idealized portfolios of the
form $\sum_{m \geq 1} f_m$, $f_m \geq 0$, are only used to detect/define null functions as we argue below. The definition of null function has a purely trading nature: $f$ is null if for any given $\epsilon \geq 0$
we will have $|f| \leq \sum_{m \geq 1} f_m$ with $\sum_{m \geq 1}V^m \leq \epsilon$.
Take $f = c~{\bf 1}_A$ with $A \subseteq \mathcal{X}$ and $c >0$ for example, then, it costs arbitrarily little to superhedge $f$ while its payoff $c {\bf 1}_A$ could be arbitrarily high (relative to the investment $\epsilon$). This is analogous to a lottery where you pay arbitrarily little to get into the draw with the possibility of an arbitrarily large payment. That is, financial positions are considered null if they are cheap to superhedge by means of idealized portfolios allowing an unbounded number of transactions.
These events are then considered unlikely and in the theory are null sets. Arbitrage opportunities
will be null sets, for example suppose ${\bf X}_0, \ldots, {\bf X}_n$ has unfolded
and that $h \cdot (\hat{X}_{n+1} - X_n) >0$, for a given $h \in \mathbb{R}^d$ and  for all $\hat{\bf{X}} \in \mathcal{X}_{({\bf X},n)}$.
Such future represents an arbitrage opportunity by transacting with the numeraire and the asset.
We can then design (arbitrage) portfolios $f_m$ that are switched on at $\mathcal{X}_{({\bf X},n)}$ and giving
$\sum_{m \geq 1} f_m(\hat{{\bf X}}) = \infty$ for all $\hat{\bf{X}} \in \mathcal{X}_{(\bf{X},n)}$,
this allows to take $\epsilon =0$ as initial investment proving that 
$f = {\bf 1}_A= {\bf 1}_{\mathcal{X}_{({\bf X},n)}}$ is a null function.
The need to have an infinite number of terms in $\sum_{m \geq 1} f_m$
occurs because the realized gain determined by $H_n^m({\bf X}) \cdot (\hat{X}_{n+1} - X_n)$ can be arbitrarily small. Relying
on a countable collection of positive portfolios $f_m$, $m \geq 1$, leads to the modern theory of integration in that one can handle countable unions of null sets.

\section{Computation of Superhedging Prices} \label{sec:computation}

As we have argued above, it is then essential to rely on
 an infinite number of portfolios in the definition of $\overline{I}$ in order to handle countable collections of null sets as it is customarily done in Lebesgue's theory of integration. Theorem \ref{aEComputationWithSimplePortfolios} below shows that, in the computation of the superhedging functional, one can replace the infinite sum of elementary portfolios by a single elementary portfolio in the case of superhedging  a finite maturity function $f$ with the caveat that the superhedging inequality only holds a.e. Therefore, the idealization of using a countable number of portfolios allows to incorporate null sets in a natural way, namely, superhedging is achieved with a single simple portfolio and resulting on the same bounds with the understanding that superhedging may not hold on null sets (the result is proven for functions depending on a finite number of coordinates). This result is analogous to the classical, measure based, theory as presented in \cite{VdVW}.
The result then allows to include arbitrage nodes in the model as they lead to null sets
in such a way that the model's price bounds are not affected. This observation, namely that arbitrage nodes can be neglected when computing superhedging prices, is being used 
during our trajectory construction process which, as it will be detailed at due time, can introduce arbitrage opportunities in a natural way and so the latter can be avoided during the computation of the price bounds. 

\vspace{.1in}
Theorem \ref{aEComputationWithSimplePortfolios} below is (essentially) Theorem 6.1 from \cite{bender3}. Appendix \ref{theoreticalFramework}
provides the necessary concepts to make sense of the statement, in fact, that appendix presents the statement and proof of Theorem \ref{thm:sigma_is_correct} a slight re-statement of Theorem \ref{aEComputationWithSimplePortfolios}  which is more useful for our purposes. The precise meaning of the key hypothesis $(L)-a.e.$, required in Theorem \ref{aEComputationWithSimplePortfolios}, is provided 
in Appendix \ref{theoreticalFramework} and sufficient conditions for its validity are in \cite{bender3}.

For simplicity, the next result is stated  globally (i.e. at node $({\bf X},0)$) and for the $d=1$ case. The result appears in \cite{bender3} and it is available
for the one-dimensional case. In \cite{bender3} the converse statement is also established but we will not need that part of the result.
\vspace{-.1in}
\begin{theorem} \label{aEComputationWithSimplePortfolios}
Suppose that $(L)$-a.e. holds, $d=1$ and that $f: \mathcal{X} \rightarrow \mathbb{R}$ and bounded has maturity $n_f$, i.e. $f({\bf X})= f({\bf X}_0, \ldots, {\bf X}_{n_f})$ for every ${\bf X} \in \mathcal{X}$. Then
\begin{equation} \nonumber 
\overline{\sigma}f = \inf \{V: f \leq \Pi^{V, H}_{0, n}~~a.e.~\mbox{where}~~~\Pi^{V, H}_{0, n_f} \in \mathcal{E}_0\}.
\end{equation}
\end{theorem}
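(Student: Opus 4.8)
\emph{The plan.} Write $\overline{\sigma}^{\star}f$ for the right-hand side. I would prove the two inequalities $\overline{\sigma}f\le\overline{\sigma}^{\star}f$ and $\overline{\sigma}f\ge\overline{\sigma}^{\star}f$ separately. The first is elementary and uses neither $d=1$ nor $(L)$-a.e.; the second is the substance of the theorem and is where I expect the real work.

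\emph{The inequality $\overline{\sigma}f\le\overline{\sigma}^{\star}f$.} Fix $\epsilon>0$ and a single elementary $\Pi:=\Pi^{V,H}_{0,n}$ with $f\le\Pi$ off a null set $N$ and $V\le\overline{\sigma}^{\star}f+\epsilon$. Since $\|\mathbf{1}_N\|=\overline{I}\,\mathbf{1}_N=0$, for each $k\ge1$ there are non-negative elementary functions whose sum dominates $\mathbf{1}_N$ on $\mathcal{X}$ at total cost below $2^{-k}\epsilon$; pooling these over $k$ into one sequence $(g_\ell)_{\ell\ge1}$ gives $g_\ell\ge0$, $\sum_\ell g_\ell=+\infty$ on $N$, and total cost below $\epsilon$. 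As $f$ is real-valued, $f\le\Pi+\sum_\ell g_\ell$ holds \emph{everywhere} on $\mathcal{X}$ ($+\infty$ on $N$; $\ge\Pi\ge f$ off $N$), so taking $f_0=\Pi$, $f_\ell=g_\ell$ in Definition \ref{cond_integ_def} yields $\overline{\sigma}f\le V+\epsilon$; let $\epsilon\downarrow0$.

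\emph{The inequality $\overline{\sigma}f\ge\overline{\sigma}^{\star}f$.} Fix $\epsilon>0$ and a near-optimal decomposition $f\le f_0+\sum_{m\ge1}f_m$ on $\mathcal{X}$, with $f_m=\Pi^{V^m,H^m}\ge0$ (at all intermediate stages) for $m\ge1$ and $\sum_{m\ge0}V^m\le\overline{\sigma}f+\epsilon$; first note $\{\sum_{m\ge1}f_m=\infty\}$ is null, since $\mathbf{1}_{\{\sum f_m=\infty\}}\le c^{-1}\sum_{m\ge1}f_m$ for all $c>0$. The plan is to collapse the countably many strategies into one, $H_i:=\sum_{m\ge0}H^m_i$ with value $V:=\sum_{m\ge0}V^m$, and show: (a) $f_0+\sum_{m\ge1}f_m=\Pi^{V,H}$ for a.e.\ trajectory, hence $f\le\Pi^{V,H}$ a.e.; and (b) $H$ may be truncated to maturity $n_f$ while keeping a.e.\ superhedging. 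For (a), an induction on $k$ using $d=1$ and $(L)$-a.e.\ shows $\sum_mH^m_k$ converges a.e.: the stage-$k$ non-negativity of $f_m$ together with local increments on both sides of $0$ forces $|H^m_k|\le C_k\cdot(\text{stage-}k\text{ value of }f_m)$, and those values aggregate to the finite stage-$k$ value of the combined portfolio, while nodes whose local increments accumulate only at $0$ contribute nothing; $(L)$-a.e.\ is precisely the statement that the trajectories meeting a genuinely offending node form a null set. For (b), at a node $({\bf X}',n_f)$ the payoff $f$ equals a constant $c\le\Pi^{V,H}$, and the post-$n_f$ gain of the discarded tail has infimum $\le0$ for a.e.\ ${\bf X}'$ (a positive infimum is an arbitrage, excluded $a.e.$ by $(L)$-a.e.), so the truncated portfolio still superhedges $f$ a.e. This gives $\overline{\sigma}^{\star}f\le V\le\overline{\sigma}f+\epsilon$; let $\epsilon\downarrow0$.

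\emph{Main obstacle, and an alternative.} The crux is step (a): making sense of the aggregated holdings and, above all, verifying that the set where the construction fails is genuinely null even when $\mathcal{X}$ carries uncountably many nodes — which is exactly what $(L)$-a.e.\ is built to supply (its global form is recalled in Appendix \ref{theoreticalFramework}), and where $d=1$ (scalar holdings with two-sided local bounds) and boundedness of $f$ (so that $\overline{\sigma}_kf$ stays finite) are indispensable. An equivalent route, closer to the dynamic-programming algorithm of Appendix \ref{theoreticalFramework}, is to establish the backward recursion $\overline{\sigma}_kf=\overline{\sigma}_k(\overline{\sigma}_{k+1}f)$ (a.e.) for bounded finite-maturity $f$ under $(L)$-a.e., read the optimal one-step strategy off each node, and control the accumulated exceptional sets by countable subadditivity of $\|\cdot\|$; the analytic heart — attainment in the limit of the one-step infima, plus the null-set bookkeeping — is the same either way.
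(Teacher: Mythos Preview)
Your proposal matches the paper's argument in structure; the paper does not prove Theorem~\ref{aEComputationWithSimplePortfolios} directly but defers to \cite{bender3} and instead proves the closely related Theorem~\ref{thm:sigma_is_correct} in Appendix~\ref{theoreticalFramework}. Your easy inequality is exactly the paper's second half (add $g=\infty\cdot\mathbf{1}_{\mathcal{N}}$ to a simple superhedge and use $\overline{I}g=0$). For the hard inequality the paper invokes two black-box lemmas from \cite{bender3} applied in the order \emph{truncate, then aggregate}: the Finite Maturity Lemma replaces each $n_m$ by $n_f$ on the complement of the concrete null set $\mathcal{N}$ of trajectories meeting an arbitrage node, and only then does the Aggregation Lemma collapse $\sum_m H^m$ into a single $H$ at the fixed stage $n_f$.

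Your order is reversed---aggregate first (step (a)), then truncate (step (b))---and this creates a wrinkle you gloss over. After aggregation the combined portfolio has no finite maturity if $\sup_m n_m=\infty$, so ``the discarded tail'' is an infinite-horizon object, and the one-line justification ``a positive infimum is an arbitrage, excluded a.e.\ by $(L)$-a.e.''\ is not quite enough: you only have $c\le \Pi^{V,H}_{0,n_f}(X')+\text{tail}(\hat X)$ for \emph{a.e.}\ $\hat X$, and to conclude $c\le\Pi^{V,H}_{0,n_f}(X')$ you need a trajectory on which the tail is $\le 0$ \emph{and} the a.e.\ inequality holds---this is what $(L_{(X',n_f)})$ delivers, but it needs spelling out. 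The paper's order sidesteps the issue because truncation of each $f_m$ individually happens while the pointwise-everywhere inequality $f\le\sum_m f_m$ is still available. Your alternative dynamic-programming route is sound and is precisely how the paper links $\overline{\sigma}_j$ to the computable $\overline{U}_j$ in Section~\ref{sec:Pricing}.
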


\vspace{.1in}
As mentioned, we will construct models for the time evolution of the traded coordinates 
$(X^1_i, X^2_i)$ i.e. a $d=2$ dimensional model. On the other hand,
superhedging will involve a $1$-dimensional portfolio as we only intend to find out how to trade
with asset $X^1$ at prices $X^1_i$  in order to superhedge asset $X^2$ (at a future specified time). The second component of the portfolio, namely the number of numeraire units of a third asset $S^0$, does not need to be an explicit part of the superhedging computation as it is a by-product of the self-financing constraint. More precisely, once our proposed algorithm  determines the amount 
$H_i({\bf X})$ required to superhedge, one then
evaluates $H^0_i({\bf X})$ by means of (\ref{selfFinancing}). Given
this computational setup, we could envision extracting $1$-dimensional trajectories $X^1= \{X^1_i\}_{i\geq 0}$ from multidimensional ones $X= \{X_i\}_{i\geq 0}=\{(X_i^1, X_i^2)\}_{i\geq 0}$, in this way one creates a $1$-dimensional trajectory set from a multidimensional one. 

We observe the following (types of nodes are introduced in Appendix \ref{nodeTypes}): a $2$-dimensional arbitrage-free node will be a $1$-dimensional
arbitrage-free node in either of the dimensions $X^1$ or $X^2$ (however  
a $2$-dimensional arbitrage node may not lead to a $1$-dimensional arbitrage node).
Intuitively, an arbitrage opportunity at a $2$-dimensional node may only be available if we can trade with both assets. On the other hand the possibility of losing money with a $2$-dimensional portfolio at a $2$-dimensional arbitrage-free node will still remain
if we trade only with a single coordinate. To sum up,
when we are calculating superhedging prices using a $1$-dimensional portfolio, we will be implicitly extracting 
$1$-dimensional trajectories (either for $X^1$ or $X^2$) from our $2$-dimensional construction and potentially face nodes that are of type II arbitrage nodes (in a $1$-dimensional sense). We use the fact (from \cite{bender3}) that trajectories passing through a type II node form a null set, and, as explained previously, neglecting arbitrage nodes will not affect the price.  We formalize this interplay between the $2$-dimensional construction and the $1$-dimensional trajectory extraction for further computation  in
the Appendix \ref{sec:Pricing}. In that appendix we also present the details of the superhedging
dynamical programming algorithm as well as how this procedure handles
arbitrage nodes.

\section{Trajectory Set Construction. Main Ingredients}  \label{sec:trajectorySetConstruction}

\vspace{.1in}
We place ourselves in the following financial context: we model the future joint evolution
of discounted  prices, in units of a third stock acting as numeraire, of two assets (we will deal with stocks). 
Our models address the following question:
how much does it cost to superhedge one asset using a second asset?
More specifically, the models allow to evaluate a relative superhedging amount, this is the investment required to set up a portfolio
that trades with one stock and the numeraire in order to superhedge the value (in numeraire units) of one share of the remaining asset. The symmetry of the methodology allows for the flexibility to swap the mentioned assets as well as considering underhedging or changing the numeraire. The modelling framework is robust and so, one expects that the model hypothesis will hold for future unfolding chart values. In other words, the methodology does not rely
on a probability measure which, one expects, may be necessary
to reduce the effect of extreme trajectories affecting prices (we come back to this key topic shortly). To perform the required
superhedging evaluations one needs to model the joint price evolution, this we do through a multidimensional trajectory space. The flexibility of the theoretical framework allows to include trajectories with realistic characteristics. In this paper, the construction of trajectories is dictated by the investment impact on a trading agent as well as for a general worst case methodology where {\it each relevant trajectory counts}.

A natural data structure for representing our trajectory sets, to be described in this section, is a directed graph. During the process of constructing possible
children at a given node $({\bf X},k)$, 
there will be two main steps: 1- children nodes are proposed, 2- some of proposed children may be pruned, i.e. deleted since they do not satisfy certain conditions. 
The first step, the proposal stage, will make $({\bf X},k)$ a \textit{no-arbitrage} (i.e. arbitrage-free) node but the pruning
process may break this property as the two stages work independently. This phenomena we have found to be possible in our algorithmic construction but less likely 
if the pruning is risk averse. More specifically, pruning will be controlled by worst case parameters and the latter can be modified if there is a conscious intention to introduce risk to increase potential profits.

We have adopted a trajectory set construction method that is agent-based and operational (see \cite{ferrando19} for a justification). \\
Nonetheless, it should be clear that our approach is one possibility out of many, in particular, the general scenario-based and superhedging framework is specially suited to
deploy constructions by means of machine learning patterns.

Whenever relevant, model variables will be capitalized, e.g. $X$, while the corresponding observable quantity will be denoted with lower-case letters, e.g. $x$.
The following sections are organized to not only define the components of the models but also provide a rationale for their construction. We also illustrate via software output several aspects and properties of the said constructions.

\subsection{Operational Data Processing}\label{sec:operational_data_processing}
We introduce the framework used to employ an operational approach when constructing trajectory market models. We follow a discretized approach for each quantity and variable and hence, through a slight abuse of notation, we use the notation for an interval $[a,b]$ to represent $[a,b] \equiv [a,b] \cap \Delta \mathbb{Z}$ where 
\[\Delta\mathbb{Z}\equiv\{\hdots, -2\Delta, -\Delta, 0, \Delta, 2\Delta, \hdots\}.\] 
We observe time intervals in a discrete sense, with a smallest possible time resolution $\Delta>0$ (usually in units of minutes). The investor can only observe the market in increments of $\Delta$. We consider historical times to be negatively valued, $0$ being the present time and future times to be positive. Let $\Te$ denote the entire observable time interval of the past.  $\Te$ takes the form $\Te=\{\hdots, -3\Delta, -2\Delta, -\Delta,0\}$.\\

Within the entire historic time interval $\mathcal{T}$ we will consider non-overlapping subsets $[t_0, t_0+T]\subseteq\mathcal{T}$ where $T\equiv M_T\Delta>0$ and $M_T$ is the number of $\Delta$ increments occurring in $[t_0, t_0+T]$, i.e. $[t_0,t_0+T]\equiv\{t_0, t_0+\Delta, t_0+2\Delta,\hdots, t_0+M_T\Delta\}.$ 
In our specific trading setting, $[t_0, t_0+T]$ refers to a day's worth of trading time but any interval could be represented as $[t_0, t_0+T]$ in general. Given that $T$ is a fixed quantity, we sometimes write $I_{t_0}\equiv [t_0,t_0+T]$   as a shorthand, with $t_0\in\{\hdots,-3T-2, -2T-1,-T\}$. We often refer to $\It$ as a \emph{time interval}, \emph{time window} or simply \emph{window}. 
We also make use of the disjoint collection of time intervals, namely: 
\begin{equation} \nonumber
 I=\{\hdots,[-3T-2,-2T-2],[-2T-1,-T-1],[-T,0]\} = 
\end{equation}
\begin{equation} \nonumber 
 \{\hdots, I_{-3T-2}, I_{-2T-1}, I_{-T} \}=\{\It: t_0\in\{\hdots, -3T-2, -2T-1,-T\}\}.    
\end{equation}  


Let \textit{undiscounted charts} refer to  historical market quoted prices: 
$s(t) = \big(s^0(t), s^1(t),s^2(t)\big)$ at some time $t\in \Te.$ 
We may obtain discounted prices with respect to a numeraire  by means of taking a ratio, through which  we get \textit{discounted charts} or simply \textit{charts}:
\begin{equation}\nonumber
x^j(t)\equiv \frac{s^j(t)}{s^0(t)},\ \ \  j=0,1,2,\\
\end{equation}
and $x(t)=(x^1(t),x^2(t)).$ When referring to  entire charts we denote $x=\{x(t): t\in\mathcal{T}\}$ and $x^j=\{x^j(t): t\in\mathcal{T}\}$. There is also another chart $x^0(t)\equiv 1$ which remains constant for all $t\in\mathcal{T},$ however this chart is not an explicit part of our construction/notation.  

It is assumed that both discounted and undiscounted charts move in discrete integer multiples of a minimal unit change. Undiscounted charts move in increments of some smallest unit of currency per unit of asset, while the issue is more ambiguous for discounted charts given that the quotient of two quantities may be any rational number, i.e. there is no smallest unit of numeraire. In practice the investor observes the entire historical data and uses their own methodology to determine the discretization parameters. \\

\subsection{$\delta$-Escapes and $\delta$-Escape Times} \label{deltaEscapeTimes}

This section prescribes how our agent rebalances her portfolio in response
to the change of chart values. Such a setup we term operational to emphasize that
rebalances are associated to observable variables. This association represents the main driving force
behind our model construction and, as argued in \\
\cite{ferrando19}, it makes our models objective. That is, there is a way to relate the unfolding chart values to the associated modelling trajectory. 
There is, of course, a plethora of ways to prescribe such operational setup and 
our proposal is a possible one. For this reason, we try to emphasize some
general aspects of our construction so that the arguments can be adapted
to different settings.

Martingale processes underline much of financial price modelling due to the first fundamental theorem
of asset pricing (\cite{follmer}). It is therefore relevant to rely on some of their properties
when prescribing our operational setting. Remarkable martingale properties are given by some type of regularity of their oscillations, this is reflected in inequalities for the number of upcrossings, e.g. Doob's and Dubin's upcrossing inequalities (\cite{neveu}) or Burkholder's inequality 
for the number of escapes of a martingale (\cite{burkholder}). Recent results on non-probabilistic martingales, which are closely related to  no-arbitrage considerations, 
like the ones in \cite{shafer} (see also \cite{bender3}) indicate that the mentioned results are probability independent and so represent a phenomena of greater scope. These comments should
be seen as motivation for the definitions that follow.

$\delta$-escapes and $\delta$-escape times, are introduced below for two models, henceforth denoted by A and B. Definitions assume, implicitly, a given chart $x$ as well as a given time interval $I_{t_0}=[t_0, t_0+T]$.

\begin{definition}[Model A] \label{modelA}
For given parameter  values $\delta^{0}, \delta^1>0$ and $t_0 \leq t' < t \leq t_0+T$, define the following $\delta$-escapes as
\begin{equation}\label{eqn::deltaescapes1}
\delta^{A,0}_{escape}(x,\It,t,t')\equiv  {\lvert x^1(t)-x^1(t')\rvert}
\end{equation} 
\begin{equation} \nonumber 
\delta_{\text{escape}}^{A,1}(x,\It,t,t')\equiv  \frac{\lvert x^2(t)-x^2(t')\rvert}{\lvert x^2(t') \rvert},
\end{equation}
For $i\geq 1,$ define the $i$-th $\delta$-escape time recursively to be 
\begin{equation} \nonumber 
\begin{split}
t_{i}\equiv\min\{~t:~ t_{i-1} < t\leq t_0+T:\quad ~ \delta^0\leq \delta_{escape}^{A,0}(x,\It,t,t_{i-1}) \quad\text{or}\\ \quad\quad\quad\quad \delta^1\leq \delta_{escape}^{A,1}(x,\It,t,t_{i-1})\}
\end{split}  
\end{equation}
whenever the set on the right-hand side is non-empty, otherwise $t_i$ is left undefined.  
\end{definition}

\begin{definition}[Model B] \label{modelB}
For given parameter  value $\delta^{B} > 0$ and $t_0 \leq t' < t \leq t_0+T$, define the following $\delta$-escape as
\begin{equation} \nonumber 
\delta_{\text{escape}}^B(x,\It,t,t')\equiv\max \left( \frac{\lvert x^2(t)-x^2(t') \rvert}{\lvert x^2(t') \rvert},\frac{ \lvert x^1(t))-x^1(t')\rvert}{\lvert x^1(t')\rvert}\right).\\
\end{equation}
For $i\geq 1,$ define the $i$-th $\delta$-escape time recursively to be 
\begin{equation} \nonumber 
    t_i\equiv\min\{~t:~ t_{i-1} < t\leq t_0+T\quad\text{and}\quad ~ \delta^B\leq \delta_{escape}^B(x,\It,t,t_{i-1}) \}
\end{equation}
whenever the set of the right-hand side is nonempty, otherwise $t_i$ is left undefined. For $i=1$, set $t_{i-1}\equiv t_0$. 
\end{definition}

For either model, we say that the $i$-th  $\delta$-escape occurred if $t_i$ is defined and let $N$ denote the largest integer $i$ such that $t_i$ is defined. If $t_1$ is not defined set $N=0$. 
Then, $\{t_i\}_{0\leq i\leq N}$ is the \emph{sequence of $\delta$-escape times} and $N$ is the \emph{number of $\delta$-escape times} occurring within the time interval $[t_0, t_0+T].$  
Notice that $t_0$ is the first element of the time interval  and it is always included in the sequence, and $t_0<t_1<\hdots < t_N\leq t_0+T$. For future use we also set $t(I_{t_0}) \equiv \{t_i\}_{0 \leq i \leq N}$ for either of the two models. We will use the notation $N= N(x, I_{t_0})$.

\subsection{Discretization}
While the notation $x^j$ is originally used to represent exact asset values, we need to contend with discretized variables when building our trajectory sets. 
For $j=1,2,
$ we let $\hat{\delta}^j$ be the  \emph{discretization parameter} for discounted chart $x^j$ and let $\lfloor\cdot\rfloor$, $\lfloor\cdot\rfloor_{\hat{\delta}^j}$  denote rounding to the nearest integer and rounding to the nearest integer multiple of $\hat{\delta}^j$, respectively. We may then define $k^j(t)\in\mathbb{Z}_+$  by  $\lfloor{x}^j(t)\rfloor_{\hat{\delta}^j}=k^j(t)\hat{\delta}^j$. Of course, $\lfloor{x}^j(t)\rfloor_{\hat{\delta}^j}$ may or may not represent exact historical values anymore, depending on the choice of numeraire $s^0$ and the investor's choice of $\hat{\delta}^j$. The integers $k^j$ represent the discretized movement of the $j$-th asset. 
Occasionally we place the time variable as an index in order to emphasize its discrete nature, in that case we will introduce 
$k_{n \Delta}^j$  such that $\lfloor{x}^j(t)\rfloor_{\hat{\delta}^j}= k_{n \Delta}^j \hat{\delta}^j$ and $k^j(t) = k_{n \Delta}^j$ with $t =t_0+ n \Delta$, $0 \leq n \leq M_T$, where $[t_0, t_0+T]$ is given. The above discretization methodology becomes actually exact when data  has a smallest discrete unit, this is the case of prices listed in terms of currency.

\subsection{Variation}  \label{sec:Variation}
We define next \emph{accumulated variation} (or, as a shorthand, \emph{variation}) to be a convenient  representation of a chart accumulated historical movements. This is one of the additional coordinates $Z_i$ that were
introduced in an abstract way in Definition \ref{trajectories}.
At a  given time interval $\It=[t_0, t_0+T]\in I$, for $t\equiv t_0 + n~\Delta\in[t_0, t_0+T]$ with $n \in \mathbb{Z}_+, ~0 \leq  n \leq M_T$, we have the set of discretized asset  values $\lfloor{x}^j(t)\rfloor_{\hat{\delta}^j}= k^j_{n \Delta}~\hat{\delta}^j, ~j=1,2$ (or the alternative notation $k^j(t) \equiv k^j_{n \Delta}$). By means of these quantities,  we define the accumulated variation:
\begin{equation}\label{w}
    w(t) =\sum_{n=0}^{v-1}\lvert k_{(n+1)\Delta}^1-k_{n\Delta}^1\rvert+\lvert k_{(n+1)\Delta}^2-k_{n\Delta}^2\rvert,~\mbox{where}~~~t= t_0+ v \Delta, ~~
\end{equation}
and $w(t_0)=0$. Thus, $w(t)$ aggregates, as $t$ evolves on $[t_0, t_0+T]$, the combined integer movements of both charts without explicit reference to any discretization parameter, i.e. $w(t)\geq 0$ is integer valued and unitless. Given that we are primarily interested in modelling future asset values, the variation is introduced primarily as an instrument for pruning, i.e. by introducing the  empirically measurable variable $w(t)$, we are able to further restrict future scenarios via constraints available from a worst case perspective of the past (see Section \ref{pruning1} and Appendix \ref{additional_pruning_constraints} for details on constrains). From this point of view, it is clear that other quantities
could be used as well (e.g., a discrete version of quadratic variation).
We later  introduce (see  Appendix \ref{additional_pruning_constraints}) various definitions of accumulated variation as a function of other variables as needed, however, they are all derived from (\ref{w}).

\subsection{The Empirical Set $N_E$} \label{sec:theEmpiricalSet}
A key definition in our model construction is the  empirical set $N_E$ introduced below. It is the set of relevant (to the investor) joint observable price increments that occur at all $\delta$-escape times and gathered over all historical time windows. We build the set of empirical changes $N_E(x,I_{t_0})$ for the time window $\It$ in the following way: moving along a fixed time interval $I_{t_0}=[t_0, t_0+T]$, we recursively build the $\delta$-escape times $\{t_i\}_{0\leq i\leq N}$, according  to  Definition \ref{modelA} (for Model A) or Definition \ref{modelB} (for Model B)
and collect vectors $\left(\Delta_{t_i}x^1, \Delta_{t_i}x^2, 1,\Delta_{t_i}t, \Delta_{t_i}w\right)$ 
representing variable increments between  $\delta$-escape times. 
More precisely,  the following various increments are defined for $0\leq i\leq N-1$ and for each of the two models:
\begin{equation*}
    \lfloor\Delta_{t_i}x^j\rfloor_{\hat{\delta}^j} \equiv \lfloor x^j(t_{i+1})\rfloor_{\hat{\delta}^j}-\lfloor x^j(t_i)\rfloor_{\hat{\delta}^j}=(k_{t_{i+1}}^j-k_{t_i}^j)\hat{\delta}^j
    \equiv m_i^j\hat{\delta}^j, \quad j=1,2
\end{equation*}
\begin{equation*}
    \Delta_{t_i}t\equiv t_{i+1}-t_i
    \equiv q_i \Delta
\end{equation*}
\begin{align*}
    \Delta_{t_i}w&\equiv w(t_{i+1})-w(t_i)=\sum_{n=u}^{v-1}\lvert k^1_{(n+1)\Delta}-k^1_{n\Delta}\rvert +\lvert k^2_{(n+1)\Delta}-k^2_{n\Delta} \rvert \equiv\eta_i,
\end{align*}
where in the last line, $t_i\equiv u\Delta$ and $t_{i+1}\equiv v\Delta$ for some $0\leq u< v\leq M_T, ~u,v\in\mathbb{Z}_+.$ If $N=0$ then all of the aforementioned variable increments are set to $0$. 
Notice $m_i^j, q_i, \eta_i$ are all integers (non-negative in the case of $q_i$ and $\eta_i$), and variation has no discretization parameter (and is also, as a consequence, unitless).

Hence the set of chart changes over the window $\It$ is the collection of all such vectors, rounded to their nearest discretization parameters:
\begin{equation} \nonumber
N_E(x,\It)=\{\left(m_i^1, m_i^2, 1, q_i, \eta_i\right) \equiv ,\:\
\end{equation}
\begin{equation} \nonumber
\left(\frac{\lfloor\Delta_{t_i} x^1\rfloor_{\hat{\delta}^1}}{\hat{\delta}^1}, \frac{\lfloor\Delta_{t_i} x^2\rfloor_{\hat{\delta}^2}}{\hat{\delta}^2}, 1, \frac{t_{i+1}-t_i}{\Delta}, \Delta_{t_i}w\right)~t_i \in t(\It)\},
\end{equation}
where the notation $t(\It) = \{t_i\}_{\{0 \leq i \leq N\}}$ was introduced at the end of Section \ref{deltaEscapeTimes} and  $N=N(x,\It)$. 
The third coordinate comes from the fact that $1=(i+1)-i.$ Notice that the size of $N_E(x,\It)$ is determined by the number of $\delta$-escapes, i.e. whenever $N\geq 1$ i.e. $\lvert N_E(x,\It)\rvert =N\equiv N(x,\It).$ In the case where $N=0$,  $N_E(x,\It)$ consists of the $(0,0,1,0,0)$ vector.


The \textit{set of empirically measured chart changes} is the collection of all such vectors $\left(\frac{\lfloor\Delta_{t_i} x^1\rfloor_{\hat{\delta}^1}}{\hat{\delta}^1}, \frac{\lfloor\Delta_{t_i} x^2\rfloor_{\hat{\delta}^2}}{\hat{\delta}^1}, 1,\frac{t_{i+1}-t_i}{\Delta}, \Delta_{t_i}w\right)$, $t_i \in t(\It)$  over all time windows $\It\in I$. For clarity, we define this set to be:

\begin{equation}\nonumber
   N_E\equiv N_E(x,I)\equiv\bigg\{\left(m_i^1, m_i^2, 1, q_i, \eta_i\right)\in N_E(x, I_{t_0}): \It\in I,  \bigg\},  
\end{equation}
(where $I$ was introduced in Section \ref{sec:operational_data_processing}) with the size of $N_E$ being given by $\lvert N_E \rvert= \lvert \cup_{\It\in I} N(x,\It) \rvert.$ Figure \ref{fig:good_N_E} demonstrates a convex hull under model B.
We indicate that, as a result of empirical observation, the vertices of the convex 
hull generated by  the first two coordinates of $N_E$ are the ones that seem to affect the value of $\overline{\sigma}_i$ at least in the current setting. It should be clear that this fact, whenever holds, could be used to compute
prices very efficiently by the backwards dynamic algorithm described in Appendix \ref{sec:Pricing}  .  

\begin{figure}
    \centering
    \includegraphics[scale=.5]{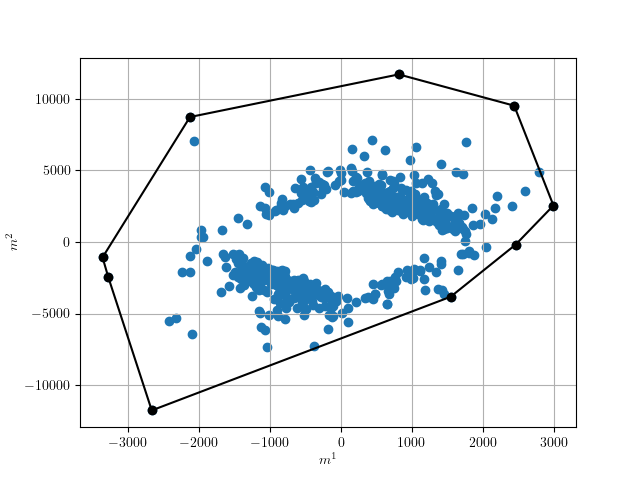}
    \caption{The first two components $(m^1, m^2)$ of set $N_E$ constructed under model B from historical data, with $\delta^B=0.011.$ 
     Black lines denote the convex hull of this set, with black points denoting the vertices.}
    \label{fig:good_N_E}
\end{figure}

\subsection{Pruning Constraints} \label{pruning1}
In order to limit the growth of the trajectory set and, more substantially, to have our trajectories reflect more closely the past historical data, we introduce \emph{pruning constraints} or \emph{pruning functions}. These constraints are determined by the variables which are used as additional modelling coordinates and are an essential feature of the models. The most suitable selection of variables will give tight (in terms of lower and upper bounds) and stable (in terms of historical data aggregation) worst case constraints. Moreover, the variables for our models should be chosen so as to complement each other; each variable ought to pick up different characteristics. 
The purpose of the pruning functions is to constrain different variables to a smaller set of possibilities; we proceed with a worst case methodology, namely, {\it we construct trajectory models that do not contain worst case scenarios not appearing in historical data}.
Section \ref{smallArbitrages} provides an alternative, theory based, approach to pruning.

Each constraint is a pair of functions, a maximum and minimum quantity denoted by $^*$ and $_*$ respectively, constructed by relying on a given chart $x$ and all historical windows $\It\in I$. A third argument for each constraint represents one of the following modelling variables: time $\rho$, number of $\delta$-escapes $i$ or variation $w$.  The pruning constraints will depend on the model being chosen, being either A or B; this dependency is implicit in the definitions (e.g. through the dependency on rebalancing times).

\vspace{.1in}
We provide one example of a pruning constraint in Definition \ref{def::typeIpruningconstraint_N_vs_T} below, other examples are relegated to Appendix  \ref{additional_pruning_constraints}.

\begin{definition}[Historical Maximum and Minimum Number of $\delta$-Movements at Time $\rho$] \label{def::typeIpruningconstraint_N_vs_T}
For a given chart $x,$ time interval $\It$, portfolio rebalances times $t(\It)=\{t_i\}_{0\leq i\leq N}$ and $\rho\in\{0,\Delta, \hdots, M_T\Delta\},$ define the number of $\delta$-movements in the time interval $[t_0, t_0+\rho]\subseteq I_{t_0}$ by:
\begin{equation} \nonumber 
 N(x,\It,\rho)= \max_{0\leq i\leq N}\{~i :~ t_i \leq t_0+\rho\}.
\end{equation}
Then, the corresponding Maximum and Minimum number of $\delta$-movements at time $\rho$ are defined as: 
\begin{equation}\label{eqn::typeIpruningconstraint_N_vs_T}
N^*(x,I,\rho) = \max_{I_{t_o}\in I}\:\: N(x,\It,\rho), \quad N_*(x,I,\rho) = \min_{I_{t_o}\in I}\:\: N(x,\It,\rho)
\end{equation}
for $\rho\in\{0,\Delta, \hdots, M_T\Delta\}.$
\end{definition}
While constructing model trajectories,  the
proposed future trajectory segment, with time coordinate $T_i = \rho$, will be discarded if the following constraint is not satisfied.
$N_*(x,I,\rho) \leq i \leq N^*(x,I,\rho)$.
For illustration, the effects of pruning are demonstrated in Figure \ref{fig:pruning}
while Figure \ref{fig:numberOfRebalances} displays the minimum and 
maximum number of rebalances.
\begin{figure} 
    \centering
        \includegraphics[scale=.65]{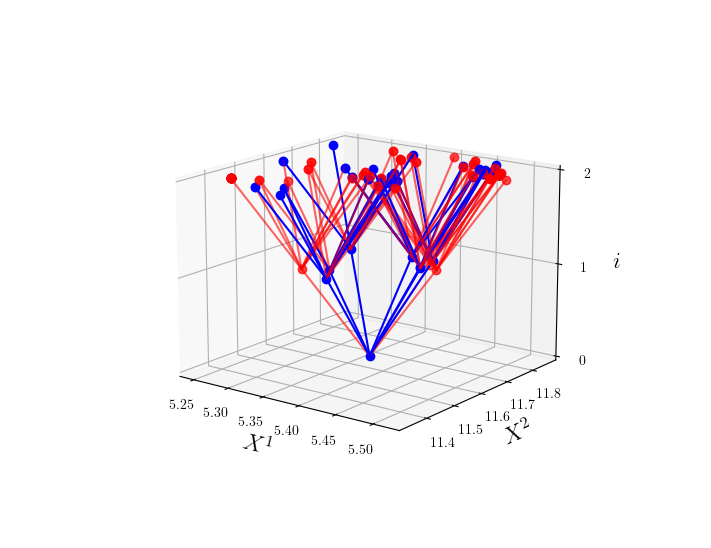}
    \caption{Trajectory set consisting of nodes $(X_i^1, X_i^2, i)$ up to $N({\bf X})=2$ where red trajectories (consisting of nodes and edges) are pruned according to pruning constraints and blue trajectories which remain after pruning (such remaining nodes will be called {\it admissible}).} 
\label{fig:pruning}     
\end{figure}

\begin{figure}
    \centering
    \includegraphics[scale=.60]{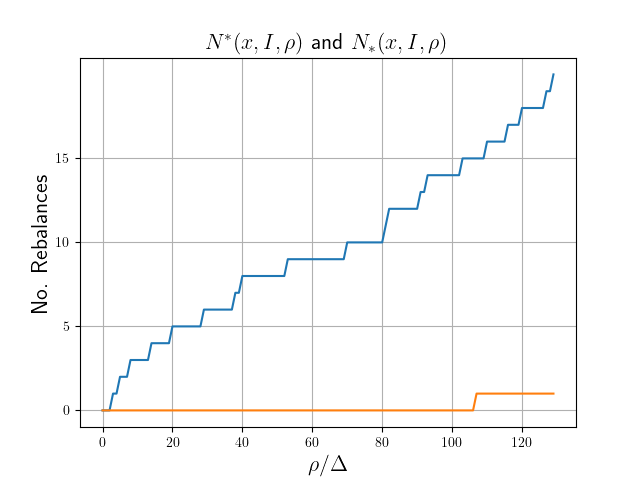}
    \caption{A pair of pruning constraints. $\delta^B=0.15$ (as per Definition \ref{modelB}), $\hat{\delta}^1=\hat{\delta}^2=0.01$ (as introduced in Section \ref{sec:theEmpiricalSet}). $\rho/\Delta\in\{0, \hdots, M_T\}$ and $M_T=130.$ In this example, the pruning constraints are monotone non-decreasing, although in general this may not be  the case.}
    \label{fig:numberOfRebalances}
\end{figure}

\section{Model Specification} \label{modelSpecification}
This section makes explicit the trajectory sets $\mathcal{X}$ generated under models A and B. We call elements of $\mathcal{X}$ \emph{model trajectories}, they are constructed recursively by relying on historical values of observed charts. Models A and B are differentiated by the way $\delta$-escape times are defined and the repercussions of this fact (e.g. the sets $N_E$, pruning constraints, etc) otherwise, the method of trajectory generation is identical in both cases. 

A trajectory set $\mathcal{X}$ consists of trajectories ${\bf X}$ i.e. sequence of multidimensional vectors ${\bf X} \equiv \{{\bf X}_i\}_{i\geq 0}\in\mathcal{X}$  where ${\bf X}_i = (X_i, Z_i) \equiv (X^1_i, X^2_i, i, T_i, W_i)$ is often referred to as a \emph{node}. A trajectory may also  be thought of as a set of nodes connected by (directed) edges. The empirical historic chart counterparts are denoted in lowercase i.e. the historic charts $x^1, x^2$ and $t_i, w_i$. $X^1_i, X_i^2$ represent model asset values, $T_i$ represents model time values at the $i$'th  $\delta$-escape time, and $W_i$ represents the (modelled) accumulated variation of the two dimensional vector $X_i\equiv (X_i^1, X^2_i)$ at the $i$'th  $\delta$-escape time.
These associations between model and historic data  indicate that our trajectories lie within a discrete grid of points based on the historic time parameter $\Delta$ and the investor calibrated parameters $\deltahatone$, $\deltahattwo$: ${\bf X}_i \equiv (X^1_i, X^2_i, i, T_i, W_i) \in (\deltahatone \mathbb{Z} \times \deltahattwo \mathbb{Z}\times \mathbb{Z}_+ \times \Delta \mathbb{Z}_+ \times \mathbb{Z}_+)$.

The empirical parameter $N$, introduced after Definitions \ref{modelA} and \ref{modelB},
will be represented in the model by an integer $N({\bf X})$. Our trajectories
will terminate after a finite number  of steps, i.e. ${\bf X} \equiv \{{\bf X}_i\}_{0 \leq i \leq N({\bf X})}$ the last (potential) trade takes place at instance 
$N({\bf X})-1$ along the model trajectory ${\bf X}$, the model coordinate
$T_{N({\bf X})}$ does not necessarily correspond to terminal time $T$.

We begin with the initial state ${\bf X}_0 = (X^1_0, X^2_0, 0, T_0, W_0)$ where $T_0=0, W_0=0$ and $X_0^1=x^1(0), X_0^2=x^2(0)$ are the most recent discounted chart values. Relying on an empirical set $N_E$, corresponding to a specific model A or B,  we generate trajectories recursively on the index $i$, referred informally as \emph{time steps}. More precisely, the index $i$ refers to the (potential) $i$-th.   \emph{portfolio rebalance} that takes place along the model's trajectory. That is, we are building values of trajectories not at  all possible times but at specific times associated to historical $\delta$-escapes, the latter triggering portfolio rebalances. For example, along a modelling trajectory, $T_i$ and $W_i$ tell us the modelling time and variation at the $i$-th. (potential) portfolio rebalance 
while $i$ tells us the number of portfolio rebalances so far. 

Given a node ${\bf X}_i =(X^1_i, X^2_i, i, T_i, W_i)$, $i\geq 0$,  we build trajectories by iteration on $i$  in the following way: for each $(m^1, m^2, 1,  q,\eta)\in N_E$ (the latter set was introduced in Section \ref{sec:theEmpiricalSet}) set ${\bf X}_{i+1}=(X_{i+1}^1, X_{i+1}^2, i+1, T_{i+1}, W_{i+1})$ where
\begin{align*}
X_{i+1}^1&=X_i^1+\Delta_i X^1= X_i^1+m^1\hat{\delta}^1,\\
X_{i+1}^2&=X_i^2+\Delta_i X^2= X_i^2+m^2\hat{\delta}^2,\\
T_{i+1}&=T_i+\Delta_iT= T_i+ q\Delta,\\
W_{i+1}&=W_i+\Delta_iW = W_i+ \eta.
\end{align*} \
This process is then continued in a recursive manner.
That is, we are recursively adding the set $N_E$ to the most recent node of each trajectory  and so the number of trajectories grows exponentially fast with at most $\lvert N_E \rvert^i$ new trajectories available at time step  $i.$ Figure \ref{fig:nice_CH_plot} illustrates the construction but only for the first two variables. 
 \begin{figure}
     \centering
     \includegraphics[scale=.6]{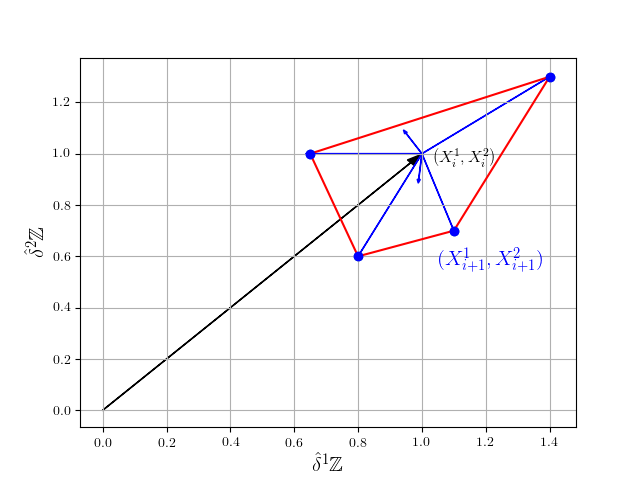}
     \caption{A two dimensional representation of how the children $(X^1_{i+1},X_{i+1}^2)$ (denoted in blue) are generated from $(X_i^1, X_i^2)$ (denoted in black). Notice $(X_i^1, X_i^2)$ is in the closure of the convex hull (denoted in red) of the generated children nodes $(X^1_{i+1},X_{i+1}^2)$. In this example $\deltahatone=0.01=\deltahattwo.$}
     \label{fig:nice_CH_plot}
 \end{figure}
 
\subsection{Dynamic Pruning}
Once all the possible future nodes ${\bf X}_{i+1}$ are built from  the current node ${\bf X}_i$ by means of the set $N_E$, we then check whether these future states are historically realistic, i.e. we prune ${\bf X}_{i+1}$ according to the pruning constraints introduced in Section \ref{pruning1} and  Appendix \ref{additional_pruning_constraints}. If ${\bf X}_{i+1}$ obeys the pruning constraints set by the investor, we say ${\bf X}_{i+1}$ is \emph{admissible}. Below, for future reference, we let $N_A({\bf X}_i)$ to  be the set of admissible nodes constructed recursively from ${\bf X}_i$ i.e:
 \begin{equation} \label{admissible}
     N_A({\bf X}_i)\equiv\Big\{{\bf X}_{i+1} \equiv(X_i^1+m^1\hat{\delta}^1,X_i^2+m^2\hat{\delta}^2, i+1, T_i+q\Delta,  W_i+\eta)~\text{is ~admissible}:
\end{equation} 
\begin{equation} \nonumber    
      \quad(m^1,m^2,1,q,\eta)\in N_E\Big\}
 \end{equation}

More specifically,  to check for admissibility we will rely on the notation: $X^*(i)$, $X_*(i)$, $N^*(T_i)$, $N_*(T_i)$, $N^*(W_i)$, $N_*(W_i)$, $T^*(i)$, $T_*(i)$, $T^*(W_i)$, $T_*(W_i)$, $W^*(i)$, $W_*(i)$, $W^*(T_i)$, and $W_*(T_i)$, to denote the evaluation of the maximum/minimum pruning constraints, introduced in Appendix \ref{additional_pruning_constraints}, at node ${\bf X}_i=(X_i^1, X_i^2, i, T_i, W_i)$ respectively (for simplicity, we are suppressing the arguments $x$ and $I$ used in Appendix \ref{additional_pruning_constraints}). Notice that, in contrast to the definitions in Appendix \ref{additional_pruning_constraints} (see also Section \ref{pruning1}), we are now evaluating the constraints not at all times $\rho$ but only at the generated times $T_i$. Similarly, whenever variation is an argument, we only evaluate at the specific  variations $W_i$.

Hence ${\bf X}_{i+1}\equiv
(X_i^1+\Delta_iX^1,X_i^2+\Delta_iX^2, i+1, T_i+\Delta_iT,  W_i+\Delta_iW)$
is admissible if ${\bf X}_{i+1}$ satisfies the following pruning constraints:
\begin{align*}
    \frac{\|(X^1_{i+1},X^2_{i+1}) - (X^1_0,X^2_0)\|}{\|(X^1_0,X^2_0)\|} &\in [X_*(i+1), X^*(i+1)]
    \nonumber\\
    (i+1) &\in [N_*( T_i+\Delta_iT]), N^*(T_i +\Delta_iT)]\\
    (i+1)& \in [N_*(W_i + \Delta_iW), N^*(W_i + \Delta_iW)]\\
    T_i+\Delta_iT &\in \big[T_*({i+1}), T^*({i+1})\big]\\
T_i+\Delta_iT &\in \big[T_*(W_i+\Delta_iW), T^*(W_i+\Delta_iW)\big]\\
W_i + \Delta_iW &\in [W_*(i+1), W^*(i+1)]\\
W_i + \Delta_iW &\in [W_*(T_i+\Delta_iT), W^*(T_i+\Delta_iT)]. 
\end{align*}
We will say that {\it dynamic pruning} is in effect when the above constrains are being enforced during the trajectory set construction.
\section{Data and Calibration}  \label{dataAndCalibration}

 Historical data is observed in increments of 3 minutes, with trading occurring daily between 9:30 am  to 4:00 pm EST, hence we set $\Delta=3~\mbox{minutes}$ and $M_T=130$ time increments occurring during the day,  which results in $T\equiv M_T\Delta=390$ minutes or six and a half hours of daily trading time.

We collected historic stock data of Twitter, Facebook and Netflix. The data was collected in increments of 3 minutes between 9:30am and 4:00pm between 2018-05-09 and 2018-10-15. 
\begin{figure}[!h]
    \centering
    \includegraphics[scale=.65]{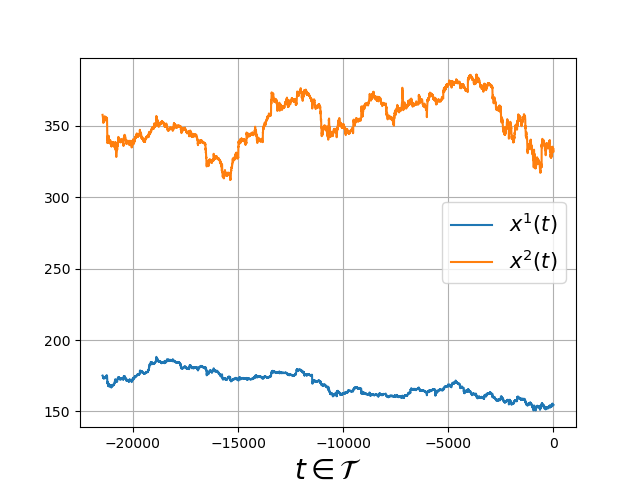}
    \caption{Historic data for charts  $x^2$ and $x^1$, where $x^2$ is the price of
the Netflix stock, $x^1$ is the price of the Facebook stock, both in units of the U.S. dollar. Times are given in increments of $\Delta$, and are negative to reflect that they are historical.}
\label{fig:stocks}
\end{figure}

\subsection{Calibration}
Before an investor can build pruning constraints, the empirical set $N_E$ and other historical estimations, they are required to select appropriate values for the parameters $\delta, \hat{\delta}^1$ and $\hat{\delta^2}$, a process which we refer to as {\it calibration}. These calibrated
parameters will have a direct effect on various outcomes of
our trajectory models. We have eliminated the need for many parameters found in \cite{crisci} for example, the comparable models in \cite{crisci} contain the following parameters which are required to be calibrated by an investor: $\delta, \delta_0, \deltahatone, \deltahattwo, \hat{\nu}_0$ (for definitions of $\delta_0$ and $\hat{\nu}_0$ see Section 3.1 and Section 5.2.1 in  \cite{crisci}). Our careful choice in selecting the methods for sampling $\delta$-escape times as well as redefining accumulated variation to be unitless leads us to require only the following list of parameters to be calibrated: $\delta, \deltahatone$ and $\deltahattwo$ (for Model A, $\delta$ refers to actually two parameters: $\delta^0$ and $\delta^1$, see Definition \ref{modelA}).

Given that models A and B are differentiated by the way they generate the $\delta$-escape times, we calibrate $\deltahatone, ~\deltahattwo$ to be the same for both models. In the case that we take the numeraire to the U.S Dollar, we set $\deltahatone, ~\deltahattwo=\$0.01$ 
 i.e. the smallest observed historical increment. Otherwise, when an investor is dealing with an asset numeraire, the choice of $\deltahatone$ and $\deltahattwo$ is not clear and will require an application-dependent calibration (this is a generic problem in price modelling, i.e. it is not specific to our methodology).

 In calibrating $\delta$ for each model A and B (in fact model A requires two such $\delta$-parameters) we demonstrate how the number of $\delta$-escapes impact other aspects of the models for a range of values for $\delta$.

A complete model description also requires that we calibrate $N({\bf X})$ i.e. the maximum number of $\delta$-escapes allowed for each of our modelled trajectories.
We note that, to reduce model risk, one wants to set $N({\bf X})$ to a historical minimum. For example, for model B, Figure \ref{fig:Calibration_Nstar} shows the maximum and minimum number of observed $\delta$-escapes over all historic trajectories are relatively constant, as a function of $\delta$, around the value of $\delta=0.011$. Within this region, the minimum number of observed $\delta$-escapes is found to be 3, meaning that for $\delta=0.011,$ \emph{all} historical trajectories have an observed number  of $\delta$-escapes which is not smaller than $3$. Therefore, to make sure that model trajectories match historical charts, we will require $N({\bf X}) \leq 3$. Notice that $N({\bf X}) < 3$ merely indicates that, in our model, trading stops before all $\delta$-escapes take place in the unfolding chart (which, implicitly, we are assuming to have at least $3$ $\delta$-escapes). That is, our model trajectories will have at most the historical minimum number of $\delta$-escapes. Small values of $N({\bf X})$ minimize modelling risk (as longer modelling trajectories are more likely not to reflect real trajectories). The choice  $\delta=0.011,$
gives some stability, which suggests a reduction on modelling risk, we also notice that the gap $N^*(x, I,\rho)- N_*(x, I,\rho)$ is being minimized with this choice and so pruning is maximized.

\vspace{.1in}
In general, stable and tight worst case bounds of a variable is an indication that the selected variable restricts the future manifold of possible trajectories. The actual values of the bounds and their stability also provide the possibility to calibrate to
realistic market conditions and to adjust modelling risk according to investing profiles. Illustrative output appears in Figures \ref{fig:Calibration_Nstar}, \ref{fig:my_labelLower} and 
\ref{fig:my_labelUpper}.

\begin{figure}[!h]
    \centering  
    \includegraphics[scale=.65]{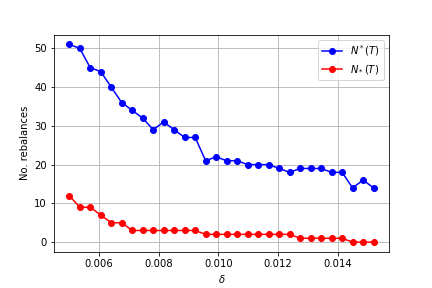}
    \caption{$N^*(x,I,\rho)$ and $N_*(x,I,\rho)$ over a range of values for  $\delta$ at $\rho=M_T\Delta=T$. Notice the relative stability for both $N^*(T)$ and $N_*(T)$ in the region $0.01\leq\delta\leq 0.012.$}
    \label{fig:Calibration_Nstar}
\end{figure}

\begin{figure}[!h]
    \centering
    \includegraphics[scale=.6]{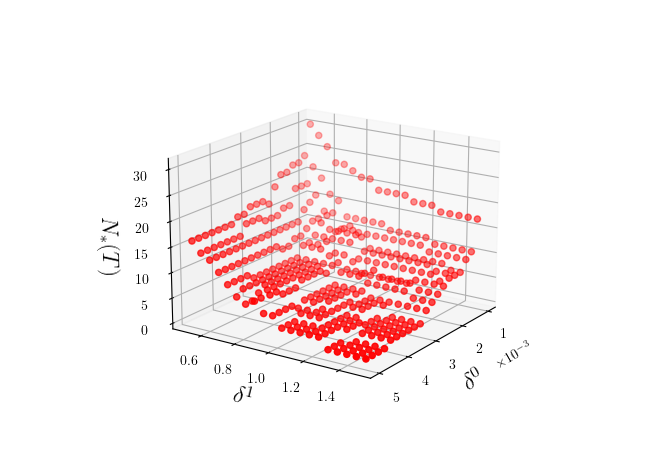}
    \caption{$N_*(x,I,\rho)$ at $\rho=M_T\Delta=T$ over a range of values for  $\delta^0$ and $\delta^1$ (Model A). Notice the relative stability of  $N_*(T)$ in the rectangle $0.002\leq \delta^0\leq 0.004$ and $0.8\leq\delta^1\leq 1.2.$}
    \label{fig:my_labelLower}
\end{figure}

\begin{figure}[!h]
    \centering
    \includegraphics[scale=.6]{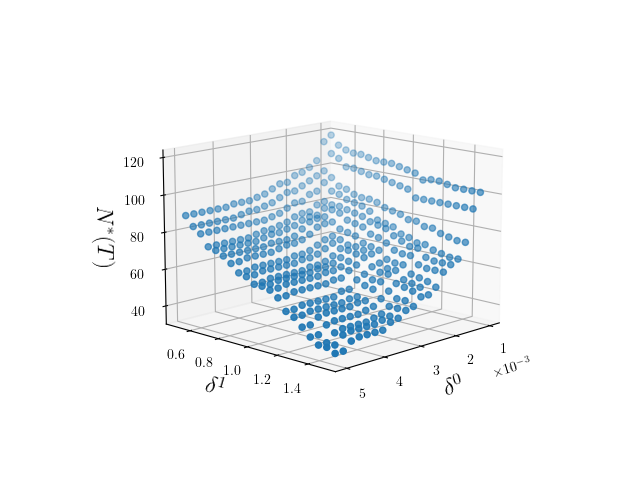}
    \caption{$N^*(x,I,\rho)$ at $\rho=M_T\Delta=T$ over a range of values for  $\delta^0$ and $\delta^1$ (Model A). Notice the relative stability of  $N^*(T)$ in the rectangle $0.002\leq \delta^0\leq 0.004$ and $0.8\leq\delta^1\leq 1.2.$}
    \label{fig:my_labelUpper}
\end{figure}

\clearpage
\pagebreak[4]
\subsection{Using Geometric Brownian Motion as Data}
The purpose of this section is to simulate longer periods of data than the empirical periods that we had access to. We do this in order to study the stability of various objects in our modelling approach, such as the convex hull of $N_E$ and pruning constraints.

In order to bypass the limited availability of historic data as well as an interesting conceptual exercise, we simulate stock prices over up to 5 years and observe the long-term behaviour. Figure \ref{fig:BM_convexhull_growth} demonstrates the shape of the convex hull of the two-dimensional set of points $(m^1, m^2)$ derived from $N_E$ for Model B. Figure \ref{fig:BM_pruning_growth} demonstrates the effects of increasing the aggregation of data on one pair of pruning constraints. In particular, neither the shape of the convex hull nor the bounds of the pruning constraints are affected significantly for large amounts of data.
\begin{figure}[!h]
    \centering
    \includegraphics[scale=.65]{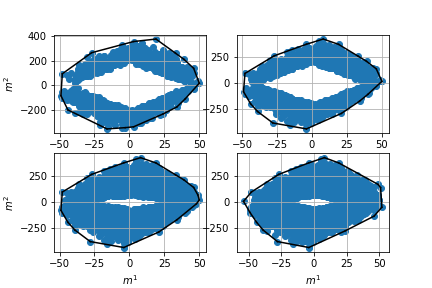}
    \caption{Growth of the convex hull, over (left to right/up to down) 6 months, 1 year, 2 years and 5 years. $\delta=0.014$, $\hat{\delta}^1=\hat{\delta}^2=0.01.$}
    \label{fig:BM_convexhull_growth}
\end{figure}
\begin{figure}[!h]
    \centering
    \includegraphics[scale=.65]{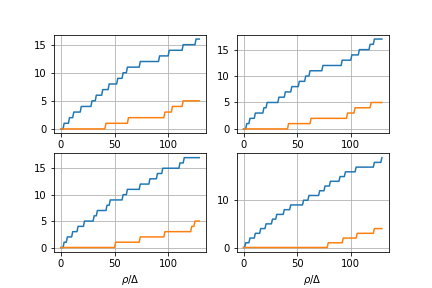}
    \caption{Pruning constraints $N^*(x,I,\rho), N_*(x,I,\rho)$, over (left to right/up to down) 6 months, 1 year, 2 years and 5 years. $\delta=0.001$, $\hat{\delta}^1=\hat{\delta}^2=0.01.$}
    \label{fig:BM_pruning_growth}
\end{figure}

\subsection{Simulating Trajectories} 
Figures \ref{fig:graphTypePlot}, \ref{fig:X1_vs_Ti} and \ref{fig:X2_vs_Ti} provide some graphical illustrations of trajectory sets.
\begin{figure} [!h]
    \centering
    \includegraphics[scale=.5]{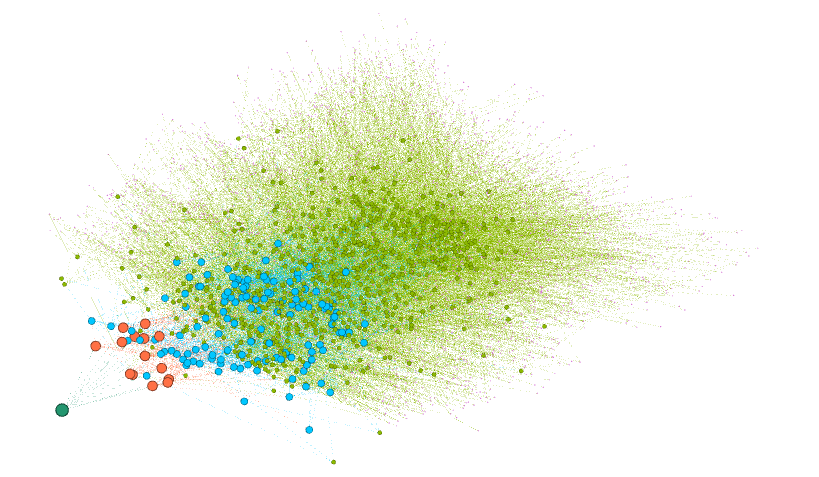}
    \caption{Trajectory set represented by a graph for the  B model with pruning. 
    Graph constains 4058 nodes and 11349 edges. $|N_E|=15$ and $N(X)=4.$}
    \label{fig:graphTypePlot}
\end{figure}

\begin{figure}[!h]
    \centering
    \includegraphics[scale=.60]{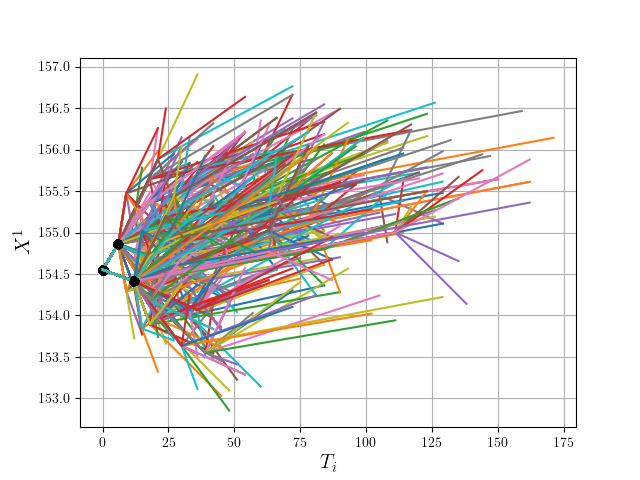}
    \caption{Trajectory simulations of $({\bf X}_i)_{0\leq i \leq N(X)}$ where ${\bf X}_i=(X_i^1, X_i^2, i,T_i,W_i)$. Figure depicts the coordinate $X_i^1$ as a function of $T_i.$}
    \label{fig:X1_vs_Ti}
\end{figure}
\begin{figure}[!h]
    \centering
    \includegraphics[scale=.60]{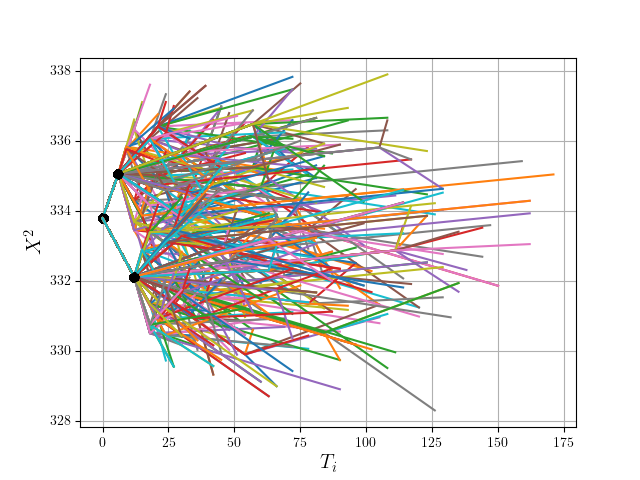}
    \caption{Trajectory simulations of $({\bf X}_i)_{0\leq i \leq N(X)}$ where ${\bf X_i}=(X_i^1, X_i^2, i,T_i,W_i)$. Figure  depicts the coordinate $X_i^2$ as a function of $T_i.$}
    \label{fig:X2_vs_Ti}
\end{figure}


\clearpage
\pagebreak[4]
\subsection{Trajectory Matching}  \label{trajectoryMatching}
This section provides some output illustrating how our model trajectories, that were built relying on historical charts, match future charts, i.e. price data not used for constructing the model's trajectories and used for testing purposes. We perform matching by recursively selecting a trajectory in our trajectory set that is a best match to the given chart. This matching is performed by minimizing a cumulative error over all coordinates. 
Given a chart $x=x(t)$, where $t\in\{0,\Delta, \hdots, M_T\Delta\}$, we calculate the $\delta$-escape times $\{t_i\}_{0\leq i\leq N}$ and collect the vectors $x_i\equiv (x_i^1, x_i^2, i, t_i, w_i)$ for $0\leq i\leq N$. We define next a matching trajectory $X^{match}\in\mathcal{X}$:
\begin{equation} \nonumber
    X^{match}=\arg~\min_{~X=(X_i)_{0\leq i\leq N({\bf X})}\in\mathcal{X}}~~ \{\sum_{i\geq 0}^{N({\bf X})} \| (X_i-x_i)\|_1\};
\end{equation}
we refer to it as the matching trajectory, with the corresponding error\\
 $\sum_{i\geq 0}^{N({\bf X})} \| (X_i^{match}-x_i)\|_1$.

\begin{figure}[!h]
    \centering
    \includegraphics[scale=.52]{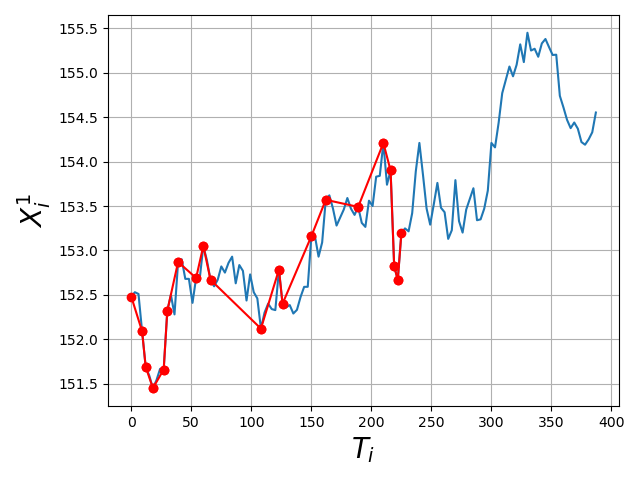}
    \caption{Matching the  chart (blue), with a trajectory (red) built using  the elements of $N_E$ for Model B.  In this case, for reference, the testing chart was used during the construction of the trajectory set. Total accumulated error amounted to $0.$ Trajectory matching occured up to $N({\bf X})=20$, with $\delta=0.011$.}
    \label{fig:X1_perfect}
\end{figure}
\begin{figure}[!h]
    \centering
    \includegraphics[scale=.50]{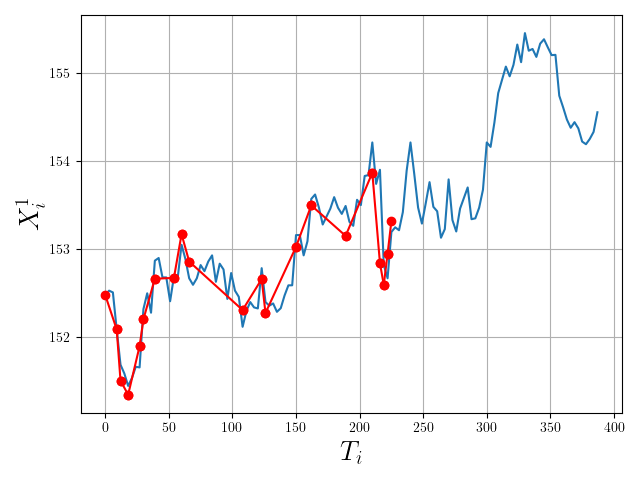}
    \caption{Matching a testing  trajectory (blue), with a trajectory (red) built using the elements of $N_E$ for Model B. The testing chart displayed in blue is the same as in Figure \ref{fig:X1_perfect} but this time it was not used during the construction of the trajectory set. Total accumulated error amounted to $11.4384.$ Trajectory matching occured up to $N({\bf X})=20$, with $\delta=0.011.$}
    \label{fig:match_X1_N_Eonly}
\end{figure}


\begin{figure}[!h]
    \centering
    \includegraphics[scale=.50]{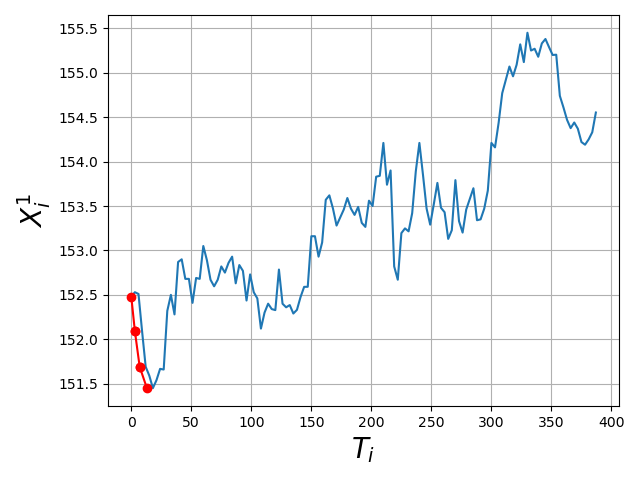}
    \caption{Matching the most recent historical trajectory (blue) built with a trajectory (red) built using the graph structure with dynamic pruning for Model B. The testing chart displayed in blue was not used during the construction of the trajectory set. Total accumulated error amounted to $1.2270.$ Trajectory matching occurred up to $N({\bf X})=4$, with $\delta=0.011.$}
    \label{fig:match_X1_G}
\end{figure}

\clearpage
\pagebreak[4]

\section{Profit and Loss Analysis} \label{sec:profitAndLossAnalysis}

This section elaborates on financial implications
derived from our superhedging models, we rely extensively on notation and definitions from Appendix \ref{theoreticalFramework}.  In particular, as detailed in the said appendix, we move freely from the $2$-dimensional trajectory model to the related $1$-dimensional trajectory set used for computations.

As indicated, each trajectory ends with a number of coordinates $N({\bf X})$, notice that 
$T_{N({\bf X})} < T$ is possible. That is, endings of  trajectories may take place
before the trading day ends. 

The models do produce pairs of possible
values $(X^1_{N({\bf X})}, X^2_{N({\bf X})})$ we then proceed to superhedge  the values
$X^2_{N({\bf X})}$ by trading with the asset $X^1$ and the numeraire. 

If we define $F({\bf X}) \equiv X^2_{N({\bf X})}$ (at times, for convenience, we may write this definition as $F= X^2$), from Corollary \ref{1DimensionalInBetweenBounds},  we have 
\begin{equation} \label{important}
\underline{\sigma}^1_{i} F({\bf X}) \leq X^2_i \leq \overline{\sigma}^1_{i} F({\bf X}), 
\end{equation}
for all $0\leq i \leq N({\bf X})$ and for some nodes $({\bf X}, i)$ (we refer to Remark \ref{validityOfL}, in Appendix \ref{theoreticalFramework}, for a discussion on the conditions needed to apply the said corollary as well as for the introduction of the notation $\overline{\sigma}^1_i F({\bf X})$).
Intuitively,  (\ref{important})  is a no-arbitrage result indicating that the model's prices $X^2_i$ can not be used to create a model arbitrage. In other words, the result shows that a trading strategy that involves short selling the asset $X^2$ and investing the proceeds into a portfolio always involves some risk
(i.e. the possibility to loose money along some trajectories). Let us provide some more precision, assume we are at a node $({\bf X}, i)$ where (\ref{important}) does not hold because $X^2_i > \overline{\sigma}^1_{i} F({\bf X})$. We then short
sell asset $X^2$ and invest in asset $X^1$ and the numeraire according to the definition in display (\ref{sigma_X_def}); it follows that:
\begin{equation}\nonumber
V_H(N(\hat{{\bf X}}), \hat{{\bf X}}) =  \overline{\sigma}^1_{i} F(\hat{{\bf X}})+ \sum_{k=0}^{N(\hat{{\bf X}})-1} H_k(\hat{{\bf X}})(\hat{X}^1_{k+1}- \hat{X}^1_k) \geq \hat{X}^2_{N(\hat{{\bf X}})}~~\mbox{for all}~~\hat{{\bf X}} \in \mathcal{X}_{({\bf X}, i)},
\end{equation}
more precisely, the above inequality holds up to a small $\epsilon$ and for an associated optimal portfolio $H$.
Our investor will then profit, for any conceivable model trajectory $~\hat{{\bf X}} \in\mathcal{X}_{({\bf X}, i)}$, at stage $N(\hat{{\bf X}})$.

\vspace{.1in}
In order to assess profit and loss properties  we proceed as follows: we evaluate a superhedging portfolio with the backwards pricing algorithm ,described in Appendix \ref{theoreticalFramework}, such portfolio, when fed with an initial investment $V= \overline{\sigma}_{0} X^2$, will satisfy $V_H(N({\bf X}), {\bf X}) \geq X^2_{N({\bf X})}$ for all ${\bf X}$ (again, up to a small $\epsilon$). Our numerical experiments will then
consider values of $V$ in the range $X^2_0 \leq V \leq  \overline{\sigma}^1_{0} X^2$, fed to the superhedging portfolio, and so introduce the possibility that the superhedging portfolio will not superhedge $X^2_{N({\bf X})}$ for all ${\bf X}$. This experiment then provides a profit and loss profile; for an initial investment in the said range, there will be some trajectories for which the superhedging property will fail. These sets of trajectories, where superhedging is uphold or where it fails, can be controlled in the model by modifying the level of pruning (and we provide output for different pruning approaches). In short,
risk in superhedging investment can be dosified in an objective way given that more or less pruning relates objectively to discarding or adding specific historical events.
 A dual experiment, where one purchases $X^2$ and short sells the underhedging portfolio, is also reported for underheging where one relies on the underhedging portfolio and a possible range of initial investments $V$ satisfying:  $\underline{\sigma}^1_{0} X^2 \leq V \leq X_0^2$. The underhedging portfolio is evaluated by the same backwards
pricing algorithm but, this time, the target is to superhedge $-X^2_{N({\bf X})}$. Therefore, evaluating the quantities $\overline{\sigma}^1_{i} (-X^2)({\bf X})$ one then obtains
$\underline{\sigma}^1_i X^2({\bf X})= -\overline{\sigma}^1_{i} (-X^2)({\bf X})$ and, similarly, the actual underheging portfolio is also obtained from the superhedging portfolio for $-X^2$
by multiplication by minus one.

 To quantify the level of risk that an investor must take- on when creating a portfolio of initial value $V$, we sample trajectories (uniformly) from the trajectory set $\mathcal{X}$, and determine the number of trajectories which profit in our model. Trajectory simulation is done by a recursive process, starting with the initial (common to all trajectories) node, and randomly selecting one of the (connected by an outgoing edge)  children nodes until we arrive at a node with no outgoing edges. One way of generalizing this exercise, which we do not explore, is to set a particular probability distribution onto $\mathcal{X}$, which affects the sampling of individual trajectories.
Not assuming a particular probability distribution, the notion of risk then refers to a set of trajectories (as opposed to a probability) where losses will take place.

  In other words,  we explore the consequences of an investor with an initial capital between the underhedging price and the superhedging price. We then simulate many trajectories along with various initial investments $V$ to investigate this type of risk-taking within our trajectorial market models.
 
The previous analysis neglects the risk of trajectory matching, i.e. it assumes the model trajectories will match exactly the market unfolding trajectory. A detailed analysis of this topic is presented in \\ \cite{ferrando19}) and some output illustrating trajectory matching is presented in Section \ref{trajectoryMatching}.

For simplicity, in our displays, we rely on the notation $X^2_{N({\bf X})}= F(X^1_{N({\bf X})})$ where
$F$ represents, necessarily, a multivalued function (i.e. a relation) implicit in the models' trajectory construction. We refer to $F(\cdot)$ as the payoff. This point of view is illustrated in Figure \ref{fig:my_label0}.

\begin{table}[!h]
\centering
 \begin{tabular}{||c ||c|c|l|} 
 \hline
   &  $~\underline{\sigma}X^2$ & $~\overline{\sigma}X^2$ & $X_0^2$\\
  
  \hline
  Model A & 321.9521 & 346.5000 &333.78\\ 
 \hline

 Model B & 324.5034& 341.3221 & 333.78\\
 \hline 
 \end{tabular}
\vspace{.1in} 
\caption{We calculate the price bounds $\overline{\sigma}X^2$ and $\underline{\sigma}X^2$ where $F(X^1_{N({\bf X})})=X_{N({\bf X})}^2$ and $N({\bf X})=3.$ } 
\end{table}
\begin{table}[!h]
\centering
 \begin{tabular}{||c ||c| c|c|c|l|} 
 \hline
 V  &$~\overline{\sigma}X^2$ &$X_0^2$& $X_0^2+1.00$ & $X_0^2-1.00$ &$\underline{\sigma}X^2$ 
 \\
 \hline
model A: &$100\%$& $37.5\%$& $49\%$   & $35\%$  &$0\%$\\
 \hline
 model B: & $100\%$& $37.5\%$& $50.5\%$   & $32\%$   & $0\%$
 \\[1ex] 
 \hline
\end{tabular}
\vspace{.1in}
\caption{For each model and initial investment, the percentage of trajectories which profit are recorded. In each case 1000 trajectories $(X_i)_{0\leq i \leq N({\bf X})}$  were simulated with $N({\bf X})=3$.  Notice that, as the initial investment approaches $\overline{\sigma}X^2$ a larger percentage of trajectories are expected to profit, and vice versa for  $\underline{\sigma}X^2$. Note $F(X^1_{(N({\bf X})})=X_{N({\bf X})}^2$, with $X_0^2=333.78$.  } \label{table:Prices}
\end{table}

\begin{figure}[!h]
    \centering
    \includegraphics[scale=.60]{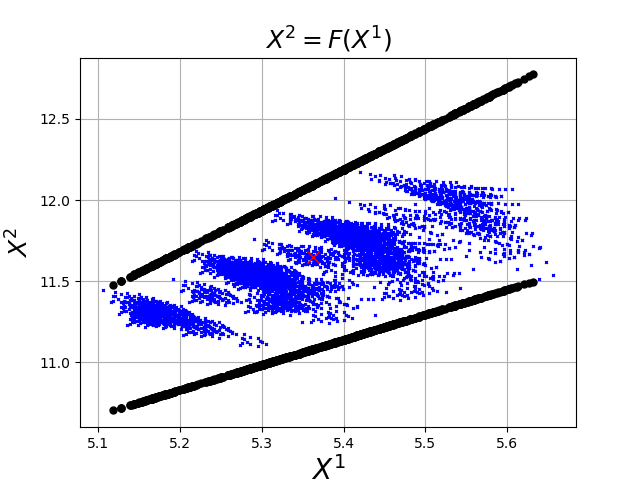}
     
    \caption{Simulated values of $(X^1_{N({\bf X})},X^2_{N({\bf X})})$ with black denoting portoflio values defined as $V+\sum_{i=0}^{N({\bf X})} H_i({\bf X})~(X_{i+1}^1-X_i^1)$ red marking denotes the initial node. Graph contains 5456 nodes and 14880 edges.}  \label{fig:my_label0}
   
\end{figure}

\begin{table}
\centering
 \begin{tabular}{||c ||c| c|c|c|l|} 
 \hline
 V  &$~\overline{\sigma}X^1$ &$X_0^1$& $X_0^1+1.00$ & $X_0^1-1.00$ &$\underline{\sigma}X^1$ 
 \\
 \hline
 model B: & $100\%$& $63.5\%$& $90\%$   &  $25\%$ & $0\%$
 \\[1ex] 
 \hline
\end{tabular}
\vspace{.1in}
\caption{For model B and initial investment, the percentage of trajectories which profit are recorded. Here, in contrast to the two previous tables, we superhedge $X^1$ by trading with $X^2$. In each case 1000 trajectories $(X_i)_{0\leq i \leq N({\bf X})}$  were simulated with $N({\bf X})=3$.  Notice as the initial investment approaches $\overline{\sigma}X^1$ a larger percentage of trajectories are expected to profit, and vice versa for  $\underline{\sigma}X^1$. Note  $F(X^2_{(N({\bf X})})=X_{N({\bf X})}^1$, with $X_0^1=154.5524.$ } \label{table:PricesofX1}
\end{table}

\begin{figure}[!h]
    \centering
    \includegraphics[scale=.5]{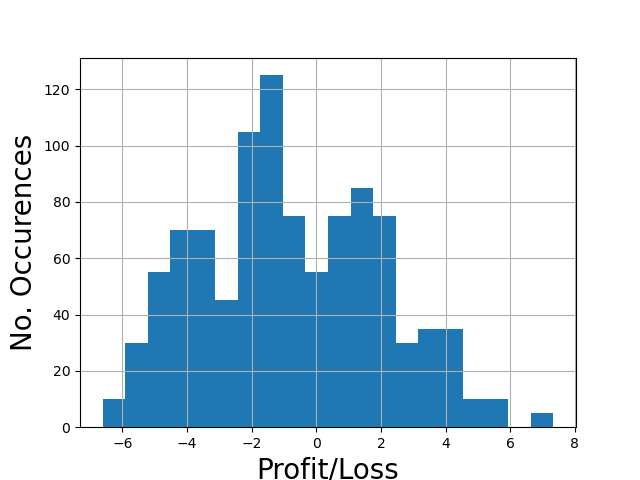}
    \caption{Profit/Loss histograms of 1000 randomly simulated trajectories, under model B  and payoff $X^2_{N({\bf X})}= F(X^1_{N({\bf X})})$ with initial investment $V=X_0^2=333.78$. The trajectory set is constructed with dynamic pruning (as introduced at the end of Section \ref{modelSpecification}) and parameters $\deltahatone=\deltahattwo=0.01$, $\delta=0.011$, and $N({\bf X})=3$.  
   ~~~ $37.5\%$ of trajectories profit.   
 Profit is given in units of USD.}
    \label{fig:V_is_X2}
\end{figure}
\begin{figure}[!h]
    \centering
    \includegraphics[scale=.5]{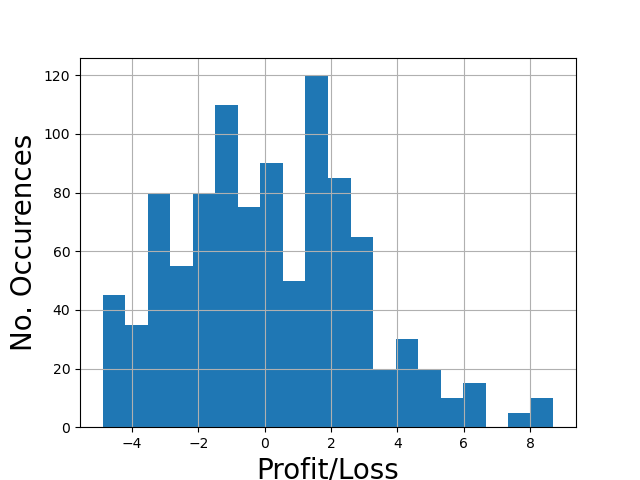}
    \caption{Profit/Loss histograms of 1000 randomly simulated trajectories, under model B and payoff $X^2_{N({\bf X})}= F(X^1_{N({\bf X})})$ with initial investment $V=X_0^2+v=333.78+1.00$. The trajectory set is constructed with dynamic pruning and parameters  $\deltahatone=\deltahattwo=0.01$, $\delta=0.011$, and $N({\bf X})=3$.  
 ~~$50.5\%$ of trajectories profit.   
 Profit is given in units of USD.}
    \label{fig:V_is_X2_plus_v}
\end{figure}

\begin{figure}[!h]
    \centering
    \includegraphics[scale=.5]{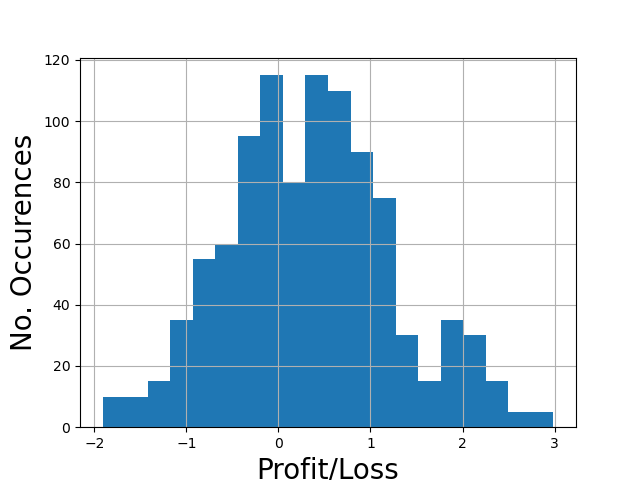}
    \caption{Profit/Loss histograms of 1000 randomly simulated trajectories, under model B and payoff $X^1_{N({\bf X})}= F(X^2_{N({\bf X})})$ with initial investment $V=X_0^1=154.5524$. $\deltahatone=\deltahattwo=0.01$, $\delta=0.011$, and $N({\bf X})=3$.  
    $63.5\%$ of trajectories profit.   
 Profit is given in units of USD.}
    \label{fig:V_is_X1}
\end{figure}

\begin{figure}[!h]
    \centering
    \includegraphics[scale=.5]{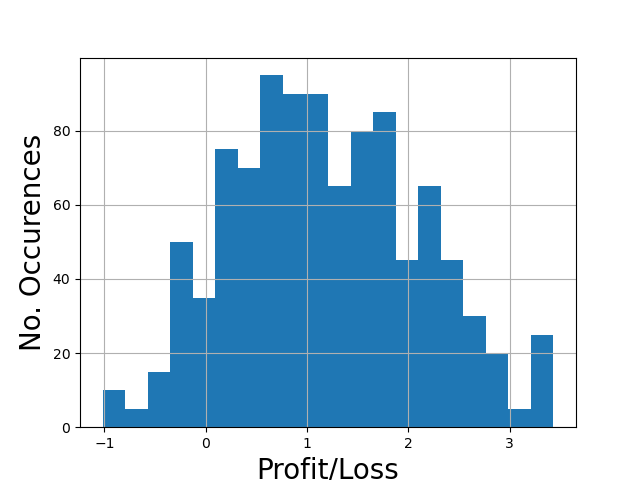}
    \caption{Profit/Loss histograms of 1000 randomly simulated trajectories, under model B and payoff $X^1_{N({\bf X})}= F(X^2_{N({\bf X})})$ with initial investment $V=X_0^1+v=154.5524+1.00$. $\deltahatone=\deltahattwo=0.01$, $\delta=0.011$, and $N({\bf X})=3$.  
 $90.5\%$ of trajectories profit.   
 Profit is given in units of USD.}
    \label{fig:V_is_X1_plus_v}
\end{figure}

\clearpage
\pagebreak[4]

Tables \ref{table:PricesBM}, \ref{table:PricesBM1} and \ref{table:PricesBM2} are obtained for Geometric Brownian motion and using Model A with the following parameters: $\hat{\delta}^1= \hat{\delta}^2= 0.01$, $\delta^{0, A} =0.1, \delta^{1, A} =0.001$
and $N({\bf X})=3$. The ``historical", geometrical Brownian motion simulated charts $x^1$ and $x^2$, were generated with $\mu^1 =0, \sigma^1= 0.01, \mu^2= 0$ and $\sigma^2= 0.02$ respectively.

\begin{table}[!ht]
\centering
 \begin{tabular}{||c ||c|c|l|} 
 \hline
 Brownian Motion Data  &  $~\underline{\sigma}X^2$ & $~\overline{\sigma}X^2$ & $X_0^2$\\
  
  \hline
  Model A & 332.78 & 333.84 &333.37\\ 
 \hline 
 \end{tabular}
\vspace{.1in} 
\caption{Price bounds $\overline{\sigma}X^2$ and $\underline{\sigma}X^2$ where $F(X^1_{N({\bf X})})=X_{N({\bf X})}^2$ and $N({\bf X})=3.$ } 
 \label{table:PricesBM}
\end{table} 
 \begin{table}[!htbp]
\centering
 \begin{tabular}{||c ||c| c|c|c|l|} 
 \hline
 V  &$~\overline{\sigma}X^2$ &$X_0^2$& $X_0^2+0.1$ & $X_0^2-0.1$ &$\underline{\sigma}X^2$ 
 \\
 \hline
 &$100\%$& $61.5\%$& $80\%$   & $52\%$  &$0\%$\\
 \hline
\end{tabular}
\vspace{.1in}
\caption{For each model and initial investment , the percentage of trajectories which profit are recorded. In each case 1000 trajectories $(X_i)_{0\leq i \leq N({\bf X})}$  were simulated with $N({\bf X})=3$.  Notice that, as the initial investment approaches $\overline{\sigma}X^2$ a larger percentage of trajectories are expected to profit, and vice versa for  $\underline{\sigma}X^2$. Note $F(X^1_{(N({\bf X})})=X_{N({\bf X})}^2$, with $X_0^2=333.78$.   
} \label{table:PricesBM1}

\vspace{.1in}
\centering
 \begin{tabular}{||c ||c| c|c|c|l|} 
 \hline
 V  &$~\overline{\sigma}X^2$ &$X_0^2$& $X_0^2+0.1$ & $X_0^2-0.1$ &$\underline{\sigma}X^2$ 
 \\
 \hline
&$0\%$& $60.5\%$& $49.5\%$   & $63\%$  &$100\%$\\
 \hline
\end{tabular}
\vspace{.1in}
\caption{Same remarks as in the Caption to Table \ref{table:PricesBM1} apply to this table which displays Profit and Loss information for the underhedging strategy.} \label{table:PricesBM2}
\end{table}

Clearly, it is possible to experiment with shrinking the trajectory space and then analysing the ensuing profit and loss that results. This could be done by tightening the pruning constraints from Section \ref{pruning1} or, as another alternative, by multiplying the convex hull generated by future tradable coordinates
$(X_{i+1}^1, X_{i+1}^2)$ (i.e. we are conditioning at a given node 
$(X_{i}^1, X_{i}^2)$)  by $1- \epsilon$ for small values of $\epsilon$. 
 
\clearpage
\pagebreak[4]
\section{Arbitrage}  \label{sec:Arbitrage}
During the construction stage, and as a result of dynamic pruning, our models may generate arbitrage opportunities which occur as arbitrage nodes (see Definition \ref{localWithRespectToH} in Appendix \ref{theoreticalFramework}). Without incorporating pruning constraints, our models would never witness an arbitrage opportunity since the set $\{(m^1, m^2):(m^1, m^2, 1, q,  \eta) ~\in N_E\}$ will contain  $0\in\mathbb{R}^2$ (more precisely, this will be a fact, by all practical accounts, as enough historical data is aggregated) and hence it is an arbitrage-free node (see Proposition \ref{characterizationOfNodes} in Appendix \ref{theoreticalFramework}). On the other hand, when pruning constraints are incorporated,  a number of trajectories may be removed from the trajectory set, these correspond to the historically worst case trajectories. The deletion of certain nodes opens up for the possibility of constructing an arbitrage node.
For each node $({\bf X}, i)$ with ${\bf X}_i=(X_i^1,X_i^2,i,T_i,W_i)$ define:
\begin{equation} \nonumber 
E_{({\bf X}, i)}=\{(X_{i+1}^1,X_{i+1}^2),~\exists~~{\bf X}_{i+1}=(X_{i+1}^1,X_{i+1}^2,i+1,T_{i+1},W_{i+1})\in N_A({\bf X}_i)\}, 
\end{equation}
where $N_A({\bf X}_i)$ was introduced in the sentence preceding the display (\ref{admissible}).
We also set 
\begin{equation} \nonumber
\Delta X(E_{({\bf X}, i)}) \equiv \{\Delta _i \tilde{X}= (\tilde{X}^1_{i+1}, \tilde{X}^2_{i+1}) - (X^1_{i}, X^2_{i}): ~~\tilde{{\bf X}} \in E_{({\bf X}, i)} \}. 
\end{equation}
Then, $({\bf X}, i)$ is an arbitrage node of type I if $0 \in [\mathrm{cl}(\mathrm{co}(\Delta X(E_{({\bf X}, i)}))) \setminus \mathrm{ri}(\mathrm{co}(\Delta X(E_{({\bf X}, i)})))]$. Also, $({\bf X}, i)$ is an arbitrage node of type II if $0 \notin \mathrm{cl}(\mathrm{co}(\Delta X(E_{({\bf X}, i)})))$. Here, and elsewhere in the paper, $\mathrm{co}(\cdot)$ and $\mathrm{cl}(\cdot)$ denote the convex hull and   closure of a set, respectively. These facts are presented in Appendix \ref{theoreticalFramework}.

We  emphasize that the only coordinates required in the super/underhedging valuation process are $X^1_i$ and $X^2_i$. Including the other variables, namely $i$, $T_i$ and $W_i$, in our models is solely for the purpose of pruning potential future nodes.  When discussing arbitrage, we refer only to the two asset coordinates and hence the phenomenon is two dimensional.   

A node may only be determined to be arbitrage or arbitrage-free once the adjacent (children) nodes are generated. If a node turns out to be an arbitrage node, all trajectories passing through such a node are terminated at the respective adjacent nodes (i.e. terminated at the earliest possible time). 

Our theoretical framework allows to neglect  Type II nodes  while computing superhedging and underhedging prices. We show in Section \ref{subsec:Ignoring_null_sets}, Appendix \ref{theoreticalFramework}, that Type II nodes are null sets. Moreover, if not ignored, arbitrage nodes of type II will offset the superhedging/underhedging methodology 
as one can easily see that $\overline{\sigma}_jf ({\bf X}) = - \infty$ at a type II node $({\bf X}, j)$.

 In algorithmic terms, rather than artificially deleting Type II nodes (something that it may imply unintended consequences), or adding new nodes to arbitrarily create a no-arbitrage node, our approach is to simply stop the recursive process of generating successive nodes as soon as Type II arbitrage is detected. Nodes with arbitrage are then labelled as such and then ignored by the pricing algorithm. In this way, our trajectory set $\mathcal{X}$ will  be such that trajectories will stop prematurely once arbitrage is detected. In particular, this may lead to different trajectories having different numbers of maximum rebalancing times. 

Dealing with arbitrage nodes of type II is, in practical terms,  the main part of the story but it does not cover all possible cases. The key underlying
property that needs to hold at a given node $({\bf X}, i)$, in order to evaluate superhedging prices, is property $(L_{({\bf X}, i)})$. This property is introduced  in Definition \ref{propertyL} and discussed in detail afterwards; the property can fail if $(\tilde{{\bf X}}, i+1)$, with $\tilde{{\bf X}} \in \mathcal{X}_{({\bf X}, i)}$, is a type II arbitrage node and 
$\mathcal{X}_{({\bf X}, i)} \setminus \{\tilde{{\bf X}}\}$, in turn, is also a type II node. That is,
the removal of a trajectory containing a type II child node makes the original given node a type II node itself. This phenomena is illustrated in Figures \ref{fig:arbitrage demo} and \ref{fig:arbitrage_plot_without_arbitrage}.

Arbitrage nodes of type I are less likely to occur during our trajectory construction
and they do not correspond (entirely) to null events. The way they are handled is described in Section \ref{subsec:Ignoring_null_sets}.

To summarize the results from Section \ref{subsec:Ignoring_null_sets}:
we will evaluate superhedging and underheding prices at nodes where the property $(L_{({\bf X}, i)})$ holds and the latter will be required (as a general property in our models)
to hold a.e. Then by Theorem \ref{thm:sigma_is_correct}
superhedging prices can be computed with a simple portfolio but will satisfy that the super/under hedging  property will be uphold only a.e. as opposed for all trajectories.

\begin{figure}[!htb]
    \centering
    \includegraphics[scale=.80]{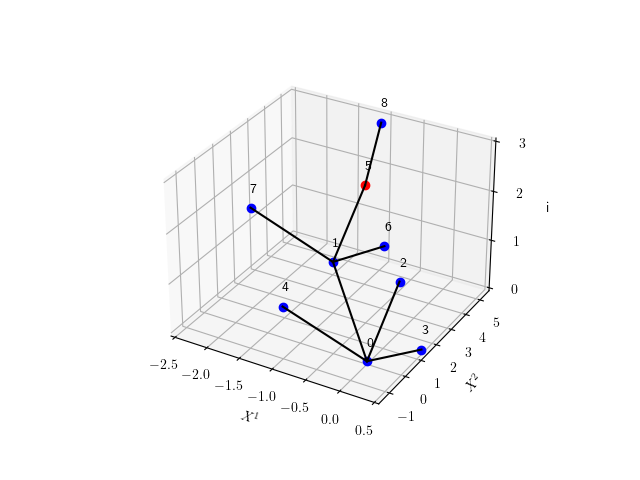}
    \caption{Depiction of a trajectory set $\mathcal{X}$, with node 5 being an arbitrage node of type II. At node 0 the prices are given as  $\overline{\sigma}F(X)=2, \underline{\sigma}F(X)=-2$, where $F(X)=X_{N({\bf X})}^2, N({\bf X})=3.$  In this case, the removal/ignoring of node $5$ while computing  at node $1$ will make the latter, in turn, a type II node. In particular property $(L_{(\hat{{\bf X}}, 1)})$ does not hold (where $(\hat{{\bf X}}, 1)$ denotes node $1$). See further explanations in the caption to  Figure \ref{fig:arbitrage_plot_without_arbitrage}.}
    \label{fig:arbitrage demo}
\end{figure}
\begin{figure}[!htb]
    \centering
    \includegraphics[scale=.90]{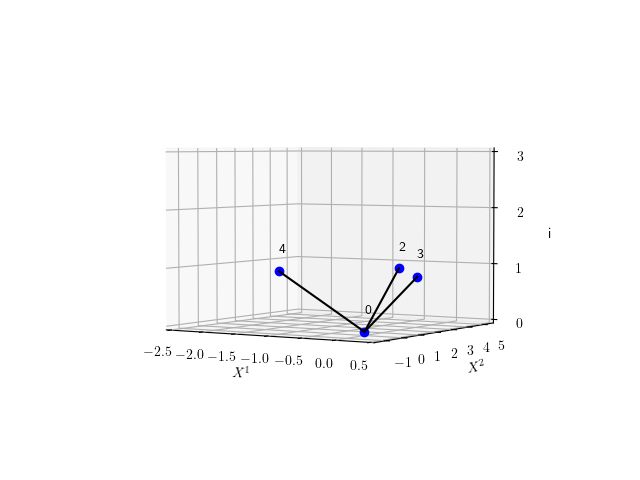}
    \caption{
    Let $\hat{\mathcal{X}}$ denote the set of  all the trajectories passing through node $1$ in Figure \ref{fig:arbitrage demo}. It is possible to see that
    $\hat{\mathcal{X}}$ is a null set and that the property $(L)-a.e.$ holds (as defined in Appendix \ref{subsec:Ignoring_null_sets}). We then know, also by referring to Appendix \ref{subsec:Ignoring_null_sets}, that undergedging and superhedging prices will coincide in the two trajectory sets 
    $\mathcal{X} \setminus \hat{\mathcal{X}}$ and  $\mathcal{X}$, namely: $\overline{\sigma}F(X)=2, \underline{\sigma}F(X)=-2$, where $F(X)=X_{N({\bf X})}^2, N({\bf X})=1$ with ${\bf X} \in  \mathcal{X} \setminus \hat{\mathcal{X}}$.}
    \label{fig:arbitrage_plot_without_arbitrage}
\end{figure}

\begin{figure}[!h]
    \centering
    \includegraphics[scale=.90]{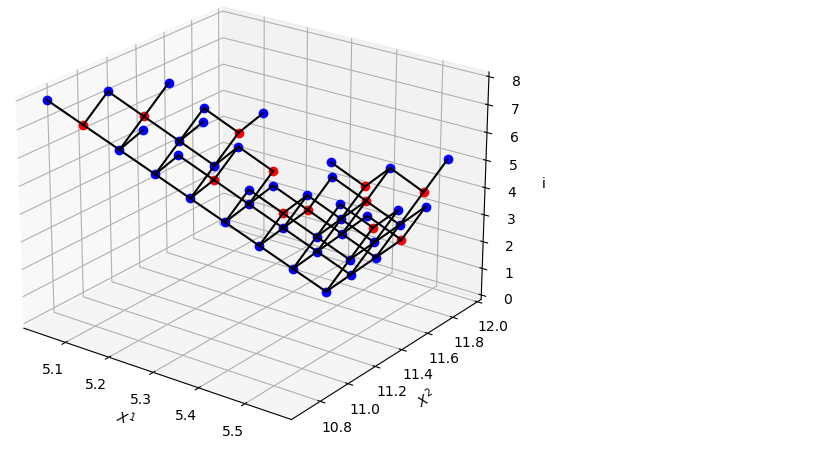}
    \caption{An example of a trajectory set as a graph (showing only the first three coordinates $(X_i^1, X_i^2,i)$. The graph is directed in the following sense:   nodes with coordinate $i$ are connected to nodes with coordinate $i+1$ but not vice versa. The figure demonstrates the (potential) early/premature  termination of some trajectories, as well as the creation of Type II arbitrage nodes as a result of applying pruning constraints. The trajectory set is constructed under model B  with $\delta=0.1, ~N({\bf X})=8$. Arbitrage nodes are denoted in red and are always of type II while no arbitrage nodes are denoted in blue.  Notice the arbitrage nodes have either one or two children. When the number of spawned children is small, it is quite likely that a trajectory may terminate prematurely due to all nodes being pruned or due to too many arbitrage nodes. A spawning size of  $3$ is the smallest size that allows a trajectory set to unfold, as any smaller set automatically leads to arbitrage. The graph contains 51 nodes and 78 edges. 
}
    \label{fig:my_label}
\end{figure}

\clearpage
\pagebreak[4]
\subsection{Small Arbitrage and Dubin's Two Dimensional Cone Crossings Inequality} \label{smallArbitrages}

Here we describe  how the trajectorial theoretical framework can be used to further prune along while constructing a trajectory set. The motivation for the proposed pruning in this section is quite distinct from the worst case approach used elsewhere in the paper. The proposed methodology is an extension to treating arbitrage opportunities as null events.

We define a {\it small arbitrage} if for any $\delta >0$ there exist a function $f_{\delta}$ with domain $\mathcal{X}$ such that $0 \leq f_{\delta} \leq c$, where $c \geq 1$ is the maximum of $f_{\delta}$ over $\mathcal{X}$, and $\overline{\sigma}f_{\delta} \leq \delta$.

We now argue, informally, that if $f_{\delta} = c~\textbf{1}_A$, where $A \subseteq \mathcal{X}$,
then it will be unlikely that trajectories in $A$ will unfold in actual markets (under the assumption that unfolding charts will be contained in $\mathcal{X}$). The simple argument is that $\delta$ can be chosen arbitrarily small and the potential maximum  payoff of $f_{\delta}$ remains equal to  $c \geq 1$. Given the meaning of $\overline{\sigma}$, we can then set up a portfolio with at most initial value $\delta$ that will superhedge $f_{\delta}$ on $\mathcal{X}$. We see then $f_{\delta}$
as a lottery which price can be made arbitrarily small by reducing the value of $\delta$ while its maximum  payoff does not decrease as a function of its price. It is then reasonable, for $\delta$ small, to infer that such lottery will not be available, an upshot is that the subset $A$ is unlikely to occur. The cut-off value of $\delta$ is dependent on the modeller/investor and, hence, can be used to exchange uncertainty for reward. Notice that the case of a null
event $\overline{\sigma}({\bf 1}_A)= \overline{I}({\bf 1}_A)=0$ is a special 
case of a small arbitrage. In particular, given that arbitrage opportunities are null in our trajectorial setting, we see that arbitrage opportunities will be small arbitrages.  On the other hand,  an small arbitrage is not an arbitrage as the initial investment $\delta >0$ may not be recuperated but, being a quantity that is investor-dependent small, it may not be a deterrent for investment which, in turn, will prompt potential large losses
for sellers of such an option (or traders taking the dual side of the trade).

Another way to reach a similar conclusion is to note that $\overline{\sigma}({\bf 1_A})$ upperbounds 
$Q(A)$ for any martingale measure on $\mathcal{X}$; supposing such pricing measure $Q$ equivalent to a measure $P$ we may expect $P(A)$ to be small (as we can make $\delta \rightarrow 0$). As this reasoning holds for any such measure $P$, one is then lead to infer the fact that $A$ will be unlikely for any such potential physical measure $P$.

We develop now an example of a small arbitrage; to this end we need the notion of a {\it trajectorial supermartingale} (studied in detail in \cite{bender3}):
this refers to a sequence of non-anticipative functions $f_j: \mathcal{X} \rightarrow \mathbb{R}$ satisfying
\begin{equation} \label{supermartingale}
\overline{\sigma}_j f_{j+1} \leq f_j ~~a.e.
\end{equation}
where the notion on a.e. is non-probabilistic and has been introduced in Definition \ref{nullObjects}.
 The notion of non-anticipativity was introduced in Section \ref{portfolioSets} and means $f_n({\bf X})= f_j({\bf X}_0, \ldots, {\bf X}_j)$. A trajectorial supermartingale becomes a {\it trajectorial martingale} if the inequality  in (\ref{supermartingale}) is replaced by equality  (a.e. and required for all $j$).

Dubin's classical upcrossing inequality counts the upcrosses of a non-negative supermartingale $\{f_k\}$ through a given band $[a, b]$ (\cite{neveu}), $0 \leq a < b$, and it can be extended to the case of a trajectorial supermartingale. This result will be reported elsewhere (for a preliminary version see \cite{konrad}). The above definition of small arbitrage and ensuing discussion can be applied to such non-probabilistic extension of Dubin's inequality. Nonetheless, and in order to provide a version of Dubin's inequality closer to the two dimensional setting of the paper, we describe below a novel, alternative, version to the classical version of Dubin's inequality, a generalization of sorts, involving two sequences jointly upcrossing through a given cone.

\subsubsection{Anchored Cone-Crossings} 

The non-anticipativity property, assumed below, is required so that when defining the counting times they turn out to be stopping times (in a trajectorial sense as introduced in \cite{bender3} or \cite{konrad}). The stopping time property is needed in the proof of Theorem \ref{dubinForRationsAndFiniteTime} below as it relies on a trajectorial version of  Doob's optional sampling theorem (as presented in \cite{konrad}, for example).

\vspace{.1in}
\begin{definition} \label{anchoredUpcrossings}
Let $\{f^2_i\}_{0 \leq i \leq N}$ be a  non-negative and non-anticipative sequence of functions and 
$\{f^1_i\}_{0 \leq i \leq N}$ a positive sequence of functions that is non-anticipative (in each case functions defined on $\mathcal{X}$).  
For $ 0 \leq \alpha < \beta$ we define upcrossing times recursively by setting $\tau_0=0$  and  $\tau_k \leq \rho_k \leq \tau_{k+1}$ to be upcrossing times as follows
\begin{equation}  \label{downCone}
\frac{f^2_{\tau_{k}}}{f^1_{\rho_k}} \leq \alpha
\end{equation}
and
\begin{equation}  \label{upCone}
\frac{f^2_{\tau_{k+1}}}{f^1_{\rho_k}} \geq \beta.
\end{equation} 
\end{definition}

\vspace{.1in}
In words: if we consider $\alpha < 1 < \beta$, we reason as follows, given $f^2_{\tau_k}$, we look for $\rho_k \geq \tau_k$ such that $f^1_{\rho_k} \geq f^2_{\tau_k}/\alpha$ i.e. $f^1$ moves up relative to $f^2$. Then we look for $\tau_{k+1} \geq \rho_k$ such that $f^2_{\tau_{k+1}} \geq f^1_{\rho_k}~\beta$ i.e. $f^2$ moves up relative to $f^1$. Notice that using the same $\rho_k$
in both expressions above plays a role of anchoring the consecutive times as illustrated in the Figure \ref{anchoredConeCrossings} below which shows that cone-crossings are vertical as the $x$-coordinate is the same.

The counting relationships (\ref{upCone}) and (\ref{downCone}) imply
\begin{equation}  \label{moreCounting}
\frac{f^2_{\tau_{k+1}}}{f^2_{\tau_k}} \geq \frac{\beta}{\alpha} = \lambda.
\end{equation}
The fact that the upper bound that appears in Dubin's inequality only depends on $\alpha/\beta$ introduces a rather arbitrary sequence $\{f^1_n\}$ in each counting and this degree of freedom is exploited by the ratio formulation. $f^1_n$ only needs to be adapted (i.e. non-anticipative) as it is used to define
the counting times $\rho_k$ which in the proof will need to be stopping times (a property that will follow from the $f^1_n$ being non-anticipative).


\vspace{.2in}

\begin{figure}[!htb]
\centering
\begin{tikzpicture} 
\begin{axis}[
    axis lines = middle,
    xlabel = {$f^1_j$},
    ylabel = {$f^2_j$},
    xmin = 0, xmax = 6,
    ymin = 0, ymax = 5,
    grid = both,
    width=10cm,
    height=8cm,
    xtick=\empty,  
    ytick=\empty,  
    axis line style={-}, 
    enlarge x limits=true, 
    enlarge y limits=true
]

\addplot[domain=0:5, samples=2, red, dashed] {0.5*x}; 
\addplot[domain=0:5, samples=2, green, dashed] {1.5*x}; 

\addplot[only marks, mark=*, purple] coordinates {(0.9, 0.3)};
\addplot[only marks, mark=*, purple] coordinates {(3.6, 1.7)};

\addplot[only marks, mark=*, orange] coordinates {(0.9, 1.7)};
\addplot[only marks, mark=*, orange] coordinates {(3.6, 5.5)}; 

\node at (axis cs:2.4,0.3) [anchor=east] {$(f^1_{\rho_k}, f^2_{\tau_k})$};
\node at (axis cs:5.8,1.8) [anchor=east] {$(f^1_{\rho_{k+1}}, f^2_{\tau_{k+1}})$};

\node at (axis cs:1.4,1.7) [anchor=south] {$(f^1_{\rho_k}, f^2_{\tau_{k+1}})$};
\node at (axis cs:4.6,4.8) [anchor=south] {$(f^1_{\rho_{k+1}}, f^2_{\tau_{k+2}})$};

\end{axis}
\end{tikzpicture}
\caption{\label{anchoredConeCrossings} Illustration of Anchored Cone-Crossings} 
\end{figure}
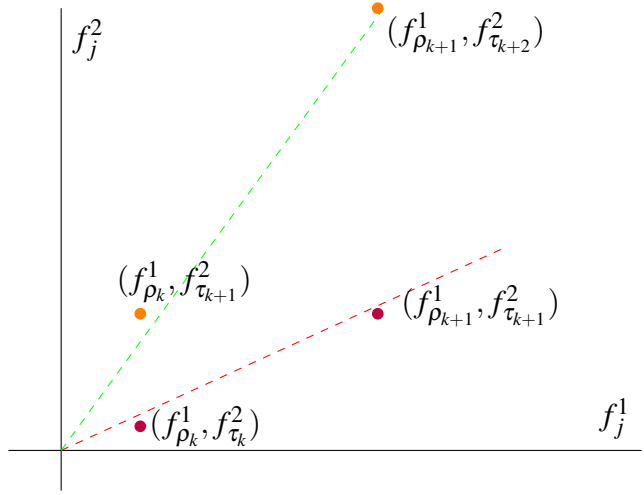

We present the next result without proof.
\begin{theorem}[Dubin's inequality for ratios. Counting as per Definition \ref{anchoredUpcrossings}] \label{dubinForRationsAndFiniteTime}
Let $\{f^2_i\}_{0 \leq i \leq N}$ be a, real-valued,  positive trajectorial supermartingale (as per (\ref{supermartingale}))  and 
$\{f^1_i\}_{0 \leq i \leq N}$ a positive sequence of non-anticipative functions. We require  $f^2_j({\bf X}) \geq \delta^2 >0$,  for all ${\bf X} \in \mathcal{X}$ and all $j$.
For $ 0 \leq \alpha < \beta$, consider the definitions of $\rho_k, \tau_k$ as given in Definition \ref{anchoredUpcrossings} and taking the value $N+1$ whenever the optimal conditioning sets are empty. Furthermore, assume
that $\mathcal{X}$ is  such that $(L)-a.e.$ holds as per Definition \ref{La.e.}; then,
for any $k\ge 0$ and  $0 \leq \alpha < \beta$:
\begin{equation} \nonumber 
\overline{\sigma}\left({\bf 1}_{\{\tau_{k+1} < N+1\}}\right) \leq ~~ \overline{\sigma} \left(\frac{\alpha}{\beta}~ {\bf 1}_{\{\rho_{k}  < N+1\}}
 \right)  \leq \left(\frac{\alpha}{\beta} \right)~~ \overline{\sigma} \left(~ {\bf 1}_{\{\rho_{k}  < N+1\}}
 \right).
\end{equation}
Therefore, 
\begin{equation}
\overline{\sigma}({\bf 1}_{\{\tau_{k+1} < N+1\}}) \leq \left(\frac{\alpha}{\beta}\right)~~ \overline{\sigma}({\bf 1}_{\{\tau_{k}  < N+1\}}),
\end{equation}
and so
\begin{equation} \label{couldBeCheckedEmpirically}
\overline{\sigma}({\bf 1}_{\{\tau_{k+1}  < N+1\}}) \leq
\left(\frac{\alpha}{\beta} \right)^{k+1} ~\overline{\sigma}({\bf 1}_{\{\rho_{0}  < N+1\}}) \leq
\left(\frac{1}{\lambda}\right)^{k+1}.
\end{equation}
where $ \lambda = \frac{\beta}{\alpha}$.
\end{theorem}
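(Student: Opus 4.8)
The plan is to carry the classical proof of Dubin's upcrossing inequality into the trajectorial framework of \cite{bender3}: conditional expectations are replaced by the operators $\overline{\sigma}_j$, the maximal inequality for a non-negative supermartingale is replaced by the elementary pointwise domination of an indicator by a suitably rescaled supermartingale value together with a trajectorial optional sampling theorem, and the tower behaviour $\overline{\sigma}_i\circ\overline{\sigma}_j=\overline{\sigma}_i$ ($i\le j$) that glues everything together is exactly what the hypothesis $(L)$-a.e. provides. Note at the outset that $\{\tau_{k+1}<N+1\}\subseteq\{\rho_k<N+1\}\subseteq\{\tau_k<N+1\}$, since $\tau_k\le\rho_k\le\tau_{k+1}$; all the displayed inequalities will move along this decreasing chain of events.

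First I would verify that $\tau_k$ and $\rho_k$ are trajectorial stopping times: this is immediate from the non-anticipativity of $\{f^1_i\}$ and $\{f^2_i\}$ and the fact that the relations (\ref{downCone})--(\ref{upCone}) involve only already-realized coordinates; consequently $\rho_k\wedge N$ and $\tau_{k+1}\wedge N$ are bounded stopping times with $\tau_k\le\rho_k\wedge N\le\tau_{k+1}\wedge N$ on $\mathcal{X}$. The lower bound $f^2_j\ge\delta^2>0$ is what lets us divide by $f^2_{\tau_k}$ legitimately and keeps all the $\overline{\sigma}$-values finite, while positivity of $\{f^1_i\}$ makes the ratios and the hitting time $\rho_k$ well posed (with the convention $\rho_k=\tau_{k+1}=N+1$ on the empty hitting sets).

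The core is a single-step estimate. Conditioning at the node $({\bf X},\rho_k)$ on $\{\rho_k<N+1\}$, where $f^1_{\rho_k}$ is a positive constant, relation (\ref{upCone}) gives $f^2_{\tau_{k+1}}\ge\beta f^1_{\rho_k}$ on $\{\tau_{k+1}\le N\}$, and $f^2_{\tau_{k+1}\wedge N}\ge0$ always, so on $\mathcal{X}_{({\bf X},\rho_k)}$
\[
{\bf 1}_{\{\tau_{k+1}<N+1\}}\ \le\ \frac{f^2_{\tau_{k+1}\wedge N}}{\beta\, f^1_{\rho_k}}.
\]
Applying $\overline{\sigma}_{\rho_k}$, pulling out the positive constant $1/(\beta f^1_{\rho_k})$, and invoking trajectorial optional sampling for the supermartingale $\{f^2_i\}$ along $\rho_k\le\tau_{k+1}\wedge N$ yields $\overline{\sigma}_{\rho_k}({\bf 1}_{\{\tau_{k+1}<N+1\}})\le f^2_{\rho_k}/(\beta f^1_{\rho_k})$ a.e. on $\{\rho_k<N+1\}$. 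I would then invoke (\ref{downCone}), i.e.\ $f^1_{\rho_k}\ge f^2_{\tau_k}/\alpha$ on $\{\rho_k<N+1\}$, to get the bound $\tfrac{\alpha}{\beta}\, f^2_{\rho_k\wedge N}/f^2_{\tau_k}$, push it back to the node $({\bf X},\tau_k)$ via the tower property, and apply optional sampling a second time from $\tau_k$ to $\rho_k\wedge N$ (with homogeneity against the constant $\alpha/(\beta f^2_{\tau_k})$ and monotonicity to discard the surplus indicator). This collapses the estimate to the per-step chain $\overline{\sigma}({\bf 1}_{\{\tau_{k+1}<N+1\}})\le\overline{\sigma}\bigl(\tfrac{\alpha}{\beta}{\bf 1}_{\{\rho_k<N+1\}}\bigr)\le\tfrac{\alpha}{\beta}\,\overline{\sigma}({\bf 1}_{\{\rho_k<N+1\}})$, the last step being just $\overline{\sigma}(cg)=c\,\overline{\sigma}(g)$ for $c\ge0$. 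Iterating for $j=k,k-1,\dots,0$ and using $\{\rho_j<N+1\}\subseteq\{\tau_j<N+1\}$ at each stage telescopes to $\overline{\sigma}({\bf 1}_{\{\tau_{k+1}<N+1\}})\le(\alpha/\beta)^{k+1}\,\overline{\sigma}({\bf 1}_{\{\rho_0<N+1\}})$, and since $\overline{\sigma}({\bf 1}_{\{\rho_0<N+1\}})\le\overline{\sigma}(1)\le1$, the right side is $\le(1/\lambda)^{k+1}$ with $\lambda=\beta/\alpha$.

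I expect the main obstacle to be precisely the bookkeeping forced by the two distinct counting times: the anchor $\rho_k$ is a time of the auxiliary sequence $f^1$ while the supermartingale lives on $f^2$, so one must condition at $\rho_k$ first to freeze $f^1_{\rho_k}$ and apply the maximal estimate, and only afterwards return to $\tau_k$ to use (\ref{downCone}) and control $f^2_{\rho_k}$ against $f^2_{\tau_k}$ through optional sampling --- all while keeping track of the $\wedge N$ truncations on the counting times and making sure every a.e.\ qualifier survives each application of the tower property, which is where $(L)$-a.e.\ and the trajectorial stopping-time / optional-sampling machinery of \cite{bender3} and \cite{konrad} are genuinely needed. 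The passage from a conditional a.e.\ inequality at $\tau_k$ to the unconditional statement, and the precise form in which the indicator of $\{\rho_k<N+1\}$ (rather than of $\{\tau_k<N+1\}$) survives to the right-hand side, are the points that would have to be written out with the most care.
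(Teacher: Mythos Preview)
The paper explicitly states ``We present the next result without proof'' immediately before Theorem \ref{dubinForRationsAndFiniteTime}, deferring the argument to \cite{konrad}; there is therefore no proof in the paper to compare your proposal against. Judged on its own, your sketch follows the classical Dubin route (pointwise domination of the indicator by a rescaled stopped supermartingale value, optional sampling, then the tower property) and correctly isolates the two places where the trajectorial machinery of \cite{bender3} is needed: the a.e.\ tower behaviour under $(L)$-a.e.\ and optional sampling for trajectorial supermartingales at bounded stopping times. You also rightly flag the one genuinely delicate point, namely keeping the indicator ${\bf 1}_{\{\rho_k<N+1\}}$ (rather than ${\bf 1}_{\{\tau_k<N+1\}}$) on the right-hand side of the first displayed inequality: your optional-sampling step from $\tau_k$ to $\rho_k\wedge N$ applied to $\tfrac{\alpha}{\beta}\,f^2_{\rho_k\wedge N}/f^2_{\tau_k}\,{\bf 1}_{\{\rho_k<N+1\}}$ does not immediately yield $\tfrac{\alpha}{\beta}{\bf 1}_{\{\rho_k<N+1\}}$ because the indicator is not $\tau_k$-measurable, so some additional care (or a direct argument at level $\rho_k\wedge N$ before towering all the way to $0$) is needed there; the weaker chain through ${\bf 1}_{\{\tau_k<N+1\}}$, which suffices for the iteration and the final bound (\ref{couldBeCheckedEmpirically}), does follow cleanly from your outline.
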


Theorem \ref{dubinForRationsAndFiniteTime} relates to the previous discussion on small arbitrage by noticing that $A_{k+1, \lambda} \subseteq  A_{k, \lambda}$ where $A_{k, \lambda} \equiv \{{\bf X} \in \mathcal{X}: \tau_k({\bf X}) < N+1\}$ for a fixed but arbitrary parameter $1 <\lambda \equiv \beta/\alpha < \infty$. By increasing $k$ we can achieve $\overline{\sigma} f_{\delta}  \leq \delta \equiv c~(1/\lambda)^k$ 
where $f_{\delta} \equiv c {\bf 1}_{A_k \lambda}$. As we explain below, trajectories belonging to $A_{k, \lambda}$ represent pairs of price sequences
that upcross a given cone in $2$-dimensions. Theorem \ref{dubinForRationsAndFiniteTime} proving $\overline{\sigma} {\bf 1}_{A_{k, \lambda}} \leq (\frac{1}{\lambda})^k$ is a novel  $2$-dimensional analogue of the classical upcrossing inequality of Dubin (\cite{neveu}).

To connect Theorem \ref{dubinForRationsAndFiniteTime} with our $2$-dimensional trajectory construction we let $f^2_j = X^2_j$ and $f^1_j = X^1_j$ for the case that we only trade with $X^2_j$ in order to superhedge $X^1_N$. This will guarantee that $f^2_j$ is a  trajectorial martingale and hence a trajectorial supermartingale (as required in Theorem \ref{dubinForRationsAndFiniteTime}). That is, we formally rely on Theorem \ref{dubinForRationsAndFiniteTime} as a $1$-dimensional setting
by considering the trajectory space consisting of trajectories $X^2_j$; this choice  allows us to have access to \cite{bender3} for sufficient conditions providing the validity of $(L)-a.e.$ (which is required  in Theorem \ref{dubinForRationsAndFiniteTime}). Interestingly, the theorem does not require a supermartingale property for $f^1_j= X^1_j$ and this fact allows to apply the results of the theorem to our $2$-dimensional construction. One could formulate an alternative counting setting where 
both $X^k_j$, $k=1,2$, are required to be supermartingales, such a formulation will place us in a $2$-dimensional setting which, in turn, would require dealing with $(L)-a.e.$ in $2$-dimensions as well.

It then follows that Theorem \ref{dubinForRationsAndFiniteTime} can be used to prune
the two dimensional trajectories $(X^1_j, X^2_j)$ by stopping their recursive construction whenever 
$(\frac{1}{\lambda})^k$ is considered to be sufficiently small and this can be done for each cone $0 < \alpha < \beta$ with $\lambda \equiv \frac{\beta}{\alpha}$. As explained above this type of pruning is suggested apriori by the theoretical framework which allows to prove a novel version of Dubin's inequality (namely, Theorem \ref{dubinForRationsAndFiniteTime}).


\section{Discussion}  \label{sec:discussion}

Standard stochastic price modelling attempts to capture
observable features of time series under the assumption of no-arbitrage. Using stochastic language, we could explain our operationally-based approach by associating, to a given set of agents, a class of stopping times that they will use to sample a given stochastic process model. Therefore, agents that trade following their own special
investment rules, will only face special features of such stochastic process. This reasoning suggests to directly construct models that do reflect prices faced by a relevant set of traders/agents. Following this idea,
our paper proposes a data-based, systematic, modelling construction that reflects trading behaviours of a class of agents. Specifically, we propose a class of agents that rebalance their portfolios
after $2$-dimensional $\delta$-escape price movements away from its present value.
Once those historically observable samples are acquired from data, we then use a worst case approach that is supported by a non-probabilistic theory to create a trajectorial model. The mathematical theory that we rely upon is built from financial insights and relies on the notion of superhedging.  Our approach allows to incorporate arbitrage opportunities as null sets without any resort to a probability measure.
The modelling methodology permits agents to see investing more akin to a casino by allowing them to gauge risk-reward possibilities which are directly associated to their portfolio rebalancing methodology.

\begin{appendices}

\section{Appendix A. Theoretical Framework} \label{theoreticalFramework}

This appendix presents theory that supports and justifies the superhedging framework used in the paper. We only provide detailed statements and proofs
for results with a direct relevance to the paper and provide references for background results and their proofs. As we have already anticipated, at an early point in our explanations we switch to a $1$-dimensional setting as that is the dimension required for the superhedging algorithm that we rely upon. We will be referring to some results available in 
\cite{bender3}, the setting of that reference is for $d=1$ and positive portfolios
(appearing in the definition of the superhedging operators) are defined by $f_m = \liminf_{n \rightarrow \infty} \Pi^{V_m, H^m}_{j, n}$. That is, the definition involves a $\liminf$ and it represents a more general setting that the one of the present paper where 
 $f_m = \Pi^{V_m, H^m}_{j, n_m}$ which is a particular case of the $\liminf$ framework by taking $H^m_k=0$ for all $k \geq n_m$.
 
\subsection{Types of Nodes and Arbitrage} \label{nodeTypes}

Here we introduce definitions for the different types of nodes, their geometrical characterizations and how these notions relate to the 
usual definition of no-arbitrage portfolio and absence of arbitrage opportunities.

\noindent
The inner product in $\mathbb{R}^2$ is denoted by $h \cdot Y$; for convenience, the definitions and results are presented only for $\mathbb{R}^2$ but are
still available for any $d \geq 1$, where $d$ refers to the number of traded assets in the model, 
 (in fact, we use 
the notions and ensuing properties for the case  $d=1$ case as well).

Below, and elsewhere in the paper, a node $({\bf X}, k)$ is a shorthand notation for the set $\mathcal{X}_{({\bf X}, k)}$.

\begin{definition}[Arbitrage and $0$-Neutral Nodes] \label{localWithRespectToH}
  Given a (multidimensional) trajectory set $\mathcal{X}$ and  a node $({\bf X}, k)$, $k \ge 0$:
\begin{enumerate}
  \item $({\bf X}, k)$ is called an {\it arbitrage-free node} if for any $h \in \mathbb{R}^2$
  \[
  [h \cdot \Delta_k X'=0 \quad \forall X' \in \mathcal{X}_{({\bf X},k)}]~ ~~\mbox{or}~~~ [\inf_{X' \in \mathcal{X}_{({\bf X},k)}} h \cdot \Delta_k X' < 0].
  \]
 \item $({\bf X}, k)$ is called a  $0$-neutral node if for any $h \in \mathbb{R}^2$
  \[
  \inf_{X' \in \mathcal{X}_{({\bf X},k)}} h \cdot \Delta_k X' \le 0.
  \]
\end{enumerate}
$({\bf X}, k)$ is called an arbitrage node if it is not an arbitrage-free node.
\end{definition}

Relying on Proposition 3.5 of \cite{degano2} we obtain the following result where $\mbox{ri}(A)$, $\mbox{co}(A)$ and $\mbox{cl}(A)$  refer to relative interior, convex hull and closure, respectively, of a set $A \subseteq \mathbb{R}^2$.

\begin{proposition}  \label{characterizationOfNodes}
Given a trajectory set $\mathcal{X}$, consider a
node $({\bf X},k)$.

\begin{itemize}
\item  $({\bf X},k)$ is an arbitrage-free node if and only if 
\begin{equation} \nonumber 
0 \in \mathrm{ri}\left(\mathrm{co}\left(\Delta X(\mathcal{X}_{({\bf X},k)})\right)\right).
\end{equation}

\item  $({\bf X},k)$ is a $0$-neutral node if and only if
\begin{equation} \nonumber 
0 \in \mathrm{cl}\left(\mathrm{co}\left(\Delta X(\mathcal{X}_{({\bf X},k)})\right)\right).
\end{equation}
\end{itemize}
We relied on the notation introduced in (\ref{eqn:conjdelta}) for $\Delta X(\mathcal{X}_{({\bf X},k)})$.
\end{proposition}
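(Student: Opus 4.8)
The plan is to reduce both equivalences to standard facts of finite–dimensional convex analysis applied to the nonempty convex set $C \equiv \mathrm{co}(\Delta X(\mathcal{X}_{({\bf X},k)})) \subseteq \mathbb{R}^2$; this is in essence the content of Proposition 3.5 in \cite{degano2}, which one may simply cite, but I would reprove it as follows. The first observation is that for every $h \in \mathbb{R}^2$ one has $\inf_{X' \in \mathcal{X}_{({\bf X},k)}} h\cdot \Delta_k X' = \inf_{c \in C} h\cdot c$, since a linear functional attains the same infimum over a set and over its convex hull (the nontrivial inequality follows because $h\cdot(\sum_i \lambda_i c_i) = \sum_i \lambda_i\, h\cdot c_i \ge \inf_{c\in A} h\cdot c$ for any convex combination). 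Using this, $0$-neutrality of $({\bf X},k)$ becomes the assertion ``$\inf_{c\in C} h\cdot c \le 0$ for all $h$'', while the arbitrage-free condition becomes ``for every $h$, if $h\cdot c \ge 0$ for all $c \in C$ then $h\cdot c = 0$ for all $c \in C$'' — here one uses that $\inf_{c\in C} h\cdot c \ge 0$ is the same as $h\cdot c \ge 0$ on $C$, and that $h\cdot c \neq 0$ for some $c$ together with $h\cdot c\ge 0$ on $C$ forces $h\cdot c > 0$ for that $c$. I would also note at the outset that $C$ is nonempty, since ${\bf X} \in \mathcal{X}_{({\bf X},k)}$, so that $\mathrm{ri}(C)\neq\emptyset$.

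For the $0$-neutral bullet: if $0 \in \mathrm{cl}(C)$ then for any $h$, $\inf_{c\in C} h\cdot c = \inf_{c \in \mathrm{cl}(C)} h\cdot c \le h\cdot 0 = 0$, so the node is $0$-neutral. Conversely, if $0 \notin \mathrm{cl}(C)$, apply the strict separating hyperplane theorem to the point $0$ and the closed convex set $\mathrm{cl}(C)$ to obtain $h \neq 0$ and $\varepsilon>0$ with $h\cdot c \ge \varepsilon$ for all $c\in \mathrm{cl}(C)$; then $\inf_{c\in C} h\cdot c \ge \varepsilon > 0$, so the node is not $0$-neutral. This settles the second equivalence.

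For the arbitrage-free bullet: if $0 \notin \mathrm{ri}(C)$, then $\mathrm{ri}(C)$ and $\{0\}$ are convex sets with disjoint relative interiors, so they can be properly separated: there exists $h\neq 0$ with $h\cdot c \ge 0$ for all $c \in \mathrm{ri}(C)$ — hence for all $c\in\mathrm{cl}(C)\supseteq C$ — and $h\cdot c_0 > 0$ for some $c_0 \in \mathrm{ri}(C)\subseteq C$. This $h$ violates the rephrased arbitrage-free condition, so $({\bf X},k)$ is an arbitrage node. Conversely, assume $0 \in \mathrm{ri}(C)$ and that $h$ satisfies $h\cdot c \ge 0$ on $C$; then $h\cdot\,\cdot$ attains its minimum over $C$ at the relative interior point $0$, and for any $c\in C$ the segment from $c$ through $0$ extends slightly beyond $0$ inside $C$, i.e. $-\mu c \in C$ for some $\mu>0$, whence $h\cdot(-\mu c)\ge 0$ gives $h\cdot c \le 0$ and therefore $h\cdot c = 0$. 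Thus the rephrased condition holds and $({\bf X},k)$ is arbitrage-free, completing the first equivalence.

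I expect the main obstacle to be this last (arbitrage-free) direction: the separation must be carried out with \emph{relative} interiors rather than interiors, because $C$ may fail to be full-dimensional and the supporting hyperplane at $0$ must be \emph{proper} (not contain all of $C$) — this is exactly why $\mathrm{ri}$, not $\mathrm{int}$, appears in the statement — and it rests on the accompanying classical fact that a linear functional minimized at a relative interior point of a convex set is constant on that set. Both are standard (e.g. Rockafellar) but must be applied within $\mathrm{aff}(C)$ with the correct dimension. The $0$-neutral part, by contrast, is routine once the infimum is transferred to $C$ and the strict separation theorem is invoked.
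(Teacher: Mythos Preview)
Your proof is correct; the paper itself does not prove this proposition but simply cites Proposition~3.5 of \cite{degano2}, which you also identify as the source. Your self-contained argument via separation theorems (strict separation for the $0$-neutral part, proper separation relative to $\mathrm{aff}(C)$ for the arbitrage-free part) is the standard route and almost certainly what the cited reference does. One small remark: in the ``$0\notin\mathrm{ri}(C)$'' step you write that proper separation yields $h\cdot c_0>0$ for some $c_0\in\mathrm{ri}(C)$; proper separation only directly gives $c_0\in C$, but since $h\cdot c\ge 0$ on $C$ and $h$ is not identically zero on $C$, it cannot vanish on all of $\mathrm{ri}(C)$ (else it would vanish on $\mathrm{cl}(C)$), so your claim is true --- it just warrants that one extra sentence.
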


$({\bf X},k)$ will be called a {\it type I arbitrage node} if it is a $0$-neutral node
but not an arbitrage-free node, i.e. whenever $0$ is in the boundary of the closure of the convex hull. We will call $({\bf X},k)$ a {\it type II arbitrage node} if it is neither a type I arbitrage node nor an arbitrage-free node. 
It then follows that a node $({\bf X},k)$ is a type II node if:
\begin{equation}\nonumber
   0 \notin \mathrm{cl}\left(\mathrm{co}\left(\Delta X(\mathcal{X}_{({\bf X},k)})\right)\right). 
\end{equation}
Notice that whenever $\mathcal{X}$ is a finite set,  there are no type II arbitrage nodes that are also 0-neutral. This then, will be the case when we restrict to computer implementations of the models. $({\bf X},k)$ is a type I arbitrage node if
\begin{equation}\nonumber
  0 \in \mathrm{cl}\left(\mathrm{co}\left(\Delta X(\mathcal{X}_{({\bf X},k)})\right)\right) \quad \text{and}\quad 0 \notin \mathrm{ri}\left(\mathrm{co}\left(\Delta X(\mathcal{X}_{({\bf X},k)})\right)\right).
\end{equation}

In financial terms, an arbitrage-node allows an investor 
to place a trade at that node without the possibility of losing any money.
The special case of an arbitrage-node of type I is when there is a possibility of earning nothing. This is a rare 
case
and unlikely to appear in practice, nonetheless, it is an interesting case in that it 
allows for a pricing methodology while allowing for (some) arbitrage opportunities as well (see \cite{ferrando}).
A trade at an arbitrage-free node  will always have the possibility to lose money
(or to earn nothing for all possible cases).

A common modelling assumption is not to allow for investors that are able to generate a profit in a transaction without any risk/possibility of losing money.  Such an investment opportunity is called an arbitrage opportunity.

\begin{definition}[Arbitrage opportunity] \label{ArbitrageDefinition}
Given a trajectory set $\mathcal{X}$  a portfolio $H=({\bf H}_i)_{i \geq 0}$ is called an arbitrage opportunity if
\begin{itemize}
\item[i)] $\forall ~\se \in \mathcal{X}$,  $V^{0, H}_{0, N}(\se) \geq 0$;
\item[ii)] $\exists ~\se^{\ast} \in \Se$ such that $V^{0, H}_{0, N}(\se^{\ast}) > 0,$
\end{itemize}
for some trajectory-dependent index $N= N({\bf X})$.
We say that $\mathcal{X}$ is \emph{arbitrage-free} if there is no arbitrage portfolio $H$.
\end{definition}

According to Theorem 3.9 and Proposition 3.10  in \cite{degano2} all nodes need to be no-arbitrage nodes so that there are no arbitrage portfolios.

\vspace{.1in}
As we already discussed in Section \ref{detectingNullEvents}, 
our superheding norm $\overline{I}$ defines and detects null sets and the latter are interpreted as unlikely events. One then has the possibility to allow
for arbitrage opportunities by weakening condition $i)$ in Definition \ref{ArbitrageDefinition} to only require $V^{0, H}_{0, N}(\se) \geq 0 ~~a.e.$
That is $V^{0, H}_{0, N}(\se) \geq 0$ may not hold on a $\overline{I}$-null set (naturally, we would also require that the strict inequality in $ii)$ above holds on a
set that is not null)
this extension is of course in line with the stochastic approach but the difference is that we rely on the financial definition of null events given by 
$\overline{I}$ (as contrasted to a measure based notion of a.e.). It is possible to see that a $\overline{I}$-null set will be also a null set with respect to any probability
measure that makes the coordinate maps $T_k(X^n) \equiv X_k$, where $n \in \{1, 2\}$, into (trajectorial) martingale processes
(see \cite{bender1}).

\subsection{Condition $(L_{({\bf X},j)})$}
The following definition will be a minimal necessary hypothesis required in several results presented later in this appendix  (Proposition \ref{basicInequalities}, Theorem \ref{thm:sigma_is_correct} and Corollary \ref{1DimensionalInBetweenBounds}).

\begin{definition}[{\bf Property} $(L_{({\bf X},j)})$~] \label{propertyL}
Fix: $({\bf X},j)$,  $f_m =  \Pi^{V^m, H^m}_{j, n_m}$ with $\Pi^{V^m, H^m}_{j, n} \in \mathcal{E}_{({\bf X}, j)}^+~\mbox{for all}~ n \geq j,~~ m \geq 1$ and
$f_0 \in \mathcal{E}_{({\bf X},j)}$. Define property $(L_{({\bf X},j)})$ by
\begin{equation}  \nonumber
 [~0 \leq \sum_{m \geq 0} f_m\;\;\mbox{on}\; \mathcal{X}_{(X,j)} \implies  0 \leq \sum_{m \geq 0} V^m~].
\end{equation}
\end{definition}

Property $(L_{({\bf X},j)})$ is fully discussed in \cite{bender3} and the reader is referred to that reference to appreciate why $(L_{({\bf X},j)})$ is a cornerstone property. On the other hand, the latter reference is in a $1$-dimensional setting while Definition \ref{propertyL} is meant to cover the multidimensional case (see clarifying comments in Remark \ref{validityOfL} below). 
The $2$-dimensional version of $(L_{({\bf X},j)})$ is required in the next proposition.
\begin{proposition} \label{basicInequalities}
Given a $2$-dimensional trajectory set $\mathcal{X}$ with traded coordinates
$X_i= (X^1_i, X^2_i)$, $0 \leq i \leq N({\bf X})$. Fix a node $({\bf X}, j)$ and assume that $(L_{({\bf X},j)})$ holds. If we define $F({\bf X}) \equiv X^2_{N({\bf X})}$ then:
\begin{equation}  \label{eqPriceInBetweenBounds}
\underline{\sigma}_j F({\bf X}) \leq X^2_{j} \leq \overline{\sigma}_j F({\bf X}).
\end{equation}
\end{proposition}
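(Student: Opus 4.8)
The plan is to prove the two inequalities in (\ref{eqPriceInBetweenBounds}) separately, and in fact only one of them needs a genuine argument since the other follows by the symmetry $\underline{\sigma}_j = -\overline{\sigma}_j(-\cdot)$ applied to $-F$ (note $X^2_j$ is linear, so $-X^2_{N({\bf X})}$ plays the same structural role). So I will concentrate on showing $X^2_j \leq \overline{\sigma}_j F({\bf X})$, i.e. that $X^2_j$ is a lower bound for the set of initial values $\sum_{m\geq 0} V^m$ over all admissible decompositions $F \leq \sum_{m \geq 0} f_m$ on $\mathcal{X}_{({\bf X},j)}$ with $f_0 \in \mathcal{E}_{({\bf X},j)}$ and $f_m \in \mathcal{E}^+_{({\bf X},j)}$ for $m \geq 1$.

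The key construction is to exhibit a \emph{concrete elementary portfolio} that reproduces $X^2$ exactly: take the self-financing portfolio $H$ holding one unit of asset $X^2$ throughout, i.e. $g \equiv \Pi^{X^2_j, H}_{j, N({\bf X})}$ with $H_i \equiv (0,1)$ (holding zero units of $X^1$ and one unit of $X^2$) for $i \geq j$. Then on $\mathcal{X}_{({\bf X},j)}$ one has $g(\tilde{\bf X}) = X^2_j + \sum_{i=j}^{N(\tilde{\bf X})-1}(\tilde{X}^2_{i+1}-\tilde{X}^2_i) = \tilde{X}^2_{N(\tilde{\bf X})} = F(\tilde{\bf X})$, so $g = F$ identically there, with initial value $X^2_j$. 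Now suppose $F \leq \sum_{m\geq 0} f_m$ on $\mathcal{X}_{({\bf X},j)}$ is any admissible decomposition with $\sum_{m\geq 0} V^m$ finite. Then $0 \leq \sum_{m\geq 0} f_m - g = (f_0 - g) + \sum_{m\geq 1} f_m$ on $\mathcal{X}_{({\bf X},j)}$. Since $f_0 - g \in \mathcal{E}_{({\bf X},j)}$ (a difference of elementary functions, as $\mathcal{E}_{({\bf X},j)}$ is a vector space) with initial value $V^0 - X^2_j$, and the remaining $f_m$, $m\geq 1$, are in $\mathcal{E}^+_{({\bf X},j)}$, property $(L_{({\bf X},j)})$ applies to this sum and yields $0 \leq (V^0 - X^2_j) + \sum_{m\geq 1} V^m = \sum_{m\geq 0} V^m - X^2_j$. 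Taking the infimum over all such decompositions gives $X^2_j \leq \overline{\sigma}_j F({\bf X})$.

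For the lower bound, apply the inequality just proved with $F$ replaced by $-F = -X^2_{N({\bf X})}$ and the portfolio $-g$ (holding $-1$ unit of $X^2$), obtaining $-X^2_j \leq \overline{\sigma}_j(-F)({\bf X})$, hence $\underline{\sigma}_j F({\bf X}) = -\overline{\sigma}_j(-F)({\bf X}) \leq X^2_j$. One small technical point to check is that in the finite-maturity/finite-trajectory setting the maturity index $n_m$ can be taken uniformly (e.g. the common bound $\sup_{{\bf X}} N({\bf X})$, which is finite in the computer-implemented models), so that all portfolios live in a common $\mathcal{E}_{({\bf X},j)}$ and the vector-space closure used above is legitimate; the requirement $\Pi^{V^m,H^m}_{j,n} \in \mathcal{E}^+_{({\bf X},j)}$ for all $n \geq j$ in the definition of $\overline{\sigma}_j$ is exactly what makes $(L_{({\bf X},j)})$ applicable to the partial sums. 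The main obstacle, such as it is, is not the algebra but making sure the hypotheses of $(L_{({\bf X},j)})$ are met verbatim — in particular that $f_0 - g$ is a legitimate element of $\mathcal{E}_{({\bf X},j)}$ and that the decomposition $0 \leq (f_0 - g) + \sum_{m\geq 1} f_m$ has exactly the shape (one signed elementary term plus a countable family of non-negative elementary terms with the "all $n \geq j$" positivity) demanded in Definition \ref{propertyL}.
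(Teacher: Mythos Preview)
Your proof is correct and follows essentially the same approach as the paper: both arguments subtract the telescoping portfolio $g(\tilde{\bf X}) = X^2_j + \sum_{i=j}^{N(\tilde{\bf X})-1}(\tilde X^2_{i+1}-\tilde X^2_i) = F(\tilde{\bf X})$ from an arbitrary admissible decomposition $F \leq \sum_{m\geq 0} f_m$ and then invoke $(L_{({\bf X},j)})$ on $(f_0 - g) + \sum_{m\geq 1} f_m \geq 0$ to conclude $X^2_j \leq \sum_{m\geq 0} V^m$, with the lower bound obtained by symmetry via $-F$. Your version is slightly more explicit about naming $g$ as an element of $\mathcal{E}_{({\bf X},j)}$ and flagging the trajectory-dependent maturity $N({\bf X})$, but the substance is identical.
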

\begin{proof}
Consider $ X^2_{N({\bf X})} \leq \sum_{m \geq 0} f_m$ on $\mathcal{X}_{({\bf X}, j)}$ where $f_m \equiv \Pi^{V^m, H^m}_{j, n_m}$. Then $0 \leq \sum_{m \geq 0} f_m- X^2_{j}- \sum_{k=j}^{N({\bf X})-1} (X^2_{k+1}- X^2_k)$. From $(L_{({\bf X},j)})$ we then obtain $0 \leq - X^2_j + \sum_{m \geq 0} V^m$ from where it follows that $X^2_j \leq \overline{\sigma}_j F({\bf X})$.
The inequality $\underline{\sigma}_j F({\bf X}) \leq X^2_{j}$ is obtained from the same argument but now applied to $-X^2_{N({\bf X})}$. 
\end{proof}

\subsection{Computational Version of Superhedging Prices}

For this section, the setting for Theorem \ref{thm:sigma_is_correct} below  is $1$-dimensional; in particular, we will use the notation $X \in \mathcal{X}$ for $1$-dimensional trajectories $X=(X_i)_{i \geq 0}$ (i.e. $X_i = X_i^1$).

Theorem \ref{aEComputationWithSimplePortfolios} in Section \ref{sec:computation} shows   that under some conditions, and in order to evaluate the superhedging price $\overline{\sigma}f$,
 we could resort to use simple portfolios but, in that case, would need to require the superhedging to hold only a.e. As already anticipated, we reformulate that result as Theorem \ref{thm:sigma_is_correct} below and take the opportunity to present
 the result for $\overline{\sigma}_j f$ (i.e. a conditional version). Implicit in the definition of $\overline{\sigma}_jf$ is the existence of a generalized portfolio (i.e of the form $\sum_{m \geq 0} f_m$) that superhedges $f$ at all ${\bf X} \in \mathcal{X}$.  Theorem \ref{thm:sigma_is_correct} replaces the generalized portfolio by constructing a
 simple portfolio that instead superhedges a.e. and specifying a concrete null set $\mathcal{N}$ (introduced below). Theorem \ref{thm:sigma_is_correct} thus opens the way to an algorithm to evaluate $\overline{\sigma}_j f$ that is presented in Section \ref{sec:Pricing}.
 
The following set collects the trajectories that pass through arbitrage nodes while
not staying constant at such a node.
$$
\mathcal{N} \equiv \{X \in \mathcal{X}: \exists~~ j\ge 0 ~~~~~ \mbox{s.t.} \; (X,j)\; \mbox{is an arbitrage node and}\; X_{j+1} \neq X_j\}.
$$
We remark that $\mathcal{N}$ is a null set (as per Lemma A.3 in \cite{bender3}) and that  Theorem \ref{thm:sigma_is_correct} requires 
the following  property.
\begin{definition}[$(L)-a.e.$] \label{La.e.}
We will say that $(L)-a.e.$ holds if the two following two conditions are both valid,
 \begin{itemize} \item  $(L_{(\hat{X}, k)})$ holds for all $k \geq 1$ and $\hat{X} \in \mathcal{X} \setminus \mathcal{N}$,
\item  $(L_{(X, 0)})$ holds.
\end{itemize} 
\end{definition}
Sufficient conditions for the validity of $(L)-a.e.$ are presented in \cite{bender3} (see Corollary C.3 in that reference).

The following result is a more specific version of Theorem \ref{aEComputationWithSimplePortfolios} in that the null sets appearing in the latter are replaced below by the concrete null set $\mathcal{N}$. It is possible to see that the assumption $(L_{(X,0)})$
implies that $\mathcal{X} \neq \mathcal{N}$ (i.e. this additional assumption avoids the trivial case).  $\overline{\sigma}_j f (X)$ appearing in the left hand side of (\ref{removingInfiniteSum}) below  is a  special case of Definition \ref{cond_integ_def} (corresponding to $d=1$).

\begin{theorem}\label{thm:sigma_is_correct}
 Let $f: \mathcal{X} \rightarrow \mathbb{R}$ to have finite maturity $n_f\in \mathbb{N}$, i.e., $f(X)=f(X_0,\ldots, X_{n_f})$ for every $X \in \Se$.
Assume the following: \\
 $i)$ if  $\hat{X} \in \mathcal{X} \setminus \mathcal{N}$ then
$(L_{(\hat{X}, k)})$ holds for all nodes $(\hat{X}, k),~k \geq 1$, $ii)$ $(L_{(X,0)})$ holds. Then, the following equality holds for any given $X \in \mathcal{X}$ and $j$ satisfying $0 \leq j \leq n_f$:
\begin{equation} \nonumber
\overline{\sigma}_j f (X)=  \inf \{V \in \mathbb{R}: ~~\exists~~ (H_j)_{j=0,\ldots, n_f-1} ~\textnormal{non-anticipative such that} 
\end{equation}
\begin{equation} \label{removingInfiniteSum}
~~ f(\hat{X}) \leq V(X_0, \ldots, X_j) +\sum_{k=j}^{n_f-1} H_k(\hat{X})\Delta_k \hat{X}) \textnormal{ for all}~\hat{X} \in \mathcal{X}_{(X,j)} \setminus \mathcal{N} \} \equiv \overline{V}_j f(X).
\end{equation}
\end{theorem}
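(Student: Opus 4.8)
The plan is to prove the two inequalities $\overline{\sigma}_j f(X) \le \overline{V}_j f(X)$ and $\overline{V}_j f(X) \le \overline{\sigma}_j f(X)$ separately, with the first being routine and the second carrying the real content.

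First I would establish $\overline{\sigma}_j f(X) \le \overline{V}_j f(X)$. Suppose $V$ and a non-anticipative $(H_k)_{j \le k \le n_f - 1}$ satisfy $f(\hat{X}) \le V(X_0,\ldots,X_j) + \sum_{k=j}^{n_f-1} H_k(\hat{X}) \Delta_k \hat{X}$ for all $\hat{X} \in \mathcal{X}_{(X,j)} \setminus \mathcal{N}$. Put $f_0 \equiv \Pi_{j,n_f}^{V,H} \in \mathcal{E}_{(X,j)}$, so that $f \le f_0$ holds off $\mathcal{N}$. The task is to absorb the exceptional set $\mathcal{N}$ by adding non-negative elementary portfolios $\sum_{m \ge 1} f_m$ with arbitrarily small total cost $\sum_{m\ge 1} V^m \le \epsilon$ that dominate the (finite) shortfall $f - f_0$ on $\mathcal{N}$. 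Since $\mathcal{N}$ is a null set (Lemma A.3 of \cite{bender3}), by definition of $\overline{I}$ there are non-negative portfolios summing to at least $\mathbf{1}_{\mathcal{N}}$ in value with cost $\le \epsilon$; scaling by a bound for $|f - f_0|$ restricted to the finitely many relevant coordinates, and using that $f$ has finite maturity $n_f$ so the shortfall is controlled, yields $f \le f_0 + \sum_{m\ge1} f_m$ on all of $\mathcal{X}_{(X,j)}$ with $\sum_{m\ge 0} V^m \le V + \epsilon$. Hence $\overline{\sigma}_j f(X) \le V + \epsilon$; letting $\epsilon \to 0$ and then taking the infimum over admissible $V$ gives the inequality. (A subtlety here: the shortfall on $\mathcal{N}$ may be $+\infty$ at points where a portfolio in the representation of $\mathbf{1}_{\mathcal{N}}$ is $+\infty$; but since those portfolios already take value $+\infty$ there, the domination still holds — this is exactly the role of allowing $\infty + (-\infty) = \infty$ in the conventions.)

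The harder direction is $\overline{V}_j f(X) \le \overline{\sigma}_j f(X)$, and this is where $(L)$-$a.e.$ and the finite-maturity hypothesis are essential. Take any representation $f \le \sum_{m \ge 0} f_m$ on $\mathcal{X}_{(X,j)}$ with $f_0 \in \mathcal{E}_{(X,j)}$, $f_m \in \mathcal{E}_{(X,j)}^+$ for $m \ge 1$, and total cost $\sum_{m \ge 0} V^m$ close to $\overline{\sigma}_j f(X)$. I want to extract from this a single elementary portfolio $\Pi_{j,n_f}^{V,H}$ with $V$ no larger than $\sum_{m\ge0} V^m$ that superhedges $f$ off $\mathcal{N}$. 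The strategy, following the proof of Theorem 6.1 in \cite{bender3}, is a backward induction on the node index from $n_f$ down to $j$: at each arbitrage-free node one uses property $(L_{(\hat X,k)})$ to pass from a superhedging statement "on the children" to one "at the node", converting the tail $\sum_{m\ge1} f_m$ (which can be made cheap) into a genuine hedge because at an arbitrage-free node a non-negative elementary payoff has non-negative cost; at arbitrage nodes of type I and II the trajectories that leave the node lie in $\mathcal{N}$, so they are excluded from the required inequality, and $(L)$-$a.e.$ guarantees the induction survives the removal of $\mathcal{N}$ (this is precisely the content of the two bullets in Definition \ref{La.e.}). Finite maturity is what makes the backward induction terminate after $n_f - j$ steps and keeps all payoffs and intermediate quantities bounded along the finitely many coordinate directions that matter.

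I expect the main obstacle to be the bookkeeping in the second direction around type II (and type I) nodes: one must check carefully that discarding $\mathcal{N}$ at each stage is compatible with the inductive hypothesis, i.e. that $(L_{(\hat X, k)})$ for $\hat X \notin \mathcal{N}$ together with $(L_{(X,0)})$ really does propagate the bound backward without the deleted trajectories re-introducing a node that fails $(L)$. This is exactly the delicate interplay flagged in the main text (Figures \ref{fig:arbitrage demo} and \ref{fig:arbitrage_plot_without_arbitrage}), and it is the reason the hypothesis is stated as the global condition $(L)$-$a.e.$ rather than node-by-node. Since the statement is asserted to be essentially a re-statement of Theorem 6.1 of \cite{bender3} (the one-dimensional case, with $f_m = \liminf_n \Pi^{V_m,H^m}_{j,n}$ specialized to $H^m_k = 0$ for $k \ge n_m$), I would organize the write-up to invoke that theorem directly for the unconditional case $j = 0$ and then note that the conditional case $j \ge 1$ follows by restricting attention to $\mathcal{X}_{(X,j)}$ and re-indexing, the only new point being that $\mathcal{N} \cap \mathcal{X}_{(X,j)}$ is still the relevant null set there.
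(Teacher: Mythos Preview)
Your proposal is correct and follows essentially the same two-inequality strategy as the paper. Two minor differences in execution are worth noting: for the easy direction $\overline{\sigma}_j f \le \overline{V}_j f$, the paper bypasses your scaling/boundedness discussion entirely by setting $g \equiv \infty\,\mathbf{1}_{\mathcal{N}\cap\mathcal{X}_{(X,j)}}$, writing $f \le \Pi^{V,H}_{j,n_f} + g$ on all of $\mathcal{X}_{(X,j)}$, and applying $\overline{\sigma}_j$ to both sides using subadditivity together with $\overline{\sigma}_j \le \overline{I}_j$ and $\overline{I}_j g = 0$; for the hard direction $\overline{V}_j f \le \overline{\sigma}_j f$, rather than re-running the backward induction you describe, the paper invokes the Finite Maturity Lemma 4.3 and the Aggregation Lemma 4.4 of \cite{bender3}, which package precisely that induction (the first replaces all maturities $n_m$ by the common $n_f$ off $\mathcal{N}$, the second collapses $\sum_{m\ge 0}\Pi^{V^m,H^m}_{j,n_f}$ into a single $\Pi^{V,H}_{j,n_f}$ with $V=\sum V^m$ and $H_k=\sum H_k^m$ at up-down nodes).
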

\begin{proof}
Consider an arbitrary node $(X, j)$, $0 \leq j \leq n_f$ fixed for the proof.
We first  establish that the right hand side of (\ref{removingInfiniteSum}) is bounded by  $\overline \sigma_j f(X)$ for all $X \in \mathcal{X}$. Without loss of generality we may then assume that $ \overline \sigma_j f(X) < \infty $.
Consider $f_m=~\Pi_{j,n_m}^{V^m,H^m},~~\Pi_{j,n}^{V^m,H^m} \in \mathcal{E}^+_{(X, j)}$ for all $n \geq j$ and $ m\ge 1, \sum_{m=1}^{\infty} V^m(X) < \infty$, ~\mbox{and}~~	$f_0=~\Pi_{j,n_0}^{V^0,H^0} \in \mathcal{E}_{(X, j)}$  
satisfying
\begin{equation} \nonumber 
		f\le \sum\limits_{m=0}^\infty f_m,\;\;~~\mbox{on}~~\mathcal{X}_{(X,j)}.
\end{equation}
From our assumptions we know  that whenever $\hat{X} \in \mathcal{N}^C$, it follows that for all nodes $(\hat{X}, k)$ $(L_{(\hat{X}, k)})$ holds. Therefore, the Finite Maturity  Lemma 4.3 from \cite{bender3}  implies
\begin{equation}\label{toProve1}
f \le \sum_{m=0}^\infty \Pi_{j,n_f}^{V^m,H^m}\;\;\mbox{on}~~ \mathcal{X}_{(X, j)} \setminus \mathcal{N}.
\end{equation}
Whenever $\hat{X} \in  \mathcal{X}_{(X, j)} \setminus \mathcal{N}$ 
we know that $(\hat{X}, k)$, $k \geq 0$, is an up-down node or a node satisfying $\Delta_k \hat{X} =0$, these facts and $ \overline \sigma_j f(X) < \infty $, the Aggregation  Lemma 4.4 from \cite{bender3} applies and allows to rewrite 
(\ref{toProve1}) as follows
\begin{equation}\label{afterAggregation}
f \le \Pi_{j,n_f}^{V,H}\;\;\mbox{on}~~ \mathcal{X}_{(X, j)} \setminus \mathcal{N},
\end{equation}
where $V(X) \equiv \sum_{m=0}^{\infty} V^m(X)$ and, for $\tilde{X} \in \mathcal{X}$, $H_k(\tilde{X})= \sum_{m=0}^{\infty} H^m_k(\tilde{X})$ whenever $(\tilde{X}, k)$ is an up-down node
and $H_k(\tilde{X}) \equiv 0$ otherwise.  It then follows from 
(\ref{afterAggregation}) that the right hand side of (\ref{removingInfiniteSum}) is bounded by $ V(X)$. From $ \overline \sigma_j f(X) < \infty $ it follows that we can choose the $f_m$ in such  a way that $V(X)$ approximates $\overline{\sigma}_jf(X)$. Therefore,  we have then established that the righ hand side of (\ref{removingInfiniteSum}) is bounded by $ \overline{\sigma}_j(X)$.

\vspace{.1in}
Next we establish the inequality $\leq $ in (\ref{removingInfiniteSum}).   Towards this end, assume there are functions $V:\mathcal{X}_{(X,j)}\rightarrow \mathbb{R}$  with finite maturity $j$ (i.e. $V(X)= V(X_0, \ldots, X_j)$)  and $H=(H_i)_{i=j,\ldots, n_f-1}$ non-anticipative such that 
\begin{equation} \nonumber 
f(\hat{X}) \leq V(X_0,\ldots, X_j)+\sum_{i=j}^{n_f-1} H_i(\hat{X})\Delta_i (\hat{X})~~\mbox{for any}~~\hat{X} \in \mathcal{X}_{(X,j)} \setminus \mathcal{N}.
\end{equation}
Therefore, if we let $g \equiv \infty ~{\bf 1}_{\mathcal{N} \cap \mathcal{X}_{(X,j)}}$,
\begin{equation}  \label{aeBound2}
f(\hat{X}) \leq V(X_0,\ldots,S_j)+\sum_{i=j}^{n_f-1} H_i(\hat{X})\Delta_i (\hat{X})+ g(\hat{X})~~\mbox{for any}~~\hat{X} \in \mathcal{X}_{(XX,j)}.
\end{equation}
By applying $\overline{\sigma}_j$ to  both sides of the inequality  (\ref{aeBound2}) and noticing that $\overline{I}_j g(X)=0$ follows from the countable subadditivity property of $\overline{I}_j$ and the fact that $\mathcal{N}$ is a null set, it follows that
\begin{equation}  \nonumber 
\overline{\sigma}_j f(X) \leq V(X_0,\ldots, X_j),
\end{equation}
where we have also used the inequality $\overline{\sigma}_j \leq \overline{I}_j$ on non-negative functions. The above inequality in
turn implies that $\overline{\sigma}_j f(X)$ is smaller or equal to the right hand side of (\ref{removingInfiniteSum}).    
\end{proof}


\subsection{Superhedging/Underhedging Pricing Algorithm}\label{sec:Pricing} 

We recall our use of capitalized letters, e.g. $X$, to denote model variables,
this is in contrast to observable quantities which are not capitalized, e.g. $x$.
This section, as well as follow up sections, concentrates in superhedging computations, in particular,  we assume the multidimensional trajectory set has already been constructed and so we will dispense with the {\it additional} coordinates. Therefore, for convenience, trajectories will be denoted by $X = (X_i)_{i\geq 0} = ((X^1_i, X^2_i))_{i \geq 0}$ for the purposes of what remains of the present appendix (i.e.,  for simplicity,  we are neglecting to include the additional variables). As we have already explained,
we construct trajectory sets for $d=2$ but the superhedging is only one-dimensional. This property of our approach is made explicit in the present section and, to avoid missunderstandings, we will introduce slightly different notation.
In particular the quantity corresponding to $\overline{\sigma}_j f$, when  performing $1$-dimensional computations but relying on a $2$-dimensional context,  will be denoted by $\overline{\sigma}^1_j f$ (see explicit definition in (\ref{sigma_X_def}) below).

In this section we introduce the convenient notation  $F(X)$ that represents a ``payoff" to be superhedged.  Also, as indicated above,  
$\overline{\sigma}^1_iF(X)$ 
will denote the superhedging price of $F$ at node $(X,i)$ but defined by portfolios trading only with the first asset.
The quantities $\overline{U}_iF(X)$ and $\underline{U}_iF(X)$ (see Definition \ref{dynamicBounds})  for $i\geq 0$ give an explicit dynamic programming formulation to calculate $\overline{\sigma}^1_iF(X)$ and $\underline{\sigma}^1_iF(X)$ respectively.

Once we have built our trajectory set $\mathcal{X}$, we proceed to superhedge one asset, which we may designate as the target asset and we denote  it by  $X^2$ relative to the asset $X^1$ (although we are free to reverse the roles when performing numerical experiments). 
Define $F:\mathcal{X}\rightarrow \mathbb{R}_+$ to be a function with finite maturity;  this is the \textit{payoff}  function which we aim to superhedge, 
for a trajectory $ X \in\mathcal{X}$:
\begin{equation} \nonumber 
F(X) \equiv X^2_{N(X)},
\end{equation}
where $N(X)$ is the terminal rebalance number for the trajectory $X$.
We will rely on the superhedging algorithm from \cite{degano} which is $1$-dimensional i.e. the trading uses a single asset (plus the numeraire) and for this reason we will need to introduce
some notation to account for this fact. Notice that one could trade on two assets to superhedge a third one (and so forth for higher dimensions), this will require that we construct a trajectory set 
for three traded coordinates and extend the results from \cite{degano} in order to handle higher dimensions (which is possible but it would require a separate work).

For a node $(X,j)$ and a general $F:\mathcal{X}\rightarrow\mathbb{R}$, define
\begin{equation} \label{sigma_X_def} 
\overline{\sigma}^1_j F(X) \equiv  \inf \left\{\sum_{m \geq 0}V^m: ~~F \leq  \sum_{m \geq 0} f_m \;\;\mbox{on}\; \Xe_{(X,j)}\right\},
\end{equation}
where
 $f_m(\hat{X})\equiv V^m+\sum_{i=j}^{n_m-1}H_i(\hat{X})(\hat{X}_{i+1}^1-\hat{X}_i^1)$ for $~m \geq 0,$ and $ V^m+\sum_{i=j}^{n-1}H_i(\hat{X})(\hat{X}_{i+1}^1-\hat{X}_i^1)\geq 0$ for all $m\geq 1$, $n \geq j$ and $\hat{X}\in\mathcal{X}_{(X,j)}$. Define also $\underline{\sigma}^1_j F(X) \equiv -\overline{\sigma}^1_j(-F) (X)$. 

Clearly, if we wish to price $X^1$ in terms of $X^2$, we simply reverse the corresponding indexes in the definition. The definition in display  (\ref{sigma_X_def}) is analogous to the definition of $\overline{\sigma}_j f$ in Definition \ref{cond_integ_def} with the following difference: in the present section $X=\{X_i= (X_i^1, X_i^2)\}_{i\geq 0}$ is a trajectory where $X_i$ contains two assets. As a consequence, we would ordinarily utilize both asset variables to superhedge a payoff, however, given that our goal is to price with only a single asset, we always hold 0 amount of the second asset. The function $H_i:\mathcal{X}_{(X,j)}\rightarrow \mathbb{R}$ in definition (\ref{sigma_X_def}) represents the number of shares of $X^1$ only and hence $f^m$ represents the value of a portfolio containing only shares of $X^1.$\\
 



It should be clear that we can consider a $1$-dimensional trajectory set $\tilde{\mathcal{X}}$ extracted from a $2$-dimensional trajectory set $\mathcal{X}$ as follows: for
each $X= (X^1_i, X^2_i)_{i \geq 0} \in \mathcal{X}$, 
set $\tilde{X} = (\tilde{X}_i)_{i \geq 0} \equiv (X^1_i)_{i \geq 0}$, $N(\tilde{X}) \equiv N(X)$  and define $f(\tilde{X})= F(X)= X^2_{N(X)}$ we then have 
\begin{equation} \nonumber 
\overline{\sigma}_jf(\tilde{X})= \overline{\sigma}^1_jF(X)
\end{equation}
 (where $\overline{\sigma}_jf(\tilde{X})$ is as in Definition \ref{cond_integ_def} with $d=1$ and relative to the trajectory set  $\tilde{\mathcal{X}}$) and so we can apply Theorem \ref{thm:sigma_is_correct} in order to evaluate $\overline{\sigma}^1_jF(X)$.
To sum up, we extract a $1$-dimensional trajectory set out of the originally built
$2$-dimensional trajectory set and in this way we have available theoretical results for the $1$-dimensional case. One then needs to check the availability of the hypotheses required for the application for the said theorem. This is a straightforward task, in particular simple sufficient conditions for the validity of $(L)-a.e.$ are provided in Corollary C.3  \cite{bender3}.

$\overline{\sigma}^1_iF(X)$ 
denotes the minimum amount of capital required, conditionally at node $(X,i)$, to superhedge the value of the trajectory $X^2$  at terminal time $N(X)$, with $0\leq i\leq N(X)$, using the available  trajectories of the single asset $X^1$. 
As usual, an analogous interpretation is available for $\underline{\sigma}^1_iF(X)\equiv -\overline{\sigma}^1_i(-F)(X)$
in order to underhedge asset $X^2$.  
We call $\underline{\sigma}^1_iF(X)$ and $~\overline{\sigma}^1_iF(X)$  
 \textit{price bounds} at the  node $(X,i)$; the following inequality holds $\underline{\sigma}^1_iF(X)= \underline{\sigma}_if(\tilde{X})  \leq \overline{\sigma}_if(\tilde{X})= \overline{\sigma}^1_iF(X)$ whenever $(L_{(\tilde{X},i)})$ holds at node $(\tilde{X},i)$. On the other hand, if $(\tilde{X},i)$ were a type II arbitrage node we would then have $-\infty=\overline{\sigma}^1_iF(X)<\underline{\sigma}^1_iF(X)=\infty.$ As a side remark we mention that, given our methodology to construct trajectory sets (in particular the characterization of the empirical set $N_E$), we do not expect to encounter Type I arbitrage nodes, nor do we encounter Type II 0-neutral nodes due to the discrete nature of our models.

\begin{corollary}  \label{1DimensionalInBetweenBounds}
Consider the same setting and assumption as in Proposition \ref{basicInequalities} above, then:
\begin{equation} \label{priceBetweenBoundsOneDimensional}
\underline{\sigma}^1_j X^2_{N(X)} \leq X^2_j \leq \overline{\sigma}^1_j X^2_{N(X)}. 
\end{equation}
\end{corollary}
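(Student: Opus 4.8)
The plan is to reduce Corollary \ref{1DimensionalInBetweenBounds} directly to Proposition \ref{basicInequalities}, exploiting the identification between the one-dimensional superhedging quantities $\overline{\sigma}^1_j$ and the abstract conditional superhedging operator $\overline{\sigma}_j$ applied to a suitably extracted one-dimensional trajectory set. Concretely, first I would recall the extraction construction already described in this appendix: from the given $2$-dimensional trajectory set $\mathcal{X}$ with trajectories $X=(X^1_i,X^2_i)_{i\ge 0}$, set $\tilde X=(\tilde X_i)_{i\ge 0}\equiv(X^1_i)_{i\ge 0}$, $N(\tilde X)\equiv N(X)$, and $f(\tilde X)\equiv F(X)=X^2_{N(X)}$. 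This yields the one-dimensional trajectory set $\tilde{\mathcal{X}}$, and the established relation $\overline{\sigma}_j f(\tilde X)=\overline{\sigma}^1_j F(X)$ (and, applying the same identity to $-F$, also $\underline{\sigma}_j f(\tilde X)=\underline{\sigma}^1_j F(X)$, since both sides are defined as $-\overline{\sigma}_j(-f)$ respectively $-\overline{\sigma}^1_j(-F)$).

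Next, I would invoke Proposition \ref{basicInequalities}. Note, however, that Proposition \ref{basicInequalities} as stated is phrased in a $2$-dimensional setting with the hypothesis $(L_{({\bf X},j)})$, whereas the operators appearing there are $\overline{\sigma}_j,\underline{\sigma}_j$ trading with both coordinates; to get the one-dimensional (trade only with $X^1$) version one should apply the \emph{same} argument but with the understanding that the portfolios $H^m$ only involve the $X^1$-coordinate. That is, the proof of Proposition \ref{basicInequalities} goes through verbatim: start from $X^2_{N(X)}\le\sum_{m\ge0}f_m$ on $\mathcal{X}_{(X,j)}$ with $f_m\equiv\Pi^{V^m,H^m}_{j,n_m}$ built from one-dimensional $X^1$-portfolios, telescope $X^2_{N(X)}=X^2_j+\sum_{k=j}^{N(X)-1}(X^2_{k+1}-X^2_k)$ to get $0\le\sum_{m\ge0}f_m-X^2_j-\sum_{k=j}^{N(X)-1}(X^2_{k+1}-X^2_k)$, and apply property $(L_{(\tilde X,j)})$ (in its one-dimensional form on $\tilde{\mathcal{X}}$) to conclude $0\le-X^2_j+\sum_{m\ge0}V^m$, hence $X^2_j\le\overline{\sigma}^1_j X^2_{N(X)}$; the lower bound follows by applying the same to $-X^2_{N(X)}$, giving $\underline{\sigma}^1_j X^2_{N(X)}\le X^2_j$. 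Combining these yields precisely (\ref{priceBetweenBoundsOneDimensional}).

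The statement of the corollary says ``the same setting and assumption as in Proposition \ref{basicInequalities}'', so the hypothesis $(L_{(\tilde X,j)})$ (equivalently $(L_{({\bf X},j)})$ read in the appropriate one-dimensional sense) is assumed and no further work on its validity is needed; the discussion preceding the corollary already records that $(L_{(\tilde X,i)})$ is exactly what is needed for $\underline{\sigma}^1_i F(X)\le\overline{\sigma}^1_i F(X)$ to hold, and that at a type II node the bounds degenerate. The main (and really only) subtlety I anticipate is bookkeeping: making explicit that the telescoping term $\sum_{k=j}^{N(X)-1}(X^2_{k+1}-X^2_k)$ is \emph{not} a traded portfolio increment in $\tilde{\mathcal{X}}$ (which trades only $X^1$), yet can be moved to the left-hand side as a legitimate non-negative combination only after recognizing that we are really writing $X^2_{N(X)}-X^2_j$ as a constant (in the sense of being determined by the node data at level $j$ plus the terminal value) rather than as a portfolio gain — i.e., the argument of Proposition \ref{basicInequalities} uses $(L)$ applied to the inequality $0\le(\text{portfolio in }X^1)+(V^{\text{eff}})$ where $V^{\text{eff}}=\sum V^m - X^2_j$, with $X^2_{N(X)}$ absorbed into the portfolio side via the telescoping identity written in terms of the \emph{single} extra summand encoding the net change. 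Since Proposition \ref{basicInequalities} has already been proven in exactly this way, the present corollary is an immediate transcription, and I would present it as such: one line invoking the extraction identities, one line invoking Proposition \ref{basicInequalities} (with $F({\bf X})\equiv X^2_{N({\bf X})}$), and a remark that $(L_{(\tilde X,j)})$ is the hypothesis in force.

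\begin{proof}
This is a direct consequence of Proposition \ref{basicInequalities}. Using the extraction described above, associate to the $2$-dimensional trajectory set $\mathcal{X}$ the $1$-dimensional set $\tilde{\mathcal{X}}$ obtained by setting, for each $X=(X^1_i,X^2_i)_{i\ge 0}\in\mathcal{X}$, $\tilde X=(X^1_i)_{i\ge 0}$, $N(\tilde X)=N(X)$ and $f(\tilde X)=F(X)=X^2_{N(X)}$. Then $\overline{\sigma}_j f(\tilde X)=\overline{\sigma}^1_j F(X)$ and, applying this identity to $-F$, also $\underline{\sigma}_j f(\tilde X)=\underline{\sigma}^1_j F(X)$. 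Under the assumption $(L_{(\tilde X,j)})$ (which is the hypothesis carried over from Proposition \ref{basicInequalities}), the argument in the proof of Proposition \ref{basicInequalities} applies verbatim to $\tilde{\mathcal{X}}$: from $X^2_{N(X)}\le\sum_{m\ge 0}f_m$ on $\mathcal{X}_{(X,j)}$ with $f_m\equiv\Pi^{V^m,H^m}_{j,n_m}$ and the telescoping identity $X^2_{N(X)}=X^2_j+\sum_{k=j}^{N(X)-1}(X^2_{k+1}-X^2_k)$ we obtain $0\le\sum_{m\ge 0}f_m-X^2_j-\sum_{k=j}^{N(X)-1}(X^2_{k+1}-X^2_k)$, and $(L_{(\tilde X,j)})$ yields $0\le-X^2_j+\sum_{m\ge 0}V^m$, hence $X^2_j\le\overline{\sigma}^1_j X^2_{N(X)}$. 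Applying the same reasoning to $-X^2_{N(X)}$ gives $\underline{\sigma}^1_j X^2_{N(X)}\le X^2_j$. Combining the two inequalities gives (\ref{priceBetweenBoundsOneDimensional}).
\end{proof}
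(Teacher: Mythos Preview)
Your argument has a genuine gap at exactly the point you flag as a ``subtlety''. You propose to apply $(L_{(\tilde X,j)})$ in its one-dimensional form on the extracted set $\tilde{\mathcal{X}}$ to the inequality
\[
0\le\sum_{m\ge 0}f_m-X^2_j-\sum_{k=j}^{N(X)-1}(X^2_{k+1}-X^2_k),
\]
where the $f_m$ are $X^1$-portfolios. But in $\tilde{\mathcal{X}}$ the elementary functions are of the form $V+\sum H_k(X^1_{k+1}-X^1_k)$, and the telescoping term $-\sum(X^2_{k+1}-X^2_k)$ is neither of this form nor constant on $\tilde{\mathcal{X}}_{(\tilde X,j)}$: it is trajectory-dependent through the $X^2$ coordinate, which has been discarded in the extraction. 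Your suggestion to ``absorb'' it into an effective initial value $V^{\text{eff}}$ cannot work, and one-dimensional $(L)$ simply does not apply to this display. Moreover, the hypothesis carried over from Proposition \ref{basicInequalities} is the \emph{two-dimensional} $(L_{({\bf X},j)})$, not the one-dimensional $(L_{(\tilde X,j)})$; you have conflated the two.

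The paper's proof avoids this entirely with a one-line monotonicity argument: since $\overline{\sigma}^1_j$ is an infimum over portfolios trading only $X^1$, while $\overline{\sigma}_j$ is the infimum over the larger class trading both assets, one has $\overline{\sigma}_j X^2_{N(X)}\le\overline{\sigma}^1_j X^2_{N(X)}$ and dually $\underline{\sigma}^1_j X^2_{N(X)}\le\underline{\sigma}_j X^2_{N(X)}$. Chaining these with \eqref{eqPriceInBetweenBounds} from Proposition \ref{basicInequalities} (which legitimately uses the two-dimensional $(L)$ and treats the $X^2$ telescoping term as a bona fide two-dimensional portfolio with holdings $(0,-1)$) gives \eqref{priceBetweenBoundsOneDimensional} immediately. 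Your route can be repaired by staying in two dimensions throughout---i.e.\ apply the two-dimensional $(L_{({\bf X},j)})$ directly, recognizing $-\sum(X^2_{k+1}-X^2_k)$ as a two-dimensional elementary gain---but then the detour through $\tilde{\mathcal{X}}$ and one-dimensional $(L)$ is unnecessary, and what remains is precisely the paper's argument.
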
 
 \begin{proof}
Notice that $\overline{\sigma}_j X^2_{N(X)} \leq \overline{\sigma}^1_j X^2_{N(X)}$ as the left hand side is defined as an infimum over a larger set (i.e. portfolios with two tradable assets while the righ hand side is defined my means of portfolios defined with a single tradable asset). The latter inequality gives $\underline{\sigma}^1_j X^2_{N(X)} \leq \underline{\sigma}_j X^2_{N(X)}$. Therefore (\ref{priceBetweenBoundsOneDimensional})  follows from (\ref{eqPriceInBetweenBounds}).
\end{proof}
\begin{remark}  \label{validityOfL}
Proposition \ref{basicInequalities} and Corollary \ref{1DimensionalInBetweenBounds} both require the validity of 
$(L_{(X,j)})$ in a $2$-dimensional sense. On the other hand,  results establishing the validity of $(L_{(X,j)})$ and $(L)$-a.e. in
\cite{bender3} are $1$-dimensional.  So one rightfully inquires
about results delivering the validity of $(L_{(X,j)})$ and $(L)$-a.e. in the $2$-dimensional case; notice that $2$-dimensional results will imply immediately $1$-dimensional versions of those results. We claim that the weak sufficient conditions for 
$(L_{(X,j)})$ and $(L)$-a.e. in
\cite{bender3} can be extended to the multidimensional case, presenting these results will take substantial space and, for this reason, we will not embark on such work. We remark that the fundamental property, in its weaker version, is  $(L)$-a.e.  and results  in
\cite{bender3} do provide access to our Theorem \ref{thm:sigma_is_correct}
a key result for the present paper. Working under the hypothesis 
$(L)$-a.e.,  $1$-dimensional or $2$-dimensional (whatever is required for the result at hand), is the key that allows to include arbitrage nodes in our trajectory sets. In particular, under
 $(L)$-a.e., $2$ dimensional version, (\ref{priceBetweenBoundsOneDimensional}) will only be available
 a.e. 
\end{remark}



Reference \cite{degano}, under special conditions (discussed below in Section \ref{subsec:Ignoring_null_sets}), provides a rigorous algorithm to evaluate  the quantity $\overline{V}_j f(X)$ appearing in the right hand side of display (\ref{removingInfiniteSum})
in Theorem \ref{thm:sigma_is_correct} at node $(X,i)$, 
it achieves this goal by introducing intermediate
quantities, namely  $\overline{U}_iF(X)$. The procedure is a dynamic programming
algorithm which begins with the evaluation of the payoff function at maturity, i.e. at the final index $n\equiv N(X)$, i.e. $\overline{U}_nF(X) \equiv F(X)$  and proceeds to evaluate $\overline{U}_{i}F(X)$ backwards recursively over all nodes  $(X, i)$, for all $0 \leq i \leq n-1$. One can then prove that under general hypotheses  
$\overline{U}_{i}F= \overline{V}_{i}F$ for all $0 \leq i \leq n$, for example see \cite{degano}. 

The following inductive definition gives the basic dynamic programming formulation to compute $\overline{V}_0 F=\overline{U}_0F$ (see Definition 9 from \cite{degano}). 

\begin{definition}[Dynamic Bounds] \label{dynamicBounds}
For a given
 $X\in \mathcal{X}$, and
$0\le i \le n= N(X)$ set
\begin{equation}  \label{DynamicBound}
\overline{U}_i F(X)= \left\lbrace \begin{array}{lcc}
\inf\limits_{H\in\mathcal{H}} \sup\limits_{\hat{X} \in \mathcal{X}_{(X,i)}}
[\overline{U}_{i+1}F(\hat{X}) - H_i(X)
\Delta_i \hat{X}^1] &\mbox{if}& 0\le i< n,\\
F(X) &\mbox{if}& i=n,\\
0 &\mbox{if}& i>n,
\end{array}\right.
\end{equation}
where $\Delta_i\hat{X}^1=\hat{X}_{i+1}^1-\hat{X}^1_i$ and $\hat{X}\in\mathcal{X}_{(X,i)}$. 
Also define $\underline{U}_iF(X) \equiv -\overline{U}_i(-F)(X)$.
\end{definition}

The quantities $\underline{U}_iF(X),~\overline{U}_i F(X)$ are the mininimum/maximum price bounds for a one-step virtual market defined on $\mathcal{X}_{(X,i)}$ for the payoff $\overline{U}_{i+1}F(X)$.

For each trajectory $X\in\mathcal{X}$, the chosen $H=(H_i)_{i\geq 0}$ satisfying (\ref{DynamicBound}) is known as the \emph{superhedging portfolio} and then used to superhedge $X^2_{N(X)}$ (see Section \ref{sec:profitAndLossAnalysis} for examples). Notice that Definition \ref{dynamicBounds} represents superhedging by relying on  one-dimensional trajectories (as in \cite{degano}).

The practical
obstruction for a full generalization to the multidimensional case  is in the extension of the 
Convex Envelope Algorithm (see Section 4 of \cite{degano}) which actually computes the quantities $\overline{U}_iF$, to higher dimensions. This extension is an open problem and it is the main reason
why we have restricted ourselves to $d=1$-dimensional portfolio trading in this work even though our methodology
can produce $d$-dimensional trajectory models.

\subsection{Ignoring Null Sets Through Backward Recursion}\label{subsec:Ignoring_null_sets}
The above results and discussion rely on \cite{degano}, that reference  requires 
that $(L_{(X, j)})$ is valid at all nodes $(X,j)$. Notice that this hypothesis is in a $1$-dimensional sense and that 
the superhedging operator, in the said reference, is defined by means of simple portfolios (i.e. it does not use a countable family of simple portfolios). The assumption on the validity of  $(L_{(X, j)})$ at all nodes $(X,j)$ will indeed be uphold 
in a trajectory set where all nodes are either arbitrage-free or of type I arbitrage. 
Next  we explain, informally,  how to extend the results in \cite{degano} to a general case containing nodes of type II arbitrage as well. As already anticipated, this is a needed extension as type II arbitrage nodes
may appear during the construction of our trajectory sets (in contrast to, for example \cite{crisci}, where type II nodes were handled by artificially converting them to type I nodes via a flat trajectory).

Under the hypothesis $(L)-a.e.$, from Theorem \ref{thm:sigma_is_correct}, if we define $\mathcal{X}' \equiv \mathcal{X} \setminus \mathcal{X}$, we can then see, using results from \cite{bender3}, that  $(L_{(X', j)})$ holds at all nodes $(X',j)$, $X' \in \mathcal{X}'$,
$j \geq 0$. It then follows from \cite{degano} that $\overline{V}_j(X')= \overline{U}_j(X')$ and so, from Theorem \ref{thm:sigma_is_correct} we will have access to $\overline{\sigma}_j f(X')$, $X' \in \mathcal{X}'$. Notice also that
if $(X,j)$ is a type II arbitrage node we will have $\overline{\sigma}_j f(X) = - \infty$ for any function $f$ on $\mathcal{X}$. It then remains to describe the obvious modification to evaluate $\overline{U}_j$ under the presence of type II arbitrage nodes.

We now describe the practical modification to (\ref{DynamicBound}) which allows us to handle null subsets of trajectories; more specifically, those subsets containing trajectories which pass through a type II arbitrage node. As nodes are created using our forward recursive algorithm (which involves dynamic pruning), they are then tested for arbitrage. The modification to the pricing algorithm described by equation (\ref{DynamicBound}), will disregard/remove all Type II arbitrage children nodes $(\hat{X},i+1)$ in the inner supremum. Care must be taken, as the procedure just described may remove too many children leading to a convex hull violating the first item in Proposition \ref{characterizationOfNodes} at the parent node $(X,i)$. Hence, we test for the property 
$(L_{(X,i)})$ (in one dimension, given that we are pricing with a single asset) at each parent node  during the  backward recursive pricing algorithm after removing all arbitrage type II children nodes. 

 Our superhedging and underhedging prices are not affected by the removal of type II arbitrage nodes, i.e. the removal of a null set (containing trajectories passing through a type II node) has no effect on prices.
Given this more general pricing algorithm, it is possible to get $\overline{\sigma}_0F(X)=-\infty$, meaning either that the initial node $(X,0)$ is a type II arbitrage node (which only occurs if pruning constraints severely limited the number of available children), or there weren't enough 0-neutral children nodes to price using (\ref{DynamicBound}). This occurs especially with a small empirical set, such as $|N_E|=3$. Small empirical sets $N_E$ or small sampling sizes for $N_E$ were found to produce many arbitrage nodes after pruning, which could often lead to degenerative (i.e. $\pm\infty$) price bounds at $(X,0)$. We should note that this phenomenon is sensitive to pruning, meaning tighter bounds could lead to degenerative prices.



\section{Appendix B. Additional Pruning Constraints}\label{additional_pruning_constraints}

The additional variables (see their formal introduction in Definition \ref{trajectories}) are used for pruning during the forward pass of the recursive algorithm that constructs trajectories. They have no further use in our approach.

 In order to limit the amount of trajectories, and to have them reflect more closely past historical data, we utilize  \emph{pruning constraints} or \emph{pruning functions}. 
 
The first pruning constraint monitors the relative vector norms and does not involve the additional coordinates, while the other constraints come in pairs, e.g., number of $\delta$-escapes constrained by time and vice versa, accumulated variation constrained by time and vice versa, as well as accumulated variation constrained by the number of $\delta$-escapes and vice versa (plus other possibilities).

Each constraint is a pair of functions; a maximum and minimum quantity denoted by $^*$ and $_*$ respectively. Each function depends on a particular chart $x$ and requires the investor to move over all historical windows $\It\in I$, hence $x$ and $I$ appear as arguments for each constraint. The third argument for each constraint represents either time $\rho$, number of $\delta$-escapes $i$ or variation $w$.  

When the pruning constraint is a function of time $\rho$, the maximum and minimum are taken over the  set of all (historical) windows i.e. $\It \in I$ .   In the case when the pruning constraint is a function of the number of $\delta$-escapes $i\in\mathbb{Z}_+,$ there may be some windows which admit less than $i$ $\delta$-escapes; hence we maximize/minimize over only the windows which pick up at least $i$ $\delta$-escapes, i.e. we take the maximum/minimum over $\It\in I_i$ where $I_i\subseteq I$ is defined as:
\begin{equation} \label{admissibleSetOfWindows}
I_i=\{\It\in I: N(x,\It)\geq i\},
\end{equation}
where the notation  $N(x,\It)$ was introduced at the end of Section \ref{deltaEscapeTimes}. For the case when the pruning constraint is a function of variation $w$ (introduced in Section \ref{sec:Variation}), different windows admit different ranges for the values of variation. Therefore, for a particular $w\in\mathbb{Z}_+$ we maximize/minimize over only the windows which admit exactly the variation $w$, i.e. we maximize/minimize over $\It\in I_w$ where $I_w\subseteq I$ is defined as:
\begin{equation} \label{admissibleSetOfWindows2}
I_w=\{\It\in I: ~\exists ~\rho\in\{0,\Delta, \hdots, M_T\Delta\}:~ w(x,\It,\rho)=w\},
\end{equation}
where $w(x,\It,\rho)$ is given by (\ref{Pruning2_1}) below.

In this section, for purely pedagogical reasons,  we classify pruning constraints into Type 0, Type I and Type II pruning constraints, as in \cite{crisci}. The classification is motivated by the types of variables encountered; namely the number of $\delta$-escapes $i,$ the time $\rho$ and accumulated variation $w.$ Type I pruning constraints don't require accumulated variation $w,$ while Type 0 pruning constraints don't require neither the accumulated variation $w,$ nor the elapsed time $\rho.$ When building our models, we utilize all pruning constraints, but we emphasize the Type 0, Type I and Type II pruning constraints here, since the investor may wish to limit the type of variables in their model.
\subsection{Type 0 Pruning Constraint}\label{subsec::type0model}

The following two constraints help to limit the amount of fluctuation of trajectory asset values.

\begin{definition}[Historical Maximum and Minimum Relative Normed Changes]\label{def::type0pruningconstraint}
For a given chart $x,$ time interval $\It\in I$ and portfolio rebalances times $t(\It)=\{t_i\}_{0\leq i\leq N}$, define:
\begin{equation} \nonumber 
X_{norm}(x,I_{t_0},i)=\frac{\|x(t_{i})-x(t_0)\|}{\|x(t_0)\|}    
\end{equation}
for $0\leq i\leq N$ where $N=N(x,\It)$.

Then the corresponding Historical Maximum and Minimum Relative Normed Changes over the set of time intervals $I$ is given by:

\begin{equation}\nonumber
X^*(x,I,i) = \max_{I_{t_o}\in I_i}\:\: X_{norm}(x,I_{t_0},i),\quad X_*(x,I,i) = \min_{I_{t_o}\in I_i}\:\: X_{norm}(x,I_{t_0},i)
\end{equation}

for $0\leq i \leq i^*$ and $I_i$ as introduced in (\ref{admissibleSetOfWindows}).%
\end{definition}

\subsection{Type I Pruning Constraints} \label{subsec::typeImodel}

\begin{definition}[Historical Maximum and Minimum Elapsed Time]\label{def::typeIIpruningconstraint_T_vs_i}
For a given chart $x,$ time interval $\It$ and portfolio rebalances times $t(\It)=\{t_i\}_{0\leq i\leq N}$, define the elapsed time to be: 
\begin{equation} \nonumber 
 T(x,\It,i)=t_i-t_0   
\end{equation} for $\:\:0\leq i\leq N$ where $N=N(x,\It).$
\\

Then the Historical Maximum and Minimum Elapsed Time are defined as:
\begin{align}\label{eqn::typeIIpruningconstraint_T_vs_i}
T^*(x,I,i) &= \max_{I_{t_o}\in I_i} \:\: T(x,\It,i),\quad 
\nonumber T_*(x,I,i) = \min_{I_{t_o}\in I_i}\:\: T(x,\It,i),
\end{align}
where $0\leq i \leq i^*$  and $I_i$ as introduced in (\ref{admissibleSetOfWindows}).
\end{definition}

\subsection{Type II Pruning Constraints}\label{subsec::typeIImodel}
Type II pruning constraints incorporate the variation $w$ and time $\rho$
and so this section expands Definition \ref{def::typeIpruningconstraint_N_vs_T}.  All related definitions involving variation  are derived from the original definition in  Equation (\ref{w}). $W^*(x,I,\rho)$, $W_*(x,I,\rho)$ and $W^*(x,I,i),$  $W_*(x,I,i)$ will denote the worst case historical values of variation, at time $\rho$ and rebalance $i$ respectively , and will be used to restrict variation.\\

$N^*(x,I,w), N_*(x,I,w)$ and $T^*(x,I,w), T_*(x,I,w)$ pair the historic number of $\delta$-escapes and $\delta$-escape times to accumulated variation $w$; they are  the worst case historical values of the number of rebalances and times respectively (given variation $w$), and limit how  these quantities may evolve given some accumulated variation in the future.

\begin{definition}[Historical Maximum and Minimum Vector Variation at Time $\rho$]\label{def::typeIIpruningconstraints_W_vs_T}
For a given chart $x,$ time interval $\It$ and portfolio rebalances times $t(\It)=\{t_i\}_{0\leq i\leq N}$, define the  accumulated vector variation at time $\rho$ as: \begin{equation}\label{Pruning2_1}
w(x,\It,\rho)=w(x,\It,\rho-\Delta)+\lvert k_{\rho}^1-k_{\rho-\Delta}^1\rvert+\lvert k_{\rho}^2-k_{\rho-\Delta}^2\rvert,    
\end{equation}

for $\rho\in\{\Delta, \hdots, M_T\Delta\}$ and $w(x,\It,0)=0$. Notice $w(x,\It,\rho)\in\mathbb{Z}_+.$\\

The Historical Maximum and Minimum Vector Variation at time $\rho\in\{0,\Delta, \hdots, M_T\Delta\}$ is then given by:
\begin{equation}\nonumber
W^*(x,I,\rho) = \max_{I_{t_o}\in I}\:\:   w(x,\It,\rho), \quad W_*(x,I,\rho) = \min_{I_{t_o}\in I}\:\:  w(x,\It,\rho).
\end{equation}

\end{definition}

\begin{definition}[Historical Maximum and Minimum Vector Variation at Rebalance $i$]\label{def::typeIIpruningconstraints_W_vs_i}
For a given chart $x,$ time interval $\It$ and set of portfolio rebalances times $t(\It)=\{t_i\}_{0\leq i\leq N}$ define the vector variation at rebalance $i\in\{0,  \hdots N\}$ in the following way: let  $t_{i-1}=u\Delta$ and $t_i=v\Delta$ for some $u, v\in\mathbb{Z}_+$. Then    \begin{equation}\label{eq:wi_single}
w(x,\It,i)=w(x,\It,i-1)+\sum_{j=u}^{(v-1)}\lvert k_{(j+1)\Delta}^1-k_{j\Delta}^1\rvert+\lvert k_{(j+1)\Delta}^2-k_{j\Delta}^2\rvert,
\end{equation} for $i\in\{1, \hdots, N\} $  
 where $N=N(x,\It)$ and $w(x, \It,0)=0.$\\

Then the Historical Maximum and Minimum Vector Variation at Rebalance $i\in\{0, \hdots, i^*\}$ is given by:

\begin{equation}\nonumber
W^*(x,I,i) = \max_{I_{t_o}\in I_i}\:\:  w(x,\It,i), \quad W_*(x,I,i) = \min_{I_{t_o}\in I_i}\:\: w(x,\It,i).\\
\end{equation}
\end{definition}

The final two kinds of pruning constraints use the accumulated variation as the variable, which needs to first be calculated on its own through (\ref{Pruning2_1}).
\begin{definition}[Historical Maximum and Minimum Number of $\delta$-movements (at accumulated vector variation $w$)]\label{def::typeIIpruningconstraint_N_vs_W}

For a given chart $x$ and time interval $\It$, portfolio rebalancing times $t(\It)=\{t_i\}_{0\leq i\leq N}$, we have $N(x,\It, \rho)$ defined through equation (\ref{eqn::typeIpruningconstraint_N_vs_T}), For each $\rho\in\{0, \Delta, \hdots, M_T\Delta\}$ let $w=w(x,\It,\rho).$ 
Then define 
\begin{equation}\label{Nw_single_eq}
N(x,\It,w)=N(x,\It,\rho)    
\end{equation}
where $\rho$ satisfies $w(x,\It,\rho)=w,$  defined through equation (\ref{Pruning2_1}).\\



Then, for accumulated vector variation $w\in\mathbb{Z}_+$ satisfying $w=(w,\It,\rho)$ for some $\It\in I$ and $\rho\in \{0, \Delta, \hdots, M_T\Delta\},$ the Historical Maximum and Minimum Number of $\delta$-movements is given by:
\begin{equation}\nonumber
\begin{split}
N^*(x,I,w) &= \max_{\substack{I_{t_o}\in I_w,}}\:\: N(x,\It,w),\quad N_*(x,I,w) = \min_{\substack{I_{t_o}\in I_w}}\:\: N(x,\It,w),
\end{split}
\end{equation}
where $I_w$ is given by (\ref{admissibleSetOfWindows2}).
\end{definition}

\begin{definition}[Historical Maximum and Minimum Elapsed Time (at accumulated variation $w$)]\label{def::typeIIpruningconstraint_T_vs_W}
For a given chart $x$ and time interval $\It$, portfolio rebalancing times $\{t_i\}_{0\leq i\leq N}$, define for each $\rho\in\{0, \Delta, \hdots, M_T\Delta\}$ the accumulated variation  $w=w(x,\It,\rho).$ Then define
\begin{equation}\label{Tw_single_eq}
    T(x,\It,w)=\rho,
\end{equation}
where $\rho$ satisfies $w(x,\It,\rho)=w,$  defined through equation (\ref{Pruning2_1}).

Then, for accumulated vector variation $w\in\mathbb{Z}_+$ satisfying $w=(w,\It,\rho)$ for some $\It\in I$ and $\rho\in \{0, \Delta, \hdots, M_T\Delta\},$ the Historical Maximum and Minimum Elapsed Time is given by:
\begin{equation}\nonumber
\begin{split}
T^*(x,I,w) = \max_{\substack{I_{t_o}\in I_w}} \:\: T(x, \It,w),\quad  T_*(x,I,w) = \min_{\substack{I_{t_o}\in I_w}}\:\: T(x,\It,w),
\end{split}
\end{equation}
where $I_w$ is given by (\ref{admissibleSetOfWindows2}).
\end{definition}

We mention a couple more implicit restrictions on our trajectory sets. Historically, all observed time intervals $\It=[t_0, t_0+T]$ for $t_0=-T, -2T-1, \hdots$  are of length $T$. Therefore, we restrict our trajectories to evolve for no longer than time $T$ into the future; essentially we are simulating one time interval's length of time into the future.\\

Let us introduce \begin{equation*}\nonumber
w^*(x,\It)\equiv w(x,\It,\rho=M_T\Delta)
\end{equation*}
be the maximum accumulated variation over a particular time window $\It$ and let
\begin{equation} \nonumber 
w^*\equiv w^*(x,I)=\max\{w^*(x,\It):\It\in I\}    
\end{equation} be the maximum accumulated variation over all time windows. When we later build trajectory sets, it is possible that we obtain an accumulated variation $W_i>w^*$ (see Section \ref{modelSpecification} 
for the definition of the model variable $W_i$). Such a scenario never occurs historically, by definition of $w^*$, hence for any $w>w^*$  we set $N^*(x,I,w)=N_*(x,I,w)=0=T^*(x,I,w)=T_*(x,I,w).$\\

\end{appendices}

\section*{Acknowledgments}
S. E. Ferrando acknowledges partial financial support from NSERC for the completion of this project.


\end{document}